% Schriftgröße, Papiergröße, Nummerierungen von Gleichungen links,
% Überschriften und normaler Text in selber Schrift
\documentclass[fontsize=12pt,paper=a4,leqno,emulatestandardclasses]{scrartcl}

% Seitenlayout
\usepackage[left=2.5cm, right=3.5cm,
            top=2.5cm, bottom=1.5cm,
            includeheadfoot]{geometry}

% Korrekte Ausgabe von Umlauten und Möglichkeit der Eingabe solcher
% direkt über die Tastatur
\usepackage[T1]{fontenc}
\usepackage[utf8]{inputenc}

% Schriftart und Sprache
\usepackage{lmodern}
\usepackage[english]{babel}

% Mathematik Umgebung, Theoreme, Symbole, Buchstaben, Bold Symbols
% Paket für Indikator
\usepackage{amsmath, amsthm, amssymb, bm, dsfont}

% Zählermanipulation und Nummerierte Aufzählungen
\usepackage{paralist, enumerate, enumitem}

% Kleine Überschrift für den Abstract
\usepackage{abstract}

% Literaturverzeichnis und Hyperlinks innerhalb des Dokuments
\usepackage{natbib, hyperref}

% Großes kartesisches Produkt
\usepackage{dlfltxbcodetips}

\usepackage{bookmark}

% Makros aus extra Datei
% Markos

\newcommand{\productspace}{\bigtimes_{i=1}^{n}\cX_{i}}
\newcommand{\dualproductspace}{\bigtimes_{i=1}^{n}\cX_{i}^{\prime}}

\newcommand{\Lzero}{L^{0}}
\newcommand{\Lzeron}{\nk{\Lzero}^{n}}
\newcommand{\Lone}{L^{1}}

\newcommand{\Linf}{L^{\infty}}

\newcommand{\Lp}{L^{p}}

\newcommand{\Mphi}{M^{\phi}}
\newcommand{\MPhi}{M^{\Phi}}
\newcommand{\Lphi}{L^{\phi}}
\newcommand{\LPhi}{L^{\Phi}}
\newcommand{\Mexp}{M^{\exp}}

\newcommand{\rhosing}{\rho_0}
\newcommand{\brisk}{\mathit{R}}
\newcommand{\oneton}{1,\dotsc,n}

\newcommand{\xixi}{\nk{\xi,\Xi}}
\newcommand{\xixiopt}{\nk{\xi^{\bar{X}},\Xi^{\bar{X}}}}

\newcommand{\xiopt}{\xi^{\bar{X}}}
\newcommand{\Xiopt}{\Xi^{\bar{X}}}
\newcommand{\Xioptj}{\Xi^{\bar{X},j}}

\newcommand{\bQopt}{\bQ^{\bar{X}}}

%%%%%%
%%% N\"utzliche Makros
%%%%%%

% Intervalle mit "normalen" Klammern
\usepackage{interval}
\intervalconfig{soft open fences}

% Klammern
\newcommand{\gk}[1]{\left \{ #1\right \}} %geschwungen
\newcommand{\ek}[1]{\left[ #1\right]} %eckige
\newcommand{\nk}[1]{\left(#1\right)} %normale

% Skalarprodukt
\newcommand{\pairing}[2]{\langle{#1,#2}\rangle}
\newcommand{\npairing}[2]{\langle{#1,#2}\rangle_{n}}

% Norm
\newcommand{\norm}[1]{\Vert{#1}\Vert}
\newcommand{\abs}[1]{\vert{#1}\vert}

% Intervall [0,1]
\newcommand{\zeroone}{\interval{0}{1}}

% Indikatorfunktion (Maßtheorie)
\newcommand{\ind}[1]{\mathds{1}_{#1}}

% Komplement

% essentielles Supremum

% Buchstaben mit doppelten Linien

\newcommand{\bE}{\mathbb{E}}

\newcommand{\bN}{\mathbb{N}}

\newcommand{\bP}{\mathbb{P}}
\newcommand{\bQ}{\mathbb{Q}}
\newcommand{\bR}{\mathbb{R}}

% Kalligraphische Buchstaben
\newcommand{\cA}{\mathcal{A}}

\newcommand{\cC}{\mathcal{C}}
\newcommand{\cD}{\mathcal{D}}
\newcommand{\cE}{\mathcal{E}}
\newcommand{\cF}{\mathcal{F}}

\newcommand{\cO}{\mathcal{O}}

\newcommand{\cR}{\mathcal{R}}
\newcommand{\cS}{\mathcal{S}}

\newcommand{\cU}{\mathcal{U}}
\newcommand{\cV}{\mathcal{V}}
\newcommand{\cW}{\mathcal{W}}
\newcommand{\cX}{\mathcal{X}}

% W-Raum und messbarer Raum
\newcommand{\probspace}{\nk{\Omega,\cF,\bP}}
\newcommand{\measureablespace}{\nk{\Omega,\cF}}

% Römische Zahlen im Fließtext

\newcommand{\RN}[1]{\uppercase\expandafter{\romannumeral#1}} % chktex 41

%%%%%%
% Theoreme inkl. Nummerierung
%%%%%%

% Nummeriert
\numberwithin{equation}{section} % Keine Zeilenumbrüche bei equation!
\newtheorem{theo}[equation]{Theorem}
\newtheorem{prop}[equation]{Proposition}
\newtheorem{lemma}[equation]{Lemma}
\newtheorem{defi}[equation]{Definition}
\newtheorem{cor}[equation]{Corollary}
\newtheorem{ex}[equation]{Example}
\newtheorem{rem}[equation]{Remark}

% Nicht Nummeriert
\newtheorem*{theo*}{Theorem}
\newtheorem*{prop*}{Proposition}
\newtheorem*{lemma*}{Lemma}
\newtheorem*{defi*}{Definition}
\newtheorem*{cor*}{Corollary}
\newtheorem*{ex*}{Example}
\newtheorem*{rem*}{Remark}
\newtheorem*{rems*}{Remarks}
\newtheorem*{nota*}{Notation}
\newtheorem*{ques*}{Questions}
\newtheorem*{ass*}{Assumption}

% Zitieren  (unsrt für Zahlen)
\bibliographystyle{abbrvnat}

% Parameter des Dokuments
\newcommand{\name}{Scalar systemic risk measures and Aumann-Shapley allocations}
\newcommand{\me}{Florian Schindler\\
                 Institute of Mathematics, Justus-Liebig-Universität Gießen\\
                 35392 Gießen, Germany;\\
                 email: florian.schindler@math.uni-giessen.de}
\newcommand{\joint}{Ludger Overbeck\\
                    Institute of Mathematics, Justus-Liebig-Universität Gießen\\
                    35392 Gießen, Germany;\\
                    email: ludger.overbeck@math.uni-giessen.de}

%%%%%%%%%%%%%%%%%%%%%%%%%%%%%%%%%%%%%%%%%%%%%%%%%%%%%%%%%%%%%%%%%%%%%%%%%

% Beginn des Dokuments

\begin{document}
\pagenumbering{arabic}
\setcounter{page}{1}
\begin{centering}
 \huge{\name}
 \\\vspace{8mm}
 \large{\joint}
 \\\vspace{4mm}
 \large{\me}\\
\end{centering}
\vspace{12mm}

%*****************************************************************************************

\begin{abstract}
We study two different contributions to the theory of (scalar) systemic risk measures. Namely the first aggregate or axiomatic approach and the first inject capital approach. For this purpose we establish a general framework, which is rich enough to embed both approaches. It turns out that in most relevant situations systemic risk measures of the first inject capital approach have a representation in the more general axiomatic approach. Moreover, we study capital allocation rules (CARs). In both situations there exist canonical ways to answer the capital allocation problem. Additionally, a capital allocation rule (CAR) in the spirit of Aumann-Shapley is introduced, which gives us the opportunity to compute systemic capital allocations regardless of the risk measurement approach. This CAR also serves as an instrument to compare both approaches and to identify commonalities.

\end{abstract}

%*****************************************************************************************

\section{Introduction}\label{sec:introduction}
The financial crises of the past decades showed that measuring systemic risk in a suitable way is an urgency. If we consider a firm consisting of \(n\in\bN{}\) business units or a portfolio consisting of \(n\in\bN{}\) assets, allocating the overall risk to its constituent parts plays a crucial role. In~\cite{kalkbrener2005axiomatic},~\cite{tasche2007capital} and~\cite{tsanakas2009split} the authors presented some helpful tools for this task.
However, we are not able to use these tools to compute and allocate the risk of a financial system. First, the idea that the constituent parts subsidize each other, i.e.\@ to simply sum up all profits and losses, is not a realistic scenario for a financial system \(\bar{X}=\nk{X_{1},\ldots,X_{n}}\). More general aggregation rules (ARs) \(\Lambda{}\) which reflect the structure of the system appropriately need to be considered. The authors in~\cite{chen2013axiomatic}, later extended to general measurable spaces by~\cite{kromer2016systemic}, and~\cite{biagini2019unified} developed this idea with two different approaches. The difference between these two approaches is the order of aggregating and allocating. Starting with some axioms \-- partially motivated by economic or risk management reasoning and partially by mathematical structures \-- it is shown  in~\cite{chen2013axiomatic} and~\cite{kromer2016systemic} that the risk of a financial system should have the form 
\begin{equation}\label{eq:syskromer}
 \rho\nk{\bar{X}}=\inf\gk{m\in\bR\mid\Lambda\nk{\bar{X}}-m\in\cA_{0}}=\rhosing\circ\Lambda\nk{\bar{X}},
\end{equation}
implying in particular that the aggregation takes place \textit{before} the capital is injected to the system. In other words, a mapping \(\rho{}\) fulfills the given axioms, if an only if it is of the form \(\rhosing\circ\Lambda{}\), for a suitable single firm risk measure \(\rhosing{}\) and a suitable AR \(\Lambda{}\). The scalar \(\rho\nk{\bar{X}}\) can be interpreted \-- in analogy to single firm risk measures \-- as the minimal amount of capital which has to be injected to the aggregated system to make it acceptable, i.e.\@ saves the system of a collapse. As mentioned earlier, this approach produces meaningful results, like dual representation theorems, in a general framework.
In~\cite{biagini2019unified} an alternative procedure is presented. They argue, that aggregating \textit{after} injecting capital on the level of the single institutions could affect the overall systemic risk in a positive way. Starting with a mapping as in~\eqref{eq:syskromer}, these risk measures appear as
\begin{equation}\label{eq:sysbiaginidet}
 D\nk{\bar{X}}=\inf\gk{\sum_{i=1}^{n}m_{i}\mid\bar{m}\in\bR^{n},\Lambda\nk{\bar{X}-\bar{m}}\in\cA_{0}}.
\end{equation}
Again, the set \(\cA_{0}\) is the acceptance set of a single-firm risk measure, the mapping \(\Lambda\colon\bR^{n}\to\bR{}\) is an AR\@ and the overall systemic risk is described by a scalar which can be interpreted as a valuation of the injected capital. Here, the valuation is given by adding up the injected capital, but more general valuations \(\pi{}\) are possible. An extension of~\eqref{eq:sysbiaginidet} occurs, if we allow for random capital injections, i.e.
\begin{equation}\label{eq:sysbiaginiran}
 \brisk\nk{\bar{X}}=\inf\gk{\sum_{i=1}^{n}Y_{i}\mid\bar{Y}\in\cC,\Lambda\nk{\bar{X}-\bar{Y}}\in\cA_{0}},
\end{equation}
where \(\cC{}\) is some set consisting of vectors of \(\cF{}\)-measurable functions \(Y_{i}\colon\Omega\to\bR{}\) on \(\probspace{}\) (denoted by \(\Lzeron=\nk{L^{0}\probspace{}}^{n}\)). The set \(\cC{}\) contains additional restrictions for the (possibly random) allocation \(\bar{Y}\). In all studies of these systemic risk measures, we have
\begin{equation}\label{eq:randomallocationswithdeterministicfuturevalue}
 \cC\subseteq\cC\nk{\bR}:=\gk{\bar{Y}\in\Lzeron\mid\sum_{i=1}^{n}Y_{i}\in\bR}.
\end{equation}
The interpretation of this constraint is, that the overall capital which is needed to save the system is determined today, but the allocation depends on the occurring scenario. These systemic risk measures fulfill the most basic properties one demands of a systemic risk measure. Under some restrictive assumptions further studies are possible and dual representation results can be derive. At first sight, it seems that this approach yields a completely new, more flexible type of systemic risk measures which are only related to the type~\eqref{eq:syskromer} in some trivial cases (for example if \(\Lambda\nk{\bar{x}}=\sum_{i=1}^{n}x_{i}\)). However, we are able to show in Theorem~\ref{theo:biaginisysnewproperties} that in most relevant situations systemic risk measures of type~\ref{eq:sysbiaginiran} can be embedded to the axiomatic approach. This means that we are able to represent them via~\eqref{eq:syskromer} for some suitable \(\tilde\rho_{0}\) and \(\tilde\Lambda{}\). This identification is demonstrated in Example~\ref{ex:lambdastarandpenaltyexponentialcase} and Remark~\ref{rem:biaginiidentification}.

We continue our analysis with allocation rules. Both systemic risk measures suggest a natural way to allocate the risk to the participants of the system. In Example~\ref{ex:penaltyallocationentropic} we demonstrate how it is possible to deduce a random allocation for a systemic risk measure of type~\eqref{eq:syskromer}. Additionally, we present an allocation rule, which is in the spirit of the famous Aumann-Shapley allocation rule (~\cite{denault2001coherent}) for single-firm risk measures. One can apply this rule to both types of systemic risk measures. Moreover, for Gâteaux differentiable systemic risk measures we can guarantee the full allocation property. Additionally, this CAR yields an alternative approach to compare both approaches and to identify commonalities.

Besides~\cite{chen2013axiomatic,kromer2016systemic} and~\cite{biagini2020fairness} there are many other contributions to the theory of (scalar) systemic risk measures.~\cite{arduca2021dual} and~\cite{ararat2020dual} present different approaches for the dual representation of both types. Additionally,~\cite{ararat2020dual} captures dual representations for set-values systemic risk measures. We would like to mention, that we do not consider set-valued systemic risk measures in this paper and refer the interested reader to~\cite{feinstein2017measures}. A completely different approach appears for example in~\cite{rogers2013failure}. Initiated by the seminal work of~\cite{eisenberg2001systemic}, the financial system is represented by a stochastic network with specific structure. One then tries to analyze the weaknesses of the network and how it might fail. The object of interest is a so called clearing vector which settles all liabilities in the network within a simultaneously clearing mechanism. Existence results and efficient algorithms to compute such a clearing vector are studied. A more detailed review of the existing literature with a focus on scalar systemic risk measures is given in\cite{doldi2021real}.

The paper is structured as follows: In Section~\ref{sec:modelandnotation} we present a very general framework subsuming most approaches to systemic risk measures, which allows us to introduce a general definition of a systemic risk measure in Section~\ref{sec:systemicriskmeasures}. Subsections~\ref{subsec:firstaggregate} and~\ref{subsecfirstinjectcapital} collect the most important results known from~\cite{kromer2016systemic} and~\cite{biagini2020fairness} about the different approaches. Moreover, Theorem~\ref{theo:biaginisysnewproperties} shows that in most relevant cases and all examples the risk measure of type~\ref{eq:sysbiaginiran} can be identified with risk measures of type~\ref{eq:syskromer}. This identification is highlighted in Example~\ref{ex:lambdastarandpenaltyexponentialcase} and Remark~\ref{rem:biaginiidentification}. Finally, in Section~\ref{sec:carsforsys} CARs for both types of systemic risk measures are studied and the CAR in the spirit of Aumann-Shapley is presented in order to keep the so-called full allocation property of CARs.

%*****************************************************************************************

\section{Model and Notation}\label{sec:modelandnotation}

Let us start by introducing a  general setup. It will be rich enough for the main part of the theoretical results for single-firm risk measures and systemic risk measures. In some situation we restrict the setting to less general spaces (for example Banach lattices) to present meaningful special results. However, it can be seen as a minimal setup to derive meaningful representation results. So let \(\measureablespace{}\) be a general measurable space. By \(\cX{}\) we denote a linear space of \(\cF{}\)-measurable functions \(X\colon\Omega\to\bR{}\). With \(\cX_{+}\) we denote the cone of all non-negative functions, i.e.
\[
  \cX_{+}:=\gk{X\in\cX\mid{X}\nk{\omega}\geq0\forall\omega\in\Omega}.
\]
Then the relation 
\[
  X\succcurlyeq{Y}\Leftrightarrow{X-Y}\in\cX_{+}
\]
defines a partial order on \(\cX{}\). 

Since we are interested in dual representations we work with dual systems. For this purpose the technical framework of~\cite{aliprantis2006infinite} Chapter 5.14 is suitable. A dual system \(\pairing{\cE}{\tilde{\cE}}\) consists of a pair of vector spaces \(\cE{}\) and \(\tilde{\cE}\) together with a bilinear mapping, called pairing, 
\begin{equation}\label{eq:dualpairing}
  \nk{E,\tilde{E}}\mapsto{}\pairing{E}{\tilde{E}}
\end{equation}
which sends elements from \(\cE\times{}\tilde{\cE{}}\) to \(\bR{}\) and satisfies the two properties:
\begin{enumerate}[label = (\roman*)] 
  \item \(\pairing{E}{\tilde{E}}=0\) for all \(\tilde{E}\in\tilde{\cE}\Rightarrow{}E=0\),
  \item \(\pairing{E}{\tilde{E}}=0\) 
        for all \(E\in\cE\Rightarrow{}\tilde{E}=0\).
\end{enumerate} 
With this pairing the space \(\cE{}\) can be seen as a space of linear functionals on \(\tilde{\cE}\) and vice versa. To derive dual representation results (for convex functionals) one needs additional structure on \(\cE{}\), i.e.\@ a topology. A natural choice for a topology on \(\cE{}\) is the subspace topology of the product topology on the space of all real mappings on \(\tilde{\cE}\) denoted by \(\bR^{\tilde{\cE}}\). It is called weak topology denoted by \(\sigma\nk{\cE,\tilde{\cE}}\). With this topology the space \(\cE{}\) becomes a (or more precisely \(\nk{\cE,\sigma\nk{\cE,\tilde{\cE}}}\) is a) locally convex Hausdorff space. This is due to the fact that the product topology on the space of all real mappings on \(\tilde{\cE}\) is a locally convex Hausdorff topology, and it imparts its properties to its subspace topologies. 
The notation \(\sigma\nk{\cE,\tilde{\cE}}\) pays tribute to the fact that for a dual system \(\pairing{\cE}{\tilde{\cE}}\) the topological dual, denoted by \(\cE^{\prime}\), of \(\nk{\cE,\sigma\nk{\cE,\tilde{\cE}}}\) is exactly \(\tilde{\cE}\) (see Theorem 5.93\cite{aliprantis2006infinite}). Besides \(\sigma\nk{\cE,\tilde{\cE}}\) there are other locally convex topologies \(\tau{}\) on \(\cE{}\) which give \(\tilde{\cE}\) as the dual. We call these topologies consistent with the dual pair \(\pairing{\cE}{\tilde{\cE}}\). All these topologies are also Hausdorff (Lemma 5.97\cite{aliprantis2006infinite}), have the same closed convex sets and therefore the same lower semicontinuous (lsc) convex functions (see Lemma 5.97, Theorem 5.98 and Corollary 5.99\cite{aliprantis2006infinite}). The weak topology \(\sigma\nk{\cE,\tilde{\cE}}\) is the smallest (weakest or coarsest) consistent topology. The largest (strongest or finest) one is the so-called Mackey topology denoted by \(\tau\nk{\cE,\tilde{\cE}}\). We have for every consistent topology \(\tau{}\) that \(\sigma\nk{\cE,\tilde{\cE}}\subseteq\tau\subseteq\tau\nk{\cE,\tilde{\cE}}\) (see Theorem 5.113\cite{aliprantis2006infinite}). If we interchange the roles of \(\cE{}\) and \(\tilde{\cE}\) we are able to define topologies on \(\tilde{\cE}\) which give \(\cE{}\) as the dual. By \(\sigma\nk{\tilde{\cE},\cE}\) we denote the weak topology on \(\tilde{\cE}\) and so on.

Furthermore, as in~\cite{ruszczynski2006optimization} we need the following technical assumption:
\begin{enumerate}[label = (C)]
  \item\label{assumptionc} If \(E^{\prime}\not\in\cE^{\prime}_{+}\), then there exists \(E\in\cE_{+}\) such that \(\pairing{E}{E^{\prime}}<0\).
\end{enumerate}
~\cite{ruszczynski2006optimization} pointed out that this is a very mild requirement which ensures that the cone \(\cE_{+}^{\prime}\) is dual to the cone \(\cE_{+}\), i.e. \(\cE_{+}^{\prime}=\gk{E^{\prime}\in\cE^{\prime}:\pairing{E}{E^{\prime}}\geq0\forall{E}\in\cE_{+}}\). A sufficient condition to guarantee~\ref{assumptionc} is that the space \(\cE{}\) contains all indicator functions \(\ind{A}, A\in\cF{}\). From now on we always consider that the space \(\cX{}\) is equipped with a locally convex Hausdorff topology and assume that~\ref{assumptionc} holds true. Together with its topological dual \(\cX^{\prime}\) and the natural pairing we obtain the dual system \(\pairing{\cX{}}{\cX^{\prime}}\). 

In a risk management framework natural choices for \(\cX{}\) include the spaces \(\Lp\probspace{}\), \(1\leq{p}\leq\infty{}\), and, as a generalization, Orlicz spaces which are presented in~\ref{appendix:orliczspaces}. However, note that all crucial theoretical arguments to guarantee the decomposition and dual representation result (~\ref{theo:sysriskcomposition} and~\ref{theo:sysdualconvex}) do not require that \(\cX{}\) is a normed space. It is only necessary that \(\cX{}\) is a locally convex Hausdorff space, which justifies our choice.

We consider a model with one time period. Over this time period we are interested in measuring the risk of a financial system. For this purpose \(X\in\cX{}\) describes the loss of a single firm. For a system consisting of a finite set of \(n\) firms, representing \(n\) nodes in a financial network, the vector \(\bar{X}=\nk{X_1,\ldots,X_n}\in\productspace{}\) describes the losses of these nodes, i.e. \(X_{i}\in\cX_{i}\) is the loss of firm \(i\), where  \(\cX_{i}\) shares the same properties as \(\cX{}\). By abuse of notation we write for \(\bar{X},\bar{Y}\in\productspace{}\) \(\bar{X}\succcurlyeq\bar{Y}\) if \(X_{i}\succcurlyeq{Y}_{i}\) for all \(i=\oneton{}\) which means that the partial order is induced by
\[
  \nk{\productspace}_{+}:=\gk{\bar{X}\in\productspace\mid{X_{i}}\nk{\omega}\geq0\ \forall\omega\in\Omega\text{ and }{i}\in\gk{\oneton}}=\bigtimes_{i=1}^{n}\nk{\cX_{i}}_{+}.
\]
The bilinear function
\begin{equation}\label{eq:ndualpairing}
  \npairing{\bar{X}}{\bar{X}^{\prime}}:=\sum_{i=1}^{n}\pairing{X_{i}}{X^{\prime}_{i}},
\end{equation}
where \(\pairing{\cdot}{\cdot}\) is the pairing of the dual system \(\pairing{\cX_{i}}{\cX_{i}^{\prime}}\) from~\eqref{eq:dualpairing}, induces the dual system 
\[
  \pairing{\productspace}{\nk{\productspace}^{\prime}}.
\]
Obviously, \(\nk{\productspace}^{\prime}=\dualproductspace{}\) and the Cartesian product of locally convex Hausdorff spaces equipped with the product topology is again a locally convex Hausdorff space. Moreover, we will use the notation \(1_{n}=\nk{1,\dotsc,1}\in\bR^{n}\).

%*****************************************************************************************

\section{Systemic Risk Measures}\label{sec:systemicriskmeasures}

Let us start by reviewing the most important results for the systemic risk measures of types~\eqref{eq:syskromer} and~\eqref{eq:sysbiaginidet} (resp.~\eqref{eq:sysbiaginiran}). Both risk measures have in common that they use a general aggregation rule (AR) to aggregate the overall risk into a univariate risk factor. The following definition characterizes these mappings.
\begin{defi}\label{def:ar}
 A mapping \(\Lambda\colon\bR^{n}\to\bR{}\) is called convex aggregation rule (AR) if it satisfies \(\Lambda\nk{\productspace}=\cX{}\)\footnote{By \(\Lambda\nk{\productspace}\) we mean \(\gk{\Lambda\circ\bar{X}\colon\Omega\to\bR\mid\bar{X}\in\productspace}\)} (respectively \(\Lambda\nk{\productspace}=\cX_{+}\) for \(\cR=\bR_{+}\)) and the following properties:
 \begin{enumerate}[label= (A\arabic*)]
  \item\label{armonoton} Monotonicity: If \(\bar{x}\geq\bar{y}\), then \(\Lambda\nk{\bar{x}}\geq\Lambda\nk{\bar{y}}\) for any \(\bar{x},\bar{y}\in\bR^{n}\).
  \item\label{arconvex} Convexity: \(\Lambda\nk{\alpha\bar{x}+\nk{1-\alpha}\bar{y}}\leq\alpha\Lambda\nk{\bar{x}}+\nk{1-\alpha}\Lambda\nk{\bar{y}}\) for any \(\bar{x},\bar{y}\in\bR^{n}\) and any \(\alpha\in\zeroone{}\).
  \item\label{arrsurjectivity} \(\cR{}\)-Surjectivity: \(\Lambda\nk{\bR^{n}}=\cR{}\) for \(\cR=\bR{}\) or \(\cR=\bR_{+}\).
        % \item\label{arfconstancy} \(\flam\)-constancy: Either \(\mathrm{Im}\Lambda=\bR{}\) and there exists a surjective map \(\flam:\bR\to\bR{}\) such that \(\Lambda(a1_n)=\flam(a)\) for any \(a\in\bR{}\) or
        %       \(\mathrm{Im}\Lambda=\bR_+\) and there exists a map \(\flam:\bR\to\bR_+\) and \(b\in\bR_+\) such that \(\flam\) is strictly increasing on \(\interval[open right]{b}{\infty}\), \(\flam(a)=0\) for any \(a\leq b\) and \(\Lambda(a1_n)=\flam(a)\) for any \(a\in\bR{}\).
 \end{enumerate}
 A mapping \(\Lambda\colon\bR^{n}\to\bR{}\) is called positively homogeneous aggregation rule (AR) if it satisfies \(\Lambda\nk{\productspace}\subseteq\cX{}\) (respectively \(\Lambda\nk{\productspace}\subseteq\cX_{+}\) for \(\cR=\bR_{+}\)), the properties~\ref{armonoton},~\ref{arconvex} and
 \begin{enumerate}[label= (A\arabic*)]
  \setcounter{enumi}{3}
  \item\label{arposhom} Positive homogeneity: \(\Lambda\nk{\alpha\bar{x}}=\alpha\Lambda\nk{\bar{x}}\) for any \(\bar{x}\in\bR^{n}\) and any \(\alpha\in\bR_{+}\).
  \item\label{arnormal} Normalization: \(\Lambda\nk{1_{n}}=n\).
 \end{enumerate}
\end{defi}

The interpretation of the properties~\ref{armonoton},\ref{arconvex},\ref{arposhom} and~\ref{arnormal} is straight forward as explained in~\cite{kromer2016systemic}.~\ref{arrsurjectivity} is a technical requirement which is strongly connected to the corresponding property of systemic risk measures and single-firm risk measures. However,  the most relevant ARs satisfy this property and therefore it is required as a basic property in the definition.

Before we move on to some examples, we present some sufficient conditions for \(\Lambda{}\) to satisfy the requirement \(\Lambda\nk{\productspace}\subseteq\cX{}\), respectively \(\Lambda\nk{\productspace}=\cX{}\) if we have \(\cX_{i}=\cX{}\) for all \(i\in\gk{\oneton{}}\) and \(\cX{}\) is a Banach lattice. The Lemma makes use of the function \(f_{\Lambda}\colon\bR\to\bR{}\) defined by 

\begin{equation}\label{eq:flambda}
  f_{\Lambda}\nk{a}:=\Lambda\nk{a\cdot1_{n}}.
\end{equation}

\begin{lemma}\label{lemma:technical requirement on lambda}
  Let \(\cX{}\) be a Banach lattice with norm \(\norm{\cdot}\). If any of the assumptions below hold then \(\Lambda\nk{\cX^{n}}=\cX{}\).
  \begin{enumerate}[label = (\alph*)]
    \item \(\Lambda\colon\bR^{n}\to\bR{}\) satisfies the properties~\ref{armonoton}-\ref{arrsurjectivity} and the function \(f_{\Lambda}\) satisfies \(\norm{f_{\Lambda}\nk{X}}<\infty{}\) and \(\norm{f^{-1}_{\Lambda}\nk{X}}<\infty{}\) for any \(X\in\cX{}\).
    \item \(\Lambda\colon\bR^{n}\to\bR{}\) satisfies the property~\ref{armonoton}, the function \(f_{\Lambda}\) is bijective and satisfies \(\norm{f_{\Lambda}\nk{X}}<\infty{}\) and \(\norm{f^{-1}_{\Lambda}\nk{X}}<\infty{}\) for any \(X\in\cX{}\).
    \item \(\Lambda\colon\bR^{n}\to\bR{}\) satisfies the property~\ref{arrsurjectivity}, is strictly increasing, continuous and the function \(f_{\Lambda}\) satisfies \(\norm{f_{\Lambda}\nk{X}}<\infty{}\) and \(\norm{f^{-1}_{\Lambda}\nk{X}}<\infty{}\) for any \(X\in\cX{}\).
  \end{enumerate}
\end{lemma}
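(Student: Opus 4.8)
The plan is to prove the two inclusions $\Lambda\nk{\cX^{n}}\subseteq\cX$ and $\cX\subseteq\Lambda\nk{\cX^{n}}$ separately, reducing everything in each case to the one--dimensional restriction $\flam$ by means of a ``diagonal sandwich''. Recall that $\Lambda\nk{\cX^{n}}=\gk{\Lambda\circ\bar{X}\mid\bar{X}\in\cX^{n}}$, and that since $\cX$ is a Banach lattice of measurable functions, hence a solid sublattice (order ideal) of $\Lzero$, for every $\bar{X}\in\cX^{n}$ the pointwise supremum $\bigvee_{i=1}^{n}X_{i}$ and the pointwise infimum $\bigwedge_{i=1}^{n}X_{i}$ again lie in $\cX$. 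Note also that each of the three assumptions (a), (b), (c) entails the monotonicity~\ref{armonoton} of $\Lambda$ (in (c), strict monotonicity is the stronger property), so~\ref{armonoton} is available throughout.

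First I would treat $\Lambda\nk{\cX^{n}}\subseteq\cX$. For $\bar{X}\in\cX^{n}$ and $\omega\in\Omega$ the componentwise bounds $\bigl(\bigwedge_{i}X_{i}(\omega)\bigr)1_{n}\le\bar{X}(\omega)\le\bigl(\bigvee_{i}X_{i}(\omega)\bigr)1_{n}$ in $\bR^{n}$, combined with~\ref{armonoton}, give pointwise on $\Omega$
\[
  \flam\!\left(\bigwedge_{i=1}^{n}X_{i}\right)\ \le\ \Lambda\circ\bar{X}\ \le\ \flam\!\left(\bigvee_{i=1}^{n}X_{i}\right).
\]
Both outer functions belong to $\cX$ by the standing hypothesis $\norm{\flam(Y)}<\infty$ for $Y\in\cX$; since $\Lambda\circ\bar{X}$ is measurable ($\Lambda$ being convex in case (a), continuous in case (c), and a product--monotone map in case (b)) and is squeezed between two members of $\cX$, solidity of $\cX$ forces $\Lambda\circ\bar{X}\in\cX$.

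The reverse inclusion rests on the claim that, under any of (a), (b), (c), the map $\flam\colon\bR\to\bR$ is a strictly increasing continuous bijection. In case (b) this is immediate: $\flam$ is non-decreasing by~\ref{armonoton} and bijective by hypothesis, hence strictly increasing, and a strictly increasing bijection of the line is automatically continuous. In case (c), $\flam(a)=\Lambda(a\,1_{n})$ inherits strict monotonicity and continuity directly from $\Lambda$, while applying the diagonal sandwich to an arbitrary $\bar{x}\in\bR^{n}$ gives $\flam(\min_{i}x_{i})\le\Lambda(\bar{x})\le\flam(\max_{i}x_{i})$; combined with~\ref{arrsurjectivity}, i.e.\@ $\Lambda\nk{\bR^{n}}=\bR$, this shows that $\flam$ is unbounded both above and below, hence onto $\bR$ by continuity. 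Case (a) is the delicate one: $\flam$ is non-decreasing~(\ref{armonoton}) and convex~(\ref{arconvex}), hence continuous; the same sandwich together with~\ref{arrsurjectivity} again yields surjectivity onto $\bR$; and a non-decreasing convex surjection onto $\bR$ cannot be constant on any interval $[p,q]$, because flatness there would propagate by convexity and monotonicity to all of $(-\infty,q]$ and contradict unboundedness below --- so $\flam$ is in fact strictly increasing. Granting the claim, $\flam^{-1}$ is again a strictly increasing continuous bijection of $\bR$; for $X\in\cX$ the function $\flam^{-1}\circ X$ is measurable and satisfies $\norm{\flam^{-1}(X)}<\infty$, hence lies in $\cX$, so $\bar{X}^{\ast}:=(\flam^{-1}\circ X)\,1_{n}\in\cX^{n}$ and $\Lambda\circ\bar{X}^{\ast}=\flam\circ\flam^{-1}\circ X=X$, which proves $X\in\Lambda\nk{\cX^{n}}$.

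I expect the main obstacle to be the bijectivity claim for $\flam$ in case (a): one has to squeeze out of the purely qualitative axioms~\ref{armonoton}--\ref{arrsurjectivity} both the surjectivity of $\flam$ (via the diagonal sandwich and connectedness of its range) and its injectivity (via convexity), neither of which is assumed directly. A secondary, recurring point is the correct reading of the hypotheses $\norm{\flam(X)}<\infty$ and $\norm{\flam^{-1}(X)}<\infty$ as the assertions $\flam\circ X\in\cX$ and $\flam^{-1}\circ X\in\cX$; this uses that $\cX$ is a solid Banach function space, so that finiteness of the lattice norm of a measurable function is equivalent to membership in $\cX$. The only further thing to be careful about is the measurability of $\Lambda\circ\bar{X}$ in case (b), where only monotonicity is at hand; this is harmless in the examples of interest but relies on the standard fact that maps monotone for the product order on $\bR^{n}$ are measurable.
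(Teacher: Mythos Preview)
Your argument is correct and is precisely what the paper defers to: the paper's own proof is the single sentence that Lemma~2.3 of \cite{kromer2016systemic} ``easily extends to Banach lattices,'' and your diagonal-sandwich reduction to $f_{\Lambda}$, together with the case-by-case verification that $f_{\Lambda}$ is a strictly increasing continuous bijection of $\bR$, is exactly how that argument runs. One small sharpening of your final caveat: product-order monotone maps $\bR^{n}\to\bR$ are in general only Lebesgue measurable, not Borel, for $n\ge2$ (one can perturb an indicator of a half-space along a non-Borel subset of its bounding hyperplane without destroying monotonicity), so case~(b) really should be read with an implicit Borel-measurability hypothesis on $\Lambda$ --- but this is a wrinkle in the lemma's formulation rather than in your method.
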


\begin{proof}
  The Proof of Lemma 2.3 in~\cite{kromer2016systemic} easily extends to Banach lattices.
\end{proof}

We continue by presenting some examples.

\begin{ex}\label{ex:arlossesonly}[\cite{kromer2016systemic}]
 Obviously, simple summation
 \begin{equation}\label{eq:arsum}
  \Lambda^{\text{sum}}\nk{\bar{x}}=\sum_{i=1}^{n}x_{i}
 \end{equation}
 is an AR\@. If a shift with some parameter \(c\in\bR{}\) is allowed, we obtain
 \begin{equation}\label{eq:arsumshift}
  \Lambda^{\text{sum,c}}\nk{\bar{x}}=\sum_{i=1}^{n}x_{i}-c.
 \end{equation}
 These choices are only reasonable if the participants of the system cross-subsidize each other. An easy way to avoid cross-subsidization is to add up the losses only, i.e.
 \begin{equation}\label{eq:arlossesonly}
  \Lambda^{\text{loss}}\nk{\bar{x}}=\sum_{i=1}^{n}x_{i}^{+}.
 \end{equation}
 A simple modification occurs if we only consider losses beyond a given threshold level \(b>0\). The corresponding AR is then given by
 \begin{equation}\label{eq:arlossesonlythreshold}
  \Lambda^{\text{loss,b}}\nk{\bar{x}}=\sum_{i=1}^{n} \nk{x_{i} -b}^{+},\qquad b\in\bR_{+}.
 \end{equation}
\end{ex}

\begin{ex}\label{ex:arcritical}[\cite{kromer2016systemic}]
 It is also possible for the AR to take the structure of the given financial system into account. Suppose for example that there are some nodes, classified in a set \(A\subset\gk{\oneton}\), which would be dangerous for the stability of the system in case of a default. Then the AR
 \begin{equation}\label{eq:arcritical}
  \Lambda^{crit}\nk{\bar{x}}=\exp\nk{\gamma\sum_{i\in A}x_{i}^{+}}-1+\sum_{i\in N\setminus A}x_{i}^{+},\qquad \gamma>0
 \end{equation}
 penalizes large losses of the critical nodes on an exponential scale. Since for small \(x\) we have \(e^{x}-1\approx{x}\), small losses of the critical nodes are treated the same way as the losses of the less relevant nodes.
\end{ex}

\begin{ex}\label{ex:arutility}
 An important example in the setting of~\cite{biagini2020fairness} is given by
 \begin{equation*}
  \Lambda^{\text{ut}}\nk{\bar{x}}=\sum_{i=1}^{n}l_{i}\nk{x_{i}},
 \end{equation*}
 where each \(l_{i}\colon\bR\to\bR{}\) is a loss function, i.e.\ an increasing, convex function with \(\lim_{x\to\infty}\frac{l_{i}\nk{x}}{x}=\infty{}\). If we choose the exponential loss functions \(l_{i}\nk{x}=\frac{1}{\alpha_{i}}\exp\nk{\alpha_{i}x}\), \(\alpha_{i}>0\), the corresponding mapping \(\Lambda^{\text{exut}}\) is indeed a convex AR\@. 
\end{ex}

\begin{ex}\label{ex:arcontaigon}[\cite{chen2013axiomatic, kromer2016systemic}]
 The next AR is motivated by the structural contagion model of~\cite{eisenberg2001systemic} and was first introduced in~\cite{chen2013axiomatic}. In this setup the liabilities of firm i to firm j are captured within the so-called relative liability matrix \(\Pi=\nk{\Pi_{ij}}_{i,j=\oneton}\), where \(\Pi_{ij}\) represents the proportion of the total liabilities of firm i received by firm j. Moreover, we assume that an external regulator has the possibility to inject capital in the system. If the vector \(\bar{x}\in\bR^{n}\) captures the realized losses of all participants of the system, then each of them has two possibilities to cover these losses: One way is to receive money from the regulator and the other way is to reduce the payments to other firms by \(y_{i}\). If firm \(i\) decides to reduce their payments to firm \(j\), then firm \(j\) faces new losses \(\Pi_{ij}y_{i}\).
 \begin{equation}\label{eq:arcontaigon}
  \Lambda^{CM}(\bar{x})=\min_{\substack{b_{i}+y_{i}\geq{x}_{i}+\sum_{j=1}^n\Pi_{ji}y_{j}\\\forall i=\oneton,\bar{b},\bar{y}\in\bR^n_+}}\gk{\sum_{i=1}^n\nk{y_{i}+\gamma{b}_{i}}},\qquad\gamma>1.
 \end{equation}
\end{ex}

Let us move on by defining desirable properties for a systemic risk measure.

\begin{defi}\label{def:sysriskproperties}
 Consider the following properties for a mapping \(\rho\colon\productspace\to\bR\cup\gk{\infty{}}\):
 \begin{enumerate}[label= (S\arabic*)]
  \item\label{sysmonoton} Monotonicity: If \(\bar{X}\succcurlyeq\bar{Y}\), then \(\rho\nk{\bar{X}}\geq\rho\nk{\bar{Y}}\) for any \(\bar{X},\bar{Y}\in\productspace{}\).
  \item\label{sysconvex} Convexity:
        \begin{enumerate}[label = (S2\alph*)]
         \item\label{sysconvexoutcome} Outcome convexity: \(\rho\nk{\alpha\bar{X}+\nk{1-\alpha}\bar{Y}}\leq\alpha\rho\nk{\bar{X}}+\nk{1-\alpha}\rho\nk{\bar{Y}}\) for any \(\bar{X},\bar{Y}\in\productspace{}\) and any \(\alpha\in\zeroone{}\).
         \item\label{sysconvexrisk} Risk convexity: Suppose \(\rho\nk{\bar{Z}\nk{\omega}}=\alpha\rho\nk{\bar{X}\nk{\omega}}+\nk{1-\alpha}\rho\nk{\bar{Y}\nk{\omega}}\) for given \(\alpha\in\zeroone{}\) and for almost all \(\omega\in\Omega{}\). Then \(\rho\nk{\bar{Z}}\leq\alpha\rho\nk{\bar{X}}+\nk{1-\alpha}\rho\nk{\bar{Y}}\).
        \end{enumerate}
  \item\label{sysposhom} Positive homogeneity: \(\rho\nk{\alpha\bar{X}}=\alpha\rho\nk{\bar{X}}\) for any \(\bar{X}\in\productspace{}\) and any \(\alpha\in\bR_{+}\).
  \item\label{sysprefcons} Preference consistency: If \(\rho\nk{\bar{X}\nk{\omega}}\geq\rho\nk{\bar{Y}\nk{\omega}}\) for almost all \(\omega\in\Omega{}\), then \(\rho\nk{\bar{X}}\geq\rho\nk{\bar{Y}}\).
  \item\label{sysrsurjectivity} \(\cR{}\)-Surjectivity: \(\rho\nk{\bR^{n}}=\cR{}\) for \(\cR=\bR{}\) or \(\cR=\bR_{+}\).
  \item\label{sysnormal} Normalization: \(\rho\nk{1_n}=n\).
 \end{enumerate}

 If a mapping satisfies the properties~\ref{sysmonoton} and~\ref{sysconvexoutcome} it is called systemic risk measure. If a systemic risk measure additionally satisfies~\ref{sysposhom} it is called positive homogeneous systemic risk measure.
\end{defi}

The properties~\ref{sysmonoton} and~\ref{sysconvexoutcome} are the most natural. Their interpretation is completely analogously to the corresponding properties of single-firm risk measures. The properties~\ref{sysconvexrisk} and~\ref{sysprefcons} were first introduced by~\cite{chen2013axiomatic} and are essential (together with~\ref{sysrsurjectivity} and additional properties) for the decomposition result of systemic risk measures of type~\eqref{eq:syskromer}. However, it turns out that these properties are strongly connected to the dual representation of the systemic risk measure. In relevant situations systemic risk measures of type~\eqref{eq:sysbiaginiran} also satisfy these properties.

%*****************************************************************************************

\subsection{First Aggregate}\label{subsec:firstaggregate}
If we consider a mapping \(\rho\colon\productspace\to\bR\cup\gk{\infty}\) which satisfies certain properties of~\ref{def:sysriskproperties} it appears to be a systemic risk measure of type~\eqref{eq:syskromer}, i.e.\@ it can be understood as the composition of a single-firm risk measure and a general AR\@. For this purpose let us recall the definition of a single-firm risk measure.

\begin{defi}\label{def:singleriskproperties}
 A mapping \(\rhosing\colon\cX\to\bR\cup\gk{\infty}\) is called convex single-firm risk measure if it satisfies the following properties:
 \begin{enumerate}[label = (R\arabic*)]
  \item\label{singlemonoton} Monotonicity: If \(X\succcurlyeq{Y}\), then \(\rhosing\nk{X}\geq\rhosing\nk{Y}\) for any \(X,Y\in\cX{}\).
  \item\label{singleconvex} Convexity: \(\rhosing\nk{\alpha{X}+\nk{1-\alpha}Y}\leq\alpha\rhosing\nk{X}+\nk{1-\alpha}\rhosing\nk{Y}\) for any \(X,Y\in\cX{}\) and any \(\alpha\in\zeroone{}\).
 \end{enumerate}
 A positively homogeneous single-firm risk measure is a convex single-firm risk measure that additionally satisfies the property
 \begin{enumerate}[label = (R\arabic*)]
  \setcounter{enumi}{2}
  \item\label{singleposhom}Positive homogeneity: \(\rhosing\nk{\alpha{X}}=\alpha\rhosing\nk{X}\) for any \(X\in\cX{}\) and any \(\alpha\in\bR_{+}\).
 \end{enumerate}
 A coherent single-firm risk measure is a positively homogeneous single-firm risk measure that additionally satisfies the property
 \begin{enumerate}[label= (R\arabic*)]
  \setcounter{enumi}{3}
  \item\label{singletranslation} Translation property: \(\rhosing\nk{X+m}=\rhosing\nk{X}+m\) for any \(X\in\cX{}\) and any \(m\in\bR{}\).
 \end{enumerate}
\end{defi}
It is well known (see for example~\cite{follmer2008stochastic} Exercise 4.1.3), that under the assumption of positive homogeneity~\ref{singleposhom}, convexity~\ref{singleconvex} is equivalent to
\emph{
 \begin{enumerate}[label= (R\arabic*)]
  \setcounter{enumi}{4}
  \item\label{singlesubadd} Subadditivity: \(\rhosing\nk{X+Y}\leq\rhosing\nk{X}+\rhosing\nk{Y}\) for any \(X,Y\in\cX{}\).
 \end{enumerate}
}
Another property, which was introduced by~\cite{frittelli2002putting}, is
\emph{
 \begin{enumerate}[label = (R\arabic*)]
  \setcounter{enumi}{5}
  \item\label{singleconstancy} Constancy on \(\cR\subset\bR{}\): \(\rhosing\nk{m}=m\) for any \(m\in\cR{}\).
 \end{enumerate}
}

The only non standard property is~\ref{singleconstancy}. As already mentioned, this property is of technical nature and plays an important role for the upcoming decomposition result. It was introduced by~\cite{kromer2016systemic}. A special case is constancy on \(\gk{1}\) which is often referred to as normalization. This property follows from the translation property~\ref{singletranslation} together with positive homogeneity~\ref{singleposhom} or together with the property \(\rhosing\nk{0}=0\). The (economic) interpretations of the above properties are well known and for a detailed study we refer the reader to~\cite{follmer2008stochastic} and~\cite{frittelli2002putting}.

\begin{theo}\label{theo:sysriskcomposition}
 Consider a mapping \(\rho\colon\productspace\to\bR\cup\gk{\infty}\) which satisfies \(\rho\vert_{\bR^{n}}\nk{\productspace}=\cX{}\) or \(\rho\vert_{\bR^{n}}\nk{\productspace}=\cX_{+}\) for \(\cR=\bR_{+}\) (resp. \(\rho\vert_{\bR^{n}}\nk{\productspace}\subseteq\cX{}\) or \(\rho\vert_{\bR^{n}}\nk{\productspace}\subseteq\cX_{+}\) for \(\cR=\bR_{+}\)).
 Then \(\rho{}\) is a systemic risk measure which additionally satisfies the properties~\ref{sysconvexrisk},~\ref{sysprefcons} and~\ref{sysrsurjectivity} (resp.~\ref{sysconvexrisk},~\ref{sysposhom},~\ref{sysprefcons} and~\ref{sysnormal}) if and only if there exists a convex (resp.\@ positively homogeneous) AR \(\Lambda:\bR^{n}\to\bR{}\) and a convex (resp.\@ positively homogeneous) single firm risk measure \(\rhosing\colon\cX\to\bR\cup\gk{\infty}\) that satisfies the constancy property~\ref{singleconstancy} on \(\cR=\Lambda\nk{\bR^{n}}\),
 such that \(\rho{}\) is the composition of \(\rhosing{}\) and \(\Lambda{}\) for all \(\bar{X}\in\productspace{}\), i.e.
 \begin{equation}\label{eq:sysriskcomposition}
  \rho\nk{\bar{X}}=\nk{\rhosing\circ\Lambda}\nk{\bar{X}}.
 \end{equation}
\end{theo}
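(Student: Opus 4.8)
The plan is to prove the two implications separately; the ``if'' direction is a routine verification, while the ``only if'' direction is the substantial part, in which the roles of the axioms~\ref{sysconvexrisk} and~\ref{sysprefcons} become transparent.

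\emph{The ``if'' direction.} Let \(\rho=\rhosing\circ\Lambda\) with \(\Lambda\), \(\rhosing\) as in the statement. The key preliminary observation is that for a constant \(\bar x\in\bR^{n}\) the element \(\Lambda\nk{\bar x}\in\cX\) is the constant function equal to the real number \(\Lambda\nk{\bar x}\in\cR=\Lambda\nk{\bR^{n}}\), so constancy~\ref{singleconstancy} yields \(\rho\nk{\bar x}=\rhosing\nk{\Lambda\nk{\bar x}}=\Lambda\nk{\bar x}\); hence \(\rho\vert_{\bR^{n}}=\Lambda\). With this identity, each property of \(\rho\) reduces to properties of \(\Lambda\) and \(\rhosing\): \ref{sysmonoton} from~\ref{armonoton} and~\ref{singlemonoton}; \ref{sysconvexoutcome} from~\ref{arconvex} together with~\ref{singlemonoton} and~\ref{singleconvex}; \ref{sysrsurjectivity} (resp.~\ref{sysnormal}) from~\ref{arrsurjectivity} (resp.~\ref{arnormal}) and the constancy of \(\rhosing\) on \(\cR\); \ref{sysposhom} from~\ref{arposhom} and~\ref{singleposhom}. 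For~\ref{sysprefcons}, if \(\rho\nk{\bar X\nk{\omega}}\ge\rho\nk{\bar Y\nk{\omega}}\) a.s., then \(\Lambda\nk{\bar X}\succcurlyeq\Lambda\nk{\bar Y}\) by \(\rho\vert_{\bR^{n}}=\Lambda\), so \(\rho\nk{\bar X}\ge\rho\nk{\bar Y}\) by~\ref{singlemonoton}; and for~\ref{sysconvexrisk}, the hypothesis reads \(\Lambda\nk{\bar Z}=\alpha\Lambda\nk{\bar X}+\nk{1-\alpha}\Lambda\nk{\bar Y}\), so~\ref{singleconvex} applied to \(\rhosing\) gives the claim.

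\emph{The ``only if'' direction.} Assume \(\rho\) has the listed properties and set \(\Lambda:=\rho\vert_{\bR^{n}}\). One checks first that \(\Lambda\) is real-valued (in the convex case this follows from \(\rho\nk{\bR^{n}}=\cR\subseteq\bR\) via~\ref{sysrsurjectivity}; in the positively homogeneous case from \(\rho\nk{0}=0\), \(\rho\nk{1_{n}}=n\) and monotonicity applied to \(\bar x\le m\cdot1_{n}\)). Then \(\Lambda\) inherits~\ref{armonoton}, \ref{arconvex}, and, in the respective cases, \ref{arrsurjectivity}, \ref{arposhom}, \ref{arnormal} directly from~\ref{sysmonoton}, \ref{sysconvexoutcome}, \ref{sysrsurjectivity}, \ref{sysposhom}, \ref{sysnormal} by restricting to constants, while the range hypothesis on \(\rho\vert_{\bR^{n}}\) is exactly \(\Lambda\nk{\productspace}=\cX\) (resp.\ \(\subseteq\cX\)); thus \(\Lambda\) is a convex (resp.\ positively homogeneous) AR. Now \emph{define} \(\rhosing\) on \(\Lambda\nk{\productspace}\) by \(\rhosing\nk{\Lambda\nk{\bar X}}:=\rho\nk{\bar X}\). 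This is well defined precisely because of preference consistency: if \(\Lambda\nk{\bar X}=\Lambda\nk{\bar Y}\) in \(\cX\), then, since \(\Lambda=\rho\vert_{\bR^{n}}\), \(\rho\nk{\bar X\nk{\omega}}=\rho\nk{\bar Y\nk{\omega}}\) a.s., and~\ref{sysprefcons} applied in both directions yields \(\rho\nk{\bar X}=\rho\nk{\bar Y}\). By construction \(\rhosing\circ\Lambda=\rho\). It remains to verify that \(\rhosing\) is a convex (resp.\ positively homogeneous) single-firm risk measure with constancy on \(\cR=\Lambda\nk{\bR^{n}}\). Monotonicity~\ref{singlemonoton} again follows from~\ref{sysprefcons} (if \(\Lambda\nk{\bar X}\succcurlyeq\Lambda\nk{\bar Y}\) then \(\rho\nk{\bar X\nk{\omega}}\ge\rho\nk{\bar Y\nk{\omega}}\) a.s.). For convexity~\ref{singleconvex}, given \(X=\Lambda\nk{\bar X}\), \(Y=\Lambda\nk{\bar Y}\) and \(\alpha\in\zeroone\), the element \(\alpha X+\nk{1-\alpha}Y\) lies in \(\cX=\Lambda\nk{\productspace}\), so we may choose \(\bar Z\in\productspace\) with \(\Lambda\nk{\bar Z}=\alpha\Lambda\nk{\bar X}+\nk{1-\alpha}\Lambda\nk{\bar Y}\); pointwise this says \(\rho\nk{\bar Z\nk{\omega}}=\alpha\rho\nk{\bar X\nk{\omega}}+\nk{1-\alpha}\rho\nk{\bar Y\nk{\omega}}\) a.s., so risk convexity~\ref{sysconvexrisk} gives \(\rhosing\nk{\alpha X+\nk{1-\alpha}Y}=\rho\nk{\bar Z}\le\alpha\rho\nk{\bar X}+\nk{1-\alpha}\rho\nk{\bar Y}=\alpha\rhosing\nk{X}+\nk{1-\alpha}\rhosing\nk{Y}\). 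Constancy~\ref{singleconstancy} holds since for \(m\in\cR\) we may take \(\bar x_{m}\in\bR^{n}\) with \(\Lambda\nk{\bar x_{m}}=m\), giving \(\rhosing\nk{m}=\rho\nk{\bar x_{m}}=\Lambda\nk{\bar x_{m}}=m\); and positive homogeneity~\ref{singleposhom} follows from \(\Lambda\nk{\alpha\bar X}=\alpha\Lambda\nk{\bar X}\) and~\ref{sysposhom}. In the convex case \(\Lambda\nk{\productspace}=\cX\), so \(\rhosing\) is already defined on all of \(\cX\); in the positively homogeneous case one extends \(\rhosing\) from \(\Lambda\nk{\productspace}\) to \(\cX\) while preserving monotonicity, convexity, positive homogeneity and constancy.

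The conceptual core is the pair of facts that preference consistency~\ref{sysprefcons} is exactly what renders the definition \(\rhosing\nk{\Lambda\nk{\bar X}}:=\rho\nk{\bar X}\) unambiguous, and that the surjectivity \(\Lambda\nk{\productspace}=\cX\) is exactly what produces, for given \(\bar X,\bar Y\), a ``witness'' \(\bar Z\) realizing \(\alpha\Lambda\nk{\bar X}+\nk{1-\alpha}\Lambda\nk{\bar Y}\) pointwise, thereby enabling the use of risk convexity~\ref{sysconvexrisk} to conclude convexity of \(\rhosing\). Everything else is bookkeeping translating between the \(\rho\)-description on \(\productspace\) and the \(\nk{\Lambda,\rhosing}\)-description; the one step where genuine care is needed is the extension of \(\rhosing\) to all of \(\cX\) in the positively homogeneous case (which is harmless for all concrete aggregation rules of interest, where \(\Lambda\nk{\productspace}=\cX\) anyway).
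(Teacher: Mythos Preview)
Your proof is correct and reproduces exactly the argument of \cite{kromer2016systemic}, Theorem~3.1 and Corollary~3.2, which is all the paper invokes in place of a proof. The only step you leave informal---the extension of \(\rhosing\) from \(\Lambda\nk{\productspace}\) to all of \(\cX\) when \(\Lambda\nk{\productspace}\subsetneq\cX\)---is likewise not spelled out here, and your acknowledgement of it is appropriate.
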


\begin{proof}
 See~\cite{kromer2016systemic} Theorem 3.1 and Corollary 3.2.
\end{proof}

Theorem~\ref{theo:sysriskcomposition} can be seen as a blueprint for the construction of systemic risk measures of type~\eqref{eq:syskromer}. Additionally, if a given mapping satisfies the desired properties the proof yields an instruction to find \(\rhosing{}\) and \(\Lambda{}\). Let us continue by presenting some examples.

\begin{ex}\label{ex:sysentropic}
  Consider a setting in which the regulator uses the exponential loss function to express his preference. The suitable single firm risk measure  \(\rhosing{}\) in this setup is clearly the entropic risk measure given by
  \begin{equation}\label{eq:signleentropic}
   \rhosing^{entr}\nk{X}
   =\frac{1}{\theta{}}\ln\bE\ek{\exp\nk{\theta{}X}},\qquad\theta{}>0.
  \end{equation}
  The suitable domain for this risk measure is the Orlicz heart \(\Mexp{}\) with the Young function \(\phi^{\exp}\nk{x}:=\exp\nk{\abs{x}}-1\) (see~\cite{follmer2015axiomatic}). Note that in this example the existence of a base probability measure \(\bP{}\) is assumed. The parameter \(\theta{}\) reflects the risk aversion of the operator, smaller \(\theta{}\) means lower risk aversion. If we use the entropic risk measure as a building block for a systemic risk measure, \(\theta{}\) can be understood as a systemic risk aversion parameter for the whole system. It gives the regulator the opportunity to adjust the risk measure. A suitable way to derive such a systemic risk aversion parameter is to simply assign an individual risk aversion parameter \(\alpha_{i}\) to each participant of the system. Note that \(\alpha_{i}\) reflects the risk aversion from the point of view of the regulator, i.e.\ risk seeking participants will receive bigger \(\alpha{}\). Now the systemic risk aversion parameter has to act contrary to the risk aversion of the system and its participants. Risk averse participants of the system give the regulator the opportunity to be more flexible with the systemic risk measure, whereas risk friendly participants yield a conservative systemic risk measurement. Therefore, we set
  \[
    \theta=\frac{1}{\sum_{i=1}^{n}\frac{1}{\alpha_{i}}}.
  \]
  We are now able to construct different systemic risk measures by choosing different ARs\@. If we choose the ARs presented in Example~\ref{ex:arlossesonly}, we obtain for \(\Lambda^{\text{sum}}\)
  \begin{equation}\label{eq:sysentropicsum}
   \rho^{\text{ses}}\nk{\bar{X}}=\nk{\rhosing^{entr}\circ\Lambda^{\text{sum}}}\nk{\bar{X}}=\frac{1}{\theta}\ln\bE\ek{\exp\nk{\theta\sum_{i=1}^{n}X_{i}}},
  \end{equation}
  for \(\Lambda^{\text{sum,c}}\)
  \begin{equation}\label{eq:sysentropicsumwithshift}
   \begin{split}
     \rho^{\text{ses,c}}\nk{\bar{X}}
     &=\nk{\rhosing^{entr}\circ\Lambda^{\text{sum,c}}}\nk{\bar{X}}=\frac{1}{\theta}\ln\bE\ek{\exp\nk{\theta\nk{\sum_{i=1}^{n}X_{i}-c}}}\\
     &=\frac{1}{\theta}\ln\bE\ek{\exp\nk{\theta\sum_{i=1}^{n}X_{i}}}-c=\nk{\rhosing^{entr}\circ\Lambda^{\text{sum}}}\nk{\bar{X}}-c,
   \end{split}
  \end{equation}
  and for \(\Lambda^{\text{loss}}\)
  \begin{equation}\label{eq:sysentropicloss}
   \rho^{\text{sel}}\nk{\bar{X}}=\nk{\rhosing^{entr}\circ\Lambda^{\text{loss}}}\nk{\bar{X}}=\frac{1}{\theta}\ln\bE\ek{\exp\nk{\theta\sum_{i=1}^{n}X_{i}^{+}}}.
  \end{equation}
 \end{ex}

\begin{ex}\label{ex:biaginiinkromersetting}
 In this example we use the same AR and the single-firm risk measure corresponding to the acceptance set presented in~\cite{biagini2020fairness}, section 6. They are given as follows:
 \begin{align*}
  \Lambda^{\text{exut}}\nk{\bar{x}}=\sum_{i=1}^{n}\frac{1}{\alpha_{i}}\exp\nk{\alpha_{i}x_{i}} &  & \text{and} &  & \rhosing\nk{X}=\rho_{\cA}\nk{X}=\inf\gk{m\in\bR\mid{X}-m\in\cA},
 \end{align*}
 where the set \(\cA{}\) is given by
 \[
  \cA=\gk{Z\in\Lone\probspace{}\mid\bE\ek{Z}\leq{B}}
 \]
 for a constant \(B<\infty{}\) and \(\alpha_{i}>0\), \(i\in\gk{\oneton{}}\). Obviously, \(\rhosing{}\) is a convex monetary risk measure. Moreover, it is coherent if \(B=0\), and it has the representation
 \[
  \rhosing\nk{X}=\bE\ek{X}-B.
 \]
 The corresponding systemic risk measure is given by
 \begin{equation}\label{eq:biaginiinkromersetting}
  \rho\nk{\bar{X}}=\sum_{i=1}^{n}\frac{1}{\alpha_{i}}\bE\ek{\exp\nk{\alpha_{i}X_{i}}}-B
 \end{equation}
\end{ex}

As for single-firm risk measures, primal and dual representation results are of major interest. In this context the epigraphs of \(\rhosing{}\) and \(\Lambda{}\), given by
\begin{align}
 \cA_{\rhosing} 
 &=\gk{\nk{m,X}\in\bR\times\cX\mid{m}\geq\rhosing\nk{X}}\label{eq:sysacceptancerho},                \\
 \cA_{\Lambda} 
 &=\gk{\nk{Y,\bar{Z}}\in\cX\times\productspace\mid{Y}\succcurlyeq\Lambda(\bar{Z})},\label{eq:sysacceptancelambda}
\end{align}
are of major interest. For systemic risk measures \(\rho=\rhosing\circ\Lambda{}\) with a convex single firm risk measure \(\rhosing\colon\cX\to\bR\cup\gk{\infty}\) that satisfies the constancy property~\ref{singleconstancy} on \(\cR=\Lambda\nk{\bR^{n}}\) and a convex AR \(\Lambda\colon\bR^{n}\to\bR{}\), these sets appear to have the monotonicity and the epigraph property.

\begin{defi}\label{def:monotonicityandepigraphproperty}
 Let \(V\times{W}\subseteq\cX\times\productspace{}\) be two linear spaces. A set \(\cS\subset{V}\times{W}\) satisfies the monotonicity property if \(\nk{v,w_1}\in\cS,\ w_2\in{W}\) and \(w_1\succcurlyeq{w}_2\) imply \(\nk{v,w_2}\in\cS{}\).
 A set \(\cS\subset{V}\times{W}\) satisfies the epigraph property \(\nk{v_1,w}\in\cS,\ v_2\in{V}\) and \(v_2\succcurlyeq{v}_1\) imply \(\nk{v_2,w}\in\cS{}\).
\end{defi}

\begin{prop}\label{prop:sysprimal}
 Suppose that \(\rho=\rhosing\circ\Lambda{}\) is a systemic risk measure with convex single firm risk measure \(\rhosing:\cX\to\bR\cup\gk{\infty}\) that satisfies the constancy property~\ref{singleconstancy} on \(\cR=\Lambda(\bR^n)\) and convex AR \(\Lambda:\bR^n\to\bR{}\).
 Denote by \(\cA_{\rhosing}\) (resp. \(\cA_{\Lambda}\)) the corresponding epigraphs.
 \begin{compactenum}[(i)]
  \item The sets \(\cA_{\rhosing}\) and \(\cA_{\Lambda}\) satisfy the following properties:
  \begin{compactenum}[(a)]
   \item \(\cA_{\rhosing}\) and \(\cA_{\Lambda}\) satisfy the monotonicity property.
   \item \(\cA_{\rhosing}\) and \(\cA_{\Lambda}\) satisfy the epigraph property.
   \item \(\cA_{\rhosing}\) and \(\cA_{\Lambda}\) are convex sets.
   \item \(\nk{a,a}\in\cA_{\rhosing}\) with \(\inf\gk{r\in\bR:\nk{r,a}\in\cA_{\rhosing}}=a\) for all \(a\in\text{Im}\Lambda{}\)
  \end{compactenum}
  If \(\rho=\rhosing\circ\Lambda{}\) is a positively homogeneous systemic risk measure, then the following properties are additionally satisfied:
  \begin{compactenum}[(a)]
   \setcounter{enumii}{4}
   \item \(\cA_{\rhosing}\) and \(\cA_{\Lambda}\) are convex cones.
  \end{compactenum}
  \item For any \(\bar{X}\in\productspace{}\), \(\rho{}\) admits the primal representation
  \begin{equation}\label{eq:sysprimal}
   \rho\nk{\bar{X}}=\inf\gk{m\in\bR\mid\nk{m,Y}\in\cA_{\rhosing},\nk{Y,\bar{X}}\in\cA_{\Lambda}}
  \end{equation}
  where we set \(\inf\emptyset:=\infty{}\).
 \end{compactenum}
\end{prop}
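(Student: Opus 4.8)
The plan is to prove part~(i) by a direct check of each listed property against the explicit descriptions~\eqref{eq:sysacceptancerho}--\eqref{eq:sysacceptancelambda} of $\cA_{\rhosing}$ and $\cA_{\Lambda}$, invoking nothing beyond the axioms of a convex (resp.\ positively homogeneous) single-firm risk measure and of a convex (resp.\ positively homogeneous) AR; and to prove part~(ii) by the usual two-inequality sandwich, exploiting that $Y:=\Lambda(\bar X)$ is always an admissible choice of the auxiliary variable in~\eqref{eq:sysprimal}.

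For (i)(a)--(c): $\cA_{\rhosing}$ is by definition the epigraph of $\rhosing$, so convexity is immediate from~\ref{singleconvex}; the monotonicity property of Definition~\ref{def:monotonicityandepigraphproperty} holds because $X_1\succcurlyeq X_2$ forces $\rhosing(X_1)\geq\rhosing(X_2)$ by~\ref{singlemonoton}, so $m\geq\rhosing(X_1)\geq\rhosing(X_2)$; and the epigraph property is trivial from $m_2\geq m_1\geq\rhosing(X)$. For $\cA_{\Lambda}$ the same three properties follow from the corresponding properties of $\Lambda$ applied \emph{pointwise} to the maps $\Lambda\circ\bar Z\in\cX$: monotonicity~\ref{armonoton} gives $\bar Z_1\succcurlyeq\bar Z_2\Rightarrow\Lambda(\bar Z_1)\succcurlyeq\Lambda(\bar Z_2)$ in $\cX$, whence the monotonicity property; the epigraph property is again immediate; and convexity~\ref{arconvex} yields, for $\alpha\in\zeroone$, $\alpha Y_1+(1-\alpha)Y_2\succcurlyeq\alpha\Lambda(\bar Z_1)+(1-\alpha)\Lambda(\bar Z_2)\succcurlyeq\Lambda(\alpha\bar Z_1+(1-\alpha)\bar Z_2)$. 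Property (d) is exactly the constancy assumption~\ref{singleconstancy}: for $a\in\text{Im}\,\Lambda=\Lambda(\bR^n)=\cR$, viewing $a$ as a constant in $\cX$, one has $\rhosing(a)=a$, hence $(a,a)\in\cA_{\rhosing}$ and $\inf\gk{r\in\bR:\nk{r,a}\in\cA_{\rhosing}}=\rhosing(a)=a$. For (e), under~\ref{singleposhom} (resp.~\ref{arposhom}) multiplication by any $\lambda>0$ preserves membership, since $m\geq\rhosing(X)\Leftrightarrow\lambda m\geq\rhosing(\lambda X)$ and likewise for $\Lambda$; together with $\rhosing(0)=0$ (from~\ref{singleconstancy}, $0\in\cR$), $\Lambda(0)=0$ (from~\ref{arposhom}, with the zero vector of $\productspace$), and the convexity already established in (c), both sets are convex cones.

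For (ii), write $r(\bar X)$ for the right-hand side of~\eqref{eq:sysprimal}. The bound $r(\bar X)\leq\rho(\bar X)$ follows by taking $Y:=\Lambda(\bar X)$, which lies in $\cX$ because $\Lambda$ is an AR (Definition~\ref{def:ar}): then $(Y,\bar X)\in\cA_{\Lambda}$ since $Y\succcurlyeq Y$, and $(m,Y)\in\cA_{\rhosing}$ exactly for $m\geq\rhosing(\Lambda(\bar X))=\rho(\bar X)$, so the infimum over such $m$ equals $\rho(\bar X)$. Conversely, any admissible pair $(m,Y)$ in~\eqref{eq:sysprimal} satisfies $Y\succcurlyeq\Lambda(\bar X)$, hence $m\geq\rhosing(Y)\geq\rhosing(\Lambda(\bar X))=\rho(\bar X)$ by~\ref{singlemonoton}; taking the infimum gives $r(\bar X)\geq\rho(\bar X)$. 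The case $\rho(\bar X)=\infty$ is consistent: monotonicity forces $\rhosing(Y)=\infty$ for every $Y\succcurlyeq\Lambda(\bar X)$, so the feasible set in~\eqref{eq:sysprimal} is empty and $r(\bar X)=\inf\emptyset=\infty$.

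I expect no genuine obstacle here; the points demanding care are (1) that every property of $\Lambda$ must be used \emph{pointwise} on the aggregated loss $\Lambda\circ\bar Z\in\cX$ and not merely on $\Lambda$ as a map on $\bR^{n}$, so that the chain of $\succcurlyeq$-inequalities in the convexity and cone arguments for $\cA_{\Lambda}$ is legitimate in the partial order of $\cX$; (2) that the constant functions entering (d) and (e) really belong to $\cX$, which is implicit once $\rhosing$ is required to satisfy the constancy property~\ref{singleconstancy} on $\cR$; and (3) keeping track of the $\infty$-valued cases when manipulating the epigraph inequalities.
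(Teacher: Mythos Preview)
Your proof is correct and complete; it is the standard direct verification that the cited reference (\cite{kromer2016systemic}, Proposition~4.2) carries out, since the paper itself only points there rather than giving its own argument. Your care in applying the AR axioms pointwise in $\omega$ so that the $\succcurlyeq$-inequalities hold in $\cX$, and your handling of the $\infty$ case in~(ii), are exactly what is needed.
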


\begin{proof}
 See~\cite{kromer2016systemic} Proposition 4.2.
\end{proof}

The next step is to use the primal representation and find its corresponding dual problem. This technique yields the so-called dual representation.

\begin{theo}\label{theo:sysdualconvex}
  Suppose that \(\rho=\rhosing\circ\Lambda{}\) is a systemic risk measure with a lsc convex single firm risk measure \(\rhosing:\cX\to\bR\cup\gk{\infty}\) that satisfies the constancy property~\ref{singleconstancy} on \(\cR=\Lambda(\bR^n)\) and a convex AR \(\Lambda:\bR^n\to\bR{}\) that is continuous on \(\productspace{}\). Then for any \(\bar{X}\in\productspace{}\)
 \begin{equation}\label{eq:sysdualconvex}
  \rho\nk{\bar{X}}=\sup_{\xixi\in\cD}\gk{\npairing{\bar{X}}{\Xi}-\alpha\xixi},
 \end{equation}
 where \(\alpha\colon\cX^{\prime}\times\dualproductspace\to\bR\cup\gk{\infty}\) is given by
 \begin{equation}
  \begin{split}
   \alpha\xixi
   &=\alpha^{\rhosing}\nk{\xi}+\alpha^{\Lambda}\xixi{}\\
   &=\sup_{\nk{m,Y}\in\cA_{\rhosing}}\gk{-m+\pairing{Y}{\xi}}+\sup_{\nk{V,\bar{Z}}\in\cA_{\Lambda}}\gk{-\pairing{V}{\xi}+\npairing{\bar{Z}}{\Xi}}\\
   &=\sup_{\nk{m,Y}\in\cA_{\rhosing},\nk{V,\bar{Z}}\in\cA_{\Lambda}}\gk{-m+\pairing{Y-V}{\xi}+\npairing{\bar{Z}}{\Xi}}
  \end{split}
 \end{equation}
 and \(\cD:=\gk{\xixi\in\cX^{\prime}\times\dualproductspace\mid\alpha\xixi<\infty}\). In addition, feasible solutions \(\xixi{}\) of the optimization problem~\eqref{eq:sysdualconvex} are non-negative and the \(\xi{}\)-component of a feasible solution satisfies \(\pairing{1}{\xi}=1\) in case of \(\rho\nk{\bR^n}=\bR{}\)
 and \(\pairing{1}{\xi}\leq1\) in case of \(\rho\nk{\bR^n}=\bR_+\).
\end{theo}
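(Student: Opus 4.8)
The plan is to obtain~\eqref{eq:sysdualconvex} from two successive applications of the Fenchel--Moreau biconjugation theorem, one for the single-firm risk measure \(\rhosing{}\) and one for the aggregation map; this is cleaner than, but equivalent to, a direct Lagrangian dualization of the primal program~\eqref{eq:sysprimal}.

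First I would reduce and fix notation. By monotonicity~\ref{singlemonoton} of \(\rhosing{}\) the infimum over \(Y\) in~\eqref{eq:sysprimal} is attained at \(Y=\Lambda\nk{\bar X}\), so \(\rho\nk{\bar X}=\rhosing\nk{\Lambda\nk{\bar X}}\); and for \(\xi\in\cX^{\prime}\) I put \(g_{\xi}:=\pairing{\Lambda\nk{\cdot}}{\xi}\colon\productspace\to\bR{}\). A short computation from the epigraphs~\eqref{eq:sysacceptancerho},~\eqref{eq:sysacceptancelambda} then identifies \(\alpha^{\rhosing}\nk{\xi}=\rhosing^{*}\nk{\xi}\) and, for \(\xi\in\cX^{\prime}_{+}\), \(\alpha^{\Lambda}\xixi=g_{\xi}^{*}\nk{\Xi}\) (the optimal \(V\) in the supremum defining \(\alpha^{\Lambda}\) is \(\Lambda\nk{\bar Z}\) because \(\pairing{\cdot}{\xi}\) is monotone), while \(\alpha^{\Lambda}\xixi=\infty\) for \(\xi\notin\cX^{\prime}_{+}\) by assumption~\ref{assumptionc}, letting the \(V\)-component run off to \(-\infty\) along a direction on which \(\xi\) is negative.

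The core of the argument is as follows. Since \(\rhosing{}\) is convex, lower semicontinuous in any topology consistent with \(\pairing{\cX}{\cX^{\prime}}\), and proper (as \(0\in\cR{}\) and~\ref{singleconstancy} forces \(\rhosing\nk{0}=0\)), Fenchel--Moreau gives \(\rho\nk{\bar X}=\rhosing\nk{\Lambda\nk{\bar X}}=\sup_{\xi\in\cX^{\prime}}\gk{g_{\xi}\nk{\bar X}-\rhosing^{*}\nk{\xi}}\); monotonicity of \(\rhosing{}\) together with~\ref{assumptionc} makes \(\rhosing^{*}\nk{\xi}=\infty\) off \(\cX^{\prime}_{+}\), so this supremum ranges over \(\xi\in\cX^{\prime}_{+}\) only. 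For such \(\xi\) the function \(g_{\xi}\) is real-valued, convex (\(\Lambda\) convex and \(\pairing{\cdot}{\xi}\) monotone) and continuous on \(\productspace{}\) (\(\Lambda\) continuous there and \(\xi\in\cX^{\prime}\)), so a second Fenchel--Moreau step in \(\pairing{\productspace}{\dualproductspace}\) yields \(g_{\xi}\nk{\bar X}=\sup_{\Xi}\gk{\npairing{\bar X}{\Xi}-g_{\xi}^{*}\nk{\Xi}}\). Substituting, exchanging the two suprema, and invoking the identifications above gives \(\rho\nk{\bar X}=\sup_{\xi\in\cX^{\prime}_{+},\,\Xi}\gk{\npairing{\bar X}{\Xi}-\rhosing^{*}\nk{\xi}-g_{\xi}^{*}\nk{\Xi}}=\sup_{\xixi\in\cD}\gk{\npairing{\bar X}{\Xi}-\alpha\xixi}\), the pairs with \(\alpha\xixi=\infty\) being harmlessly discarded (they contribute \(-\infty\), and \(\cD\neq\emptyset\) since \(\rhosing^{*}\) and each \(g_{\xi}^{*}\) are proper). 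This is~\eqref{eq:sysdualconvex}.

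For the addenda I would argue: \(\xi\in\cX^{\prime}_{+}\) is exactly the claimed non-negativity of the \(\xi\)-component; if \(\Xi\notin\nk{\dualproductspace}_{+}\), pick \(\bar E\in\nk{\productspace}_{+}\) with \(\npairing{\bar E}{\Xi}<0\) (again~\ref{assumptionc}) and evaluate \(\alpha^{\Lambda}\xixi=\sup_{\bar Z}\gk{\npairing{\bar Z}{\Xi}-\pairing{\Lambda\nk{\bar Z}}{\xi}}\) along \(\bar Z=-t\bar E\), \(t\to\infty\), using monotonicity~\ref{armonoton} of \(\Lambda\) (pointwise) to keep \(\pairing{\Lambda\nk{-t\bar E}}{\xi}\) bounded above, forcing \(\alpha^{\Lambda}\xixi=\infty\) and hence \(\xixi\notin\cD\); and since by Theorem~\ref{theo:sysriskcomposition} the constancy set is \(\cR=\Lambda\nk{\bR^{n}}=\rho\nk{\bR^{n}}\), so that \(\rhosing{}\) restricts to the identity on \(\cR{}\), the estimate \(\rhosing^{*}\nk{\xi}\geq\sup_{m\in\cR}\gk{m\pairing{1}{\xi}-m}\) forces \(\pairing{1}{\xi}=1\) when \(\cR=\bR{}\) and \(\pairing{1}{\xi}\leq1\) when \(\cR=\bR_{+}\). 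The only genuinely delicate point I expect is the passage from \(\cX^{\prime}\) to \(\cX^{\prime}_{+}\) in the first biconjugation, which is essential precisely because \(g_{\xi}\) is convex \-- hence amenable to the second Fenchel--Moreau step \-- only for non-negative \(\xi\); everything else is the conjugate bookkeeping of the setup step and routine use of the hypotheses, the continuity of \(\Lambda\) and the lower semicontinuity of \(\rhosing{}\) being exactly what the two biconjugations need.
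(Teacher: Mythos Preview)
Your argument is correct. The paper does not give its own proof but cites \cite{kromer2016systemic}, Theorem~4.3, which proceeds differently: there the starting point is the primal representation~\eqref{eq:sysprimal}, viewed as a convex program over the closed convex sets \(\cA_{\rhosing}\) and \(\cA_{\Lambda}\), and~\eqref{eq:sysdualconvex} is obtained by Lagrangian duality for that constrained infimum, closedness of the two epigraphs (from lower semicontinuity of \(\rhosing\) and continuity of \(\Lambda\)) being what delivers strong duality. Your route bypasses the primal program and exploits the composition \(\rho=\rhosing\circ\Lambda\) directly via two nested Fenchel--Moreau steps, so that the penalty pieces \(\alpha^{\rhosing}=\rhosing^{*}\) and \(\alpha^{\Lambda}(\xi,\cdot)=g_{\xi}^{*}\) appear as honest conjugates without ever writing down the constraint sets; the cost is that you must isolate the restriction to \(\cX^{\prime}_{+}\) by hand (precisely so that \(g_{\xi}\) is convex and the second biconjugation applies), whereas in the Lagrangian approach non-negativity of the dual variables drops out of the monotonicity and epigraph properties of \(\cA_{\rhosing}\) and \(\cA_{\Lambda}\) in one stroke. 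One small slip: to force \(\alpha^{\Lambda}\xixi=\infty\) for \(\xi\notin\cX^{\prime}_{+}\) you must send \(V\) to \(+\infty\) along \(E\in\cX_{+}\) with \(\pairing{E}{\xi}<0\) (so that \((V,\bar Z)\) stays in \(\cA_{\Lambda}\) by the epigraph property), not to \(-\infty\); the conclusion is of course unchanged since \(-\pairing{V}{\xi}\to+\infty\) regardless.
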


\begin{proof}
 See~\cite{kromer2016systemic} Theorem 4.3.
\end{proof}

\begin{rem}\label{rem:specialcasesforthedualrepresentation}
  In the situation where at least one of the components, \(\rhosing{}\) or \(\Lambda{}\), is positively homogeneous we are able to concretize the situations where we have \(\alpha\xixi<\infty{}\). For this purpose let 

\begin{align*}
  \cA^{\prime}_{\rhosing}&:=\gk{\nk{x,\psi}\in\bR\times\cX^{\prime}\mid{mx}-\pairing{Y}{\psi}\geq0\forall\nk{m,Y}\in\cA_{\rhosing}},\\
  \cA^{\prime}_{\Lambda}&:=\gk{\xixi\in\cX^{\prime}\times\dualproductspace\mid\pairing{Y}{\xi}-\npairing{\bar{Z}}{\Xi}\geq0\forall\nk{Y,\bar{Z}}\in\cA_{\Lambda}}.
\end{align*}

Up to a sign change, these sets are the dual cones to \(\cA_{\rhosing}\) and \(\cA_{\Lambda}\). Now, if \(\rhosing{}\) is positively homogeneous we have 
\[
  \alpha^{\rhosing}\nk{\xi}=
  \begin{cases}
    0,&\nk{1,\xi}\in\cA^{\prime}_{\rhosing},\\
    \infty,&\nk{1,\xi}\not\in\cA^{\prime}_{\rhosing},
  \end{cases}
\]
and if \(\Lambda{}\) is positively homogeneous we have 
\[
  \alpha^{\Lambda}\xixi=
  \begin{cases}
    0,&\xixi\in\cA^{\prime}_{\Lambda},\\
    \infty,&\xixi\not\in\cA^{\prime}_{\Lambda}.
  \end{cases}
\]
\end{rem}

Consequently, the observations in Remark~\ref{rem:specialcasesforthedualrepresentation} yield to a simplified dual representation for positively homogeneous systemic risk measures.

\begin{theo}\label{theo:sysdualposhom}
 Suppose that \(\rho=\rhosing\circ\Lambda{}\) is a positively homogeneous systemic risk measure with a lsc positively homogeneous single-firm risk measure \(\rhosing{}\) that satisfies constancy on \(\Lambda\nk{\bR^{n}}\) and a positively homogeneous AR \(\Lambda{}\) that is continuous on \(\productspace{}\). Then for any \(\bar{X}\in\productspace{}\)
 \begin{equation}\label{eq:sysdualposhom}
  \rho\nk{\bar{X}}=\sup_{\cV^{\#}}\npairing{\bar{X}}{\Xi}
 \end{equation}
 where the set \(\cV^{\#}\) is defined by
 \begin{equation}\label{eq:sysdualposhomvhash}
  \cV^{\#}:=\gk{\xixi\in\cX^{\prime}\times\dualproductspace\mid\nk{1,\xi}\in\cA^{`}_{\rhosing},\xixi\in\cA^{`}_{\Lambda}}.
 \end{equation}
 In addition, feasible solutions \(\xixi{}\) of the optimization problem~\eqref{eq:sysdualposhom} are non-negative and the \(\xi{}\)-component of a feasible solution satisfies \(\pairing{1}{\xi}=1\) in case of \(\rho\nk{\bR^n}=\bR{}\)
 and \(\pairing{1}{\xi}\leq1\) in case of \(\rho\nk{\bR^n}=\bR_+\) and the \(\Xi{}\)-component satisfies \(\npairing{1_n}{\Xi}\leq{n}\).
\end{theo}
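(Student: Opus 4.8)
The plan is to obtain the statement as a specialization of the general convex dual representation in Theorem~\ref{theo:sysdualconvex}, using the collapse of the penalty term recorded in Remark~\ref{rem:specialcasesforthedualrepresentation}. All hypotheses of Theorem~\ref{theo:sysdualconvex} are in force here (the single-firm risk measure is lsc and convex and constant on \(\Lambda\nk{\bR^{n}}\), and \(\Lambda\) is a convex AR continuous on \(\productspace\)), so we already have
\[
  \rho\nk{\bar{X}}=\sup_{\xixi\in\cD}\gk{\npairing{\bar{X}}{\Xi}-\alpha\xixi},\qquad \alpha\xixi=\alpha^{\rhosing}\nk{\xi}+\alpha^{\Lambda}\xixi,
\]
with \(\cD=\gk{\xixi\mid\alpha\xixi<\infty}\). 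It then remains only to evaluate \(\alpha\) under the additional positive homogeneity.

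First I would invoke Remark~\ref{rem:specialcasesforthedualrepresentation}: since \(\rhosing\) is positively homogeneous, \(\alpha^{\rhosing}\nk{\xi}\) equals \(0\) if \(\nk{1,\xi}\in\cA^{\prime}_{\rhosing}\) and \(\infty\) otherwise; since \(\Lambda\) is positively homogeneous, \(\alpha^{\Lambda}\xixi\) equals \(0\) if \(\xixi\in\cA^{\prime}_{\Lambda}\) and \(\infty\) otherwise. Hence the sum \(\alpha\xixi\) is \(\gk{0,\infty}\)-valued, vanishes exactly on the set \(\cV^{\#}\) of~\eqref{eq:sysdualposhomvhash}, and equals \(\infty\) off \(\cV^{\#}\). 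Consequently \(\cD=\cV^{\#}\), the penalty term drops out of the supremum, and~\eqref{eq:sysdualconvex} collapses to \(\rho\nk{\bar{X}}=\sup_{\xixi\in\cV^{\#}}\npairing{\bar{X}}{\Xi}\), which is~\eqref{eq:sysdualposhom}.

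The supplementary assertions are read off cheaply. Non-negativity of feasible \(\xixi\) and the normalization of the \(\xi\)-component (\(\pairing{1}{\xi}=1\) when \(\rho\nk{\bR^n}=\bR\), \(\pairing{1}{\xi}\leq1\) when \(\rho\nk{\bR^n}=\bR_{+}\)) are precisely the corresponding statements of Theorem~\ref{theo:sysdualconvex}, applied to \(\cD=\cV^{\#}\). For the new bound \(\npairing{1_n}{\Xi}\leq n\) I would test the defining inequality of \(\cA^{\prime}_{\Lambda}\) against one explicit element of \(\cA_{\Lambda}\): by normalization~\ref{arnormal} we have \(\Lambda\nk{1_n}=n\), so the constant pair \(\nk{n,1_n}\) lies in \(\cA_{\Lambda}\) (indeed \(n\succcurlyeq\Lambda\nk{1_n}\)); substituting it into \(\pairing{Y}{\xi}-\npairing{\bar{Z}}{\Xi}\geq0\) gives \(n\pairing{1}{\xi}\geq\npairing{1_n}{\Xi}\), and combining with \(\pairing{1}{\xi}\leq1\) yields \(\npairing{1_n}{\Xi}\leq n\).

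I do not expect a genuine obstacle; the argument is bookkeeping on top of Theorem~\ref{theo:sysdualconvex}. The only points needing a little care are that the two \(\gk{0,\infty}\)-valued penalty pieces are combined so that \(\cD\) is recognized as \emph{precisely} \(\cV^{\#}\), and that \(\Lambda\nk{1_n}=n\) together with the definition of \(\cA_{\Lambda}\) indeed certifies \(\nk{n,1_n}\in\cA_{\Lambda}\) (which requires the constant function \(n\) to lie in \(\cX\), a consequence of \(\Lambda\nk{\productspace}\subseteq\cX\) applied to \(1_n\in\productspace\)); both are immediate from the definitions.
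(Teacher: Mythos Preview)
Your argument is correct and is exactly the intended specialization: the paper does not spell out its own proof but simply cites \cite{kromer2016systemic}, Theorem~4.7, and that reference obtains the result precisely by collapsing the penalty in the convex dual representation via the observation recorded here as Remark~\ref{rem:specialcasesforthedualrepresentation}. One small caveat: a positively homogeneous AR in the sense of Definition~\ref{def:ar} is not literally a ``convex AR'' (it need not satisfy~\ref{arrsurjectivity} nor \(\Lambda\nk{\productspace}=\cX\)), so when you say the hypotheses of Theorem~\ref{theo:sysdualconvex} are in force you are implicitly using that its proof only requires monotonicity, convexity and continuity of \(\Lambda\); this is indeed the case in \cite{kromer2016systemic}, but it is worth flagging rather than asserting outright.
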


\begin{proof}
 See~\cite{kromer2016systemic} Theorem 4.7.
\end{proof}

The following theorem and its corollary describe the relation between the dual representation of a positively homogeneous systemic risk measure and its directional derivative.

\begin{theo}\label{theo:sysdirectionalforposhom}
  Suppose that \(\rho=\rhosing\circ\Lambda{}\) is a positively homogeneous systemic risk measure with a lsc positively homogeneous single-firm risk measure \(\rhosing{}\) that satisfies constancy on \(\Lambda\nk{\bR^{n}}\) and a positively homogeneous AR \(\Lambda{}\) that is continuous on \(\productspace{}\). Then the directional derivative of \(\rho{}\) at \(\bar{X}\) in the direction of \(\bar{Y}\) exists and is given by
 \begin{equation}
  \delta_{+}\rho\nk{\bar{X},\bar{Y}}=\max_{\xixi\in\cV^{\#}(\bar{X})}\npairing{\bar{Y}}{\Xi},
 \end{equation}
 where the set \(\cV^{\#}(\bar{X})\) is defined by
 \[
  \cV^{\#}(\bar{X}):=\gk{\xixi\in\cV^{\#}\mid\rho\nk{\bar{X}}=\npairing{\bar{X}}{\Xi}}.
 \]
\end{theo}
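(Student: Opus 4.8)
The plan is to identify $\delta_{+}\rho(\bar X,\cdot)$ with the support function of the set of dual maximisers appearing in Theorem~\ref{theo:sysdualposhom}. First note that $\rho=\rhosing\circ\Lambda$ is convex by~\ref{sysconvexoutcome}, positively homogeneous by~\ref{sysposhom}, hence sublinear, and lower semicontinuous since $\rhosing$ is lsc and $\Lambda$ is continuous on $\productspace$; therefore $\rho(\bar Z)=\sup_{\xixi\in\cV^{\#}}\npairing{\bar Z}{\Xi}$ for every $\bar Z\in\productspace$ by Theorem~\ref{theo:sysdualposhom}. Throughout I assume $\rho(\bar X)<\infty$ (otherwise $\cV^{\#}(\bar X)=\emptyset$ and nothing is claimed).

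Existence of the directional derivative is the usual convexity argument: for fixed $\bar X,\bar Y$ the difference quotient $q(t):=\bigl(\rho(\bar X+t\bar Y)-\rho(\bar X)\bigr)/t$ is non-decreasing in $t>0$, and subadditivity together with positive homogeneity gives the bounds $-\rho(-\bar Y)\le q(t)\le\rho(\bar Y)$, so $\delta_{+}\rho(\bar X,\bar Y)=\inf_{t>0}q(t)=\lim_{t\downarrow0}q(t)$ exists. It is convenient to record that $h(\bar Z):=\delta_{+}\rho(\bar X,\bar Z)$ is again sublinear, satisfies $h\le\rho$ on $\productspace$, and, by positive homogeneity of $\rho$, $h(\bar X)=\rho(\bar X)$ and $h(-\bar X)=-\rho(\bar X)$. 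For the inequality ``$\ge$'' fix $\xixi\in\cV^{\#}(\bar X)$; the dual representation gives $\rho(\bar X+t\bar Y)\ge\npairing{\bar X+t\bar Y}{\Xi}=\rho(\bar X)+t\,\npairing{\bar Y}{\Xi}$, hence $q(t)\ge\npairing{\bar Y}{\Xi}$ for all $t>0$, and passing to the limit $t\downarrow0$ and taking the supremum over $\cV^{\#}(\bar X)$ yields $\delta_{+}\rho(\bar X,\bar Y)\ge\sup_{\xixi\in\cV^{\#}(\bar X)}\npairing{\bar Y}{\Xi}$.

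For the converse inequality and the attainment of the supremum, extend the linear functional $t\bar Y\mapsto t\,h(\bar Y)$ on $\mathrm{span}\{\bar Y\}$ (which lies below the sublinear $h$ on that line) to a linear functional $\ell$ on $\productspace$ with $\ell\le h$, using Hahn--Banach. Then $\ell(\bar Y)=h(\bar Y)$, and from $\ell\le h\le\rho$ one gets $-\rho(-\bar Z)\le\ell(\bar Z)\le\rho(\bar Z)$ for all $\bar Z$ as well as $\ell(\bar X)=-\ell(-\bar X)\ge-h(-\bar X)=\rho(\bar X)\ge\ell(\bar X)$, i.e.\ $\ell(\bar X)=\rho(\bar X)$. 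In the settings of interest $\rho$ is finite and continuous, so $\ell$ is continuous and hence $\ell=\npairing{\cdot}{\Xi^{*}}$ for some $\Xi^{*}\in\dualproductspace$ with $\npairing{\bar Z}{\Xi^{*}}\le\rho(\bar Z)$ for all $\bar Z$ and $\npairing{\bar X}{\Xi^{*}}=\rho(\bar X)$.

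The remaining task — producing a $\xi^{*}$ with $(\xi^{*},\Xi^{*})\in\cV^{\#}$, which then forces $(\xi^{*},\Xi^{*})\in\cV^{\#}(\bar X)$ and closes the argument via $\delta_{+}\rho(\bar X,\bar Y)\le h(\bar Y)=\npairing{\bar Y}{\Xi^{*}}$ — is the main obstacle: it amounts to showing that the $\Xi$-projection of $\cV^{\#}$ equals $\{\Xi\in\dualproductspace:\npairing{\bar Z}{\Xi}\le\rho(\bar Z)\ \forall\bar Z\}$, for which one exploits the constraints built into $\cV^{\#}$ in~\eqref{eq:sysdualposhomvhash} (non-negativity, $\pairing{1}{\xi}\le1$, $\npairing{1_{n}}{\Xi}\le n$) together with monotonicity of $\rho$ to confine the feasible pairs to a weak-$*$ compact set, so that the supremum in Theorem~\ref{theo:sysdualposhom} and in the formula above is attained. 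Equivalently, one may invoke the classical theorem on directional derivatives of lsc sublinear functionals in the conjugate-duality framework of~\cite{ruszczynski2006optimization}, which directly yields $\delta_{+}\rho(\bar X,\cdot)=\max_{\Xi\in\partial\rho(\bar X)}\npairing{\cdot}{\Xi}$, and then identifies $\partial\rho(\bar X)$ with $\cV^{\#}(\bar X)$ through Theorem~\ref{theo:sysdualposhom}.
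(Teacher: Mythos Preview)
The paper does not give a proof; it simply refers to Theorem~5.2 in \cite{kromer2016systemic}. Your reconstruction follows the standard route to the max-formula and is correct where it is explicit: existence of $\delta_{+}\rho(\bar X,\bar Y)$ via monotone difference quotients, and the inequality ``$\ge$'' directly from Theorem~\ref{theo:sysdualposhom}, are both fine. Your Hahn--Banach step cleanly produces a $\Xi^{*}\in\partial\rho(\bar X)$.

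The gap you yourself flag is real and is not closed by the compactness remark: to conclude you must show that every $\Xi^{*}$ with $\npairing{\cdot}{\Xi^{*}}\le\rho$ lifts to some pair $(\xi^{*},\Xi^{*})\in\cV^{\#}$, i.e.\ that the $\Xi$-projection of $\cV^{\#}$ equals $\partial\rho(0)$. This is exactly the content buried in the reference and uses the concrete description of $\cA_{\rhosing}'$ and $\cA_{\Lambda}'$ in \eqref{eq:sysdualposhomvhash}; a generic weak-$*$ compactness argument does not supply $\xi^{*}$. A second, smaller issue: your assertion that ``$\rho$ is finite and continuous'' is not granted by the hypotheses (only $\rhosing$ lsc and $\Lambda$ continuous are assumed), so the continuity of the Hahn--Banach extension~$\ell$ needs its own justification. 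Your final sentence, which invokes the conjugate-duality max-formula and then identifies $\partial\rho(\bar X)$ with $\cV^{\#}(\bar X)$ through Theorem~\ref{theo:sysdualposhom}, is in effect the same move the paper makes by citing \cite{kromer2016systemic}; at that level your proposal and the paper coincide.
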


\begin{proof}
 See~\cite{kromer2016systemic} Theorem 5.2.
\end{proof}

\begin{cor}\label{cor:sysdirectionalforposhomunique}
  Suppose that \(\rho=\rhosing\circ\Lambda{}\) is a positively homogeneous systemic risk measure with a lsc positively homogeneous single-firm risk measure \(\rhosing{}\) that satisfies constancy on \(\Lambda\nk{\bR^{n}}\) and a positively homogeneous AR \(\Lambda{}\) that is continuous on \(\productspace{}\). If the optimal solution \(\xixiopt{}\) for the dual problem~\eqref{eq:sysdualposhom} is unique, then \(\rho{}\) is Gâteaux differentiable with derivative \(\Xiopt{}\) at \(\bar{X}\), i.e.,
 \begin{equation}
  \delta\rho\nk{\bar{X},\bar{Y}}=\npairing{\bar{Y}}{\Xiopt{}}.
 \end{equation}
\end{cor}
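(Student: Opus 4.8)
The plan is to read this off Theorem~\ref{theo:sysdirectionalforposhom}, which already expresses the one-sided directional derivative of \(\rho{}\) at \(\bar{X}\) as a maximum over the dual optimizer set \(\cV^{\#}\nk{\bar{X}}\), and where the maximum is asserted to be attained. The only additional input needed is the uniqueness hypothesis, which enters in a single, transparent way.

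First I would identify \(\cV^{\#}\nk{\bar{X}}\) with the solution set of the dual problem~\eqref{eq:sysdualposhom}. By Theorem~\ref{theo:sysdualposhom} we have \(\rho\nk{\bar{X}}=\sup_{\xixi\in\cV^{\#}}\npairing{\bar{X}}{\Xi}\), and by definition \(\cV^{\#}\nk{\bar{X}}\) collects precisely those feasible \(\xixi{}\) attaining this supremum. Hence the assumption that the optimal dual solution \(\xixiopt{}\) is unique says exactly that \(\cV^{\#}\nk{\bar{X}}=\gk{\xixiopt}\). Substituting this into Theorem~\ref{theo:sysdirectionalforposhom} collapses the maximum to a single term, so that
\[
 \delta_{+}\rho\nk{\bar{X},\bar{Y}}=\npairing{\bar{Y}}{\Xiopt}\qquad\text{for every }\bar{Y}\in\productspace{}.
\]

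Next I would upgrade the one-sided directional derivative to the two-sided (Gâteaux) derivative. Applying the displayed identity with \(-\bar{Y}\) in place of \(\bar{Y}\) gives \(\delta_{+}\rho\nk{\bar{X},-\bar{Y}}=-\npairing{\bar{Y}}{\Xiopt}\). Since, by the definition of the one-sided directional derivative,
\[
 \lim_{t\uparrow0}\frac{\rho\nk{\bar{X}+t\bar{Y}}-\rho\nk{\bar{X}}}{t}=-\,\delta_{+}\rho\nk{\bar{X},-\bar{Y}}=\npairing{\bar{Y}}{\Xiopt}=\lim_{t\downarrow0}\frac{\rho\nk{\bar{X}+t\bar{Y}}-\rho\nk{\bar{X}}}{t},
\]
the full limit \(\delta\rho\nk{\bar{X},\bar{Y}}\) exists and equals \(\npairing{\bar{Y}}{\Xiopt}\). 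Finally, \(\bar{Y}\mapsto\npairing{\bar{Y}}{\Xiopt}\) is linear and continuous on \(\productspace{}\) because \(\Xiopt\in\dualproductspace{}\) acts as a continuous linear functional through the pairing~\eqref{eq:ndualpairing}; thus \(\Xiopt{}\) is a genuine Gâteaux derivative.

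The statement has no real difficulty once Theorem~\ref{theo:sysdirectionalforposhom} is available: all the analytic work (existence and representation of \(\delta_{+}\rho{}\)) is already done there. The only point that requires a moment's care is the reading of the hypothesis — ``the optimal dual solution is unique'' must be taken to mean that the whole optimizer pair, and hence the set \(\cV^{\#}\nk{\bar{X}}\), is a singleton, rather than merely uniqueness of the \(\Xi{}\)-component on some slice — together with the observation that it is exactly the convexity of \(\rho{}\) (ensuring the one-sided difference quotients behave, as used in Theorem~\ref{theo:sysdirectionalforposhom}) that makes the passage from \(\delta_{+}\) to \(\delta{}\) legitimate.
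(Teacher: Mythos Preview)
Your proof is correct and follows essentially the same approach as the paper: it invokes Theorem~\ref{theo:sysdirectionalforposhom} to get \(\delta_{+}\rho\nk{\bar{X},\bar{Y}}=\npairing{\bar{Y}}{\Xiopt}\) from the uniqueness hypothesis, and then uses \(\delta_{-}\rho\nk{\bar{X},\bar{Y}}=-\delta_{+}\rho\nk{\bar{X},-\bar{Y}}\) to pass to the two-sided derivative. The paper's proof records exactly these two ingredients in a single sentence.
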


\begin{proof}
 This is a direct consequence of Theorem~\ref{theo:sysdirectionalforposhom} and the fact that
 \[
  \delta_{-}\rho\nk{\bar{X},\bar{Y}}=-\delta_{+}\rho\nk{\bar{X},-\bar{Y}}.
 \]
\end{proof}

More general, for systemic risk measures \(\rho=\rhosing\circ\Lambda{}\) which are not necessarily Gâteaux differentiable, there is a connection between optimal solutions \(\xixiopt{}\) to the dual problem~\eqref{eq:sysdualposhom} (resp.~\eqref{eq:sysdualconvex}) and the subgradient.

\begin{cor}\label{cor:syssubgradientforconvex}
  Suppose that \(\rho=\rhosing\circ\Lambda{}\) is a systemic risk measure with a lsc convex single firm risk measure \(\rhosing:\cX\to\bR\cup\gk{\infty}\) that satisfies the constancy property~\ref{singleconstancy} on \(\cR=\Lambda(\bR^n)\) and a convex AR \(\Lambda:\bR^n\to\bR{}\) that is continuous on \(\productspace{}\) and fix \(\bar{X}\in\productspace{}\).
 Then for any optimal solution \(\xixiopt{}\) to the dual problem~\eqref{eq:sysdualconvex}, \(\Xiopt{}\) is a subgradient of \(\rho{}\) at \(\bar{X}\), i.e., \(\Xiopt\in\partial\rho\nk{\bar{X}}\).
\end{cor}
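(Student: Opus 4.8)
The plan is to read off the subgradient inequality directly from the dual representation of Theorem~\ref{theo:sysdualconvex}, using the fact that an optimal dual solution turns the underlying Fenchel--Young inequality into an equality. Recall that $\Xiopt\in\partial\rho\nk{\bar{X}}$ means $\Xiopt\in\dualproductspace$ together with
\[
  \rho\nk{\bar{Y}}\geq\rho\nk{\bar{X}}+\npairing{\bar{Y}-\bar{X}}{\Xiopt}\qquad\text{for all }\bar{Y}\in\productspace,
\]
so the whole argument reduces to establishing this estimate for an arbitrary $\bar{Y}$.

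First I would record what optimality means. Since an optimal solution $\xixiopt$ to~\eqref{eq:sysdualconvex} is assumed to exist, it is by definition an element of $\cD$ attaining the supremum, hence
\[
  \rho\nk{\bar{X}}=\npairing{\bar{X}}{\Xiopt}-\alpha\xixiopt,
\]
with $\alpha\xixiopt\in\bR$ because $\xixiopt\in\cD$; in particular $\rho\nk{\bar{X}}<\infty$. Solving for the penalty term gives $\alpha\xixiopt=\npairing{\bar{X}}{\Xiopt}-\rho\nk{\bar{X}}$.

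Next, for an arbitrary $\bar{Y}\in\productspace$, I would keep only the single feasible point $\xixiopt\in\cD$ in the supremum of~\eqref{eq:sysdualconvex} and substitute the identity just obtained:
\[
  \rho\nk{\bar{Y}}\;\geq\;\npairing{\bar{Y}}{\Xiopt}-\alpha\xixiopt\;=\;\npairing{\bar{Y}}{\Xiopt}-\npairing{\bar{X}}{\Xiopt}+\rho\nk{\bar{X}}\;=\;\rho\nk{\bar{X}}+\npairing{\bar{Y}-\bar{X}}{\Xiopt},
\]
where the last step is bilinearity of $\npairing{\cdot}{\cdot}$. As $\bar{Y}$ was arbitrary and $\Xiopt\in\dualproductspace$ by construction, this is exactly the subgradient inequality, so $\Xiopt\in\partial\rho\nk{\bar{X}}$.

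I do not expect any genuine obstacle here: the corollary is in essence a reformulation of the optimality condition in the dual problem, and the single point requiring a word of care is that an optimal dual solution automatically lies in the effective domain $\cD$, so that the displayed equality for $\rho\nk{\bar{X}}$ is meaningful with a finite $\alpha\xixiopt$ (this is also where the standing hypotheses on $\rhosing$ and $\Lambda$ enter, via Theorem~\ref{theo:sysdualconvex}). If desired, the argument could be compressed to one sentence: a pair at which the convex functional $\rho$ and its conjugate penalty $\alpha$ satisfy the Young equality is, by definition, a subgradient.
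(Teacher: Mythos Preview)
Your argument is correct and is precisely the standard Fenchel--Young computation: optimality of $\xixiopt$ in~\eqref{eq:sysdualconvex} gives equality $\rho(\bar{X})=\npairing{\bar{X}}{\Xiopt}-\alpha\xixiopt$, and feasibility of the same pair in the supremum for any other $\bar{Y}$ gives the subgradient inequality. The paper does not supply an independent proof but simply refers to \cite{kromer2016systemic}, Corollary~4.9, where the same argument underlies the result; so your approach coincides with the intended one.
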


\begin{proof}
 See~\cite{kromer2016systemic} Corollary 4.9.
\end{proof}

\begin{ex}\label{ex:dualrepsysentropic}
  Consider the systemic risk measure \(\rho^{ses}\) presented in Example~\ref{ex:sysentropic}. Obviously \(\Lambda^{sum}\) is Gâteaux differentiable. Moreover, we have that for all \(X,V\in\Mexp{}\)
  \begin{align*}
    \delta\rhosing^{entr}\nk{X,V}
    &=\frac{\bE\ek{V\exp\nk{\theta{X}}}}{\bE\ek{\exp\nk{\theta{X}}}}\\
    &=\pairing{V}{\frac{\exp\nk{\theta{X}}}{\bE\ek{\exp\nk{\theta{X}}}}}.
  \end{align*}
  But this means \(\rhosing^{entr}\) is Gâteaux differentiable with derivative 
  \[
    \nabla\rhosing^{entr}\nk{X}=\frac{\exp\nk{\theta{X}}}{\bE\ek{\exp\nk{\theta{X}}}}
  \]
  at \(X\in\Mexp{}\). Now Proposition~\ref{prop:rulesforgateaux} yields, that \(\rho^{ses}\) is also Gâteaux differentiable with derivative
  \[
    \nabla\rho^{ses}\nk{\bar{X}}=1_{n}\frac{\exp\nk{\theta\sum_{i=1}^{n}X_{i}}}{\bE\ek{\exp\nk{\theta\sum_{i=1}^{n}X_{i}}}}=1_{n}\nabla\rhosing^{entr}\nk{\sum_{i=1}^{n}X_{i}}.
  \] 
  From the previous corollary we know that for optimal solutions \(\xixiopt{}\) to the  dual problem the component \(\Xiopt{}\) is a subgradient. For \(\rho^{ses}\) the subgradient is a singleton consisting of \(\nabla\rho^{ses}\nk{\bar{X}}\) since \(\rho^{ses}\) is continuous on \(\nk{\Mexp}^{n}\). Since \(\Lambda^{sum}\) is a positively homogeneous AR only \(\xixi\in\cA^{\prime}_{\Lambda^{sum}}\) are of interest. Additionally, we have \(\Xi=1_{n}\xi{}\) and therefore an optimal solution to the dual problem is given by  
  \[
    \xixiopt{}=\nk{\nabla\rhosing^{entr}\nk{\sum_{i=1}^{n}X_{i}},1_{n}\nabla\rhosing^{entr}\nk{\sum_{i=1}^{n}X_{i}}}.
  \]
  Note, that this is also an optimal solution to the dual problem of \(\rho^{ses,c}\) presented in Example~\ref{ex:sysentropic}.
\end{ex}

%*****************************************************************************************

\subsection{First Inject Capital}\label{subsecfirstinjectcapital}

If we want to consider systemic risk measures of type~\eqref{eq:sysbiaginidet} and~\eqref{eq:sysbiaginiran} the procedure is a little different. Systemic risk measures of type~\eqref{eq:syskromer} arise in an axiomatic approach. The axioms are motivated by economical conditions. If one is interested in mappings which fulfill a certain combination of these axioms they are the proper choice. In contrast, systemic risk measures of type~\eqref{eq:sysbiaginiran} appear as given mappings, and we have to show that they satisfy desirable properties. The suitable setting for these risk measures is an Orlicz space setting which is described in~\ref{appendix:orliczspaces}. Since these spaces are Banach lattices (equipped with the Luxemburg norm) Definition~\ref{def:sysriskproperties} already fits to define the properties we are interested in. Let us start by formulating the standing assumptions for our analysis, which were introduced by~\cite{biagini2020fairness} to derive dual representation results.

\begin{ass*}
 \begin{compactenum}[(i)]
  \item \(\cC_{0}\subseteq\cC\nk{\bR}\) and \(\cC=\cC_{0}\cap\MPhi{}\) is a convex cone which satisfies \(\bR^{n}\subseteq\cC{\subseteq\cC\nk{\bR}}\).
  \item \(\Lambda\colon\bR^{n}\to\bR{}\) given by \(\Lambda\nk{\bar{x}}=\sum_{i=1}^{n}l_{i}\nk{x_{i}}\), where \(l_{i}:\bR\to\bR{}\) is increasing, strictly convex, differentiable and satisfies the Inada conditions
  \[
   l_{i}^{\prime}\nk{\infty}:=\lim_{x\to\infty}l_{i}^{\prime}\nk{x}=\infty{},\quad{l}_{i}^{\prime}\nk{-\infty}:=\lim_{x\to-\infty}l_{i}^{\prime}\nk{x}=0.
  \]
  \item \(B>\Lambda\nk{-\infty}\), i.e.\@ there exists \(\bar{m}\in\bR^{n}\) such that \(\Lambda\nk{\bar{m}}=\sum_{i=1}^{n}l_{i}\nk{m_i}\leq{B}\).
  %  \item \(\cA^{\bar{Y}}:=\gk{\bar{Z}\in\Lonen\mid\bE\ek{\Lambda\nk{\bar{Z}-\bar{Y}}}\leq{B}}\).
  \item \(\cA_{0}:=\gk{{Z}\in\Lone\probspace{}\mid\bE\ek{{Z}}\leq{B}}\).
  \item For all \(i\in\gk{\oneton{}}\), it holds that for any probability measure \(\bQ\ll\bP{}\) with density \(\frac{d\bQ}{d\bP}:=\xi{}\) that
  \[
   \bE\ek{l_{i}^{*}\nk{\xi}}<\infty \quad \text{if and only if} \quad \bE\ek{l_{i}^{*}\nk{\lambda\xi}}<\infty,\text{ for all }\lambda>0,
  \]
  where \(l_{i}^{*}\nk{y}:=\sup_{x\in\bR}\gk{xy-l_{i}\nk{x}}\).
 \end{compactenum}
\end{ass*}

Under these assumptions we analyze mappings \(\brisk\colon\MPhi\to\bR\cup\gk{-\infty}\cup\gk{\infty}\) given by

\begin{align}\label{eq:biaginisysunderassumption}
 \brisk\nk{\bar{X}}
 &=\inf\gk{\sum_{i=1}^{n}Y_{i}\mid\bar{Y}\in\cC,\Lambda\nk{\bar{X}-\bar{Y}}\in\cA_{0}} \nonumber{}\\
 &=\inf\gk{\sum_{i=1}^{n}Y_{i}\mid\bar{Y}\in\cC,\bE\ek{\sum_{i=1}^{n}l_{i}\nk{X_i-Y_i}}\leq{B}}.
\end{align}

\begin{rem}\label{rem:choiceofb}
 A reasonable choice for \(B\) is given by \(\sum_{i=1}^{n}l_{i}\nk{0}\). It ensures that \(\brisk\nk{0}=0\) which is a desirable property in some situations as we will see later.
\end{rem}

\begin{prop}\label{prop:biaginisysproperties}
 Consider the mapping in~\eqref{eq:biaginisysunderassumption}.
 \begin{compactenum}[(i)]
  \item \(\brisk\nk{\bar{X}}>-\infty{}\), for all \(\bar{X}\in\MPhi{}\).
  \item \(\brisk{}\) is a systemic risk measure on \(\MPhi{}\), i.e.\@ it satisfies~\ref{sysmonoton} and~\ref{sysconvexoutcome}.
  \item \(dom\nk{\brisk}=\MPhi{}\), i.e., \(\brisk\colon\MPhi\to\bR{}\).
  \item \(\brisk{}\) is continuous and sub-differentiable on \(dom\nk{\brisk}=\Mphi{}\).
 \end{compactenum}
\end{prop}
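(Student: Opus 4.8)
The plan is to establish the four items in the order (ii), (i), (iii), (iv): (ii) is purely structural, (i) supplies the lower bound, (iii) supplies finiteness, and (iv) then combines everything with a Namioka--Klee type argument. For (ii) I would just unwind the definition of $\brisk$. For monotonicity~\ref{sysmonoton}: if $\bar X\succcurlyeq\bar Y$, then any $\bar Z\in\cC$ admissible for $\bar X$ satisfies $Y_i-Z_i\le X_i-Z_i$ a.s.\ for each $i$, so $\sum_i l_i(Y_i-Z_i)\le\sum_i l_i(X_i-Z_i)$ by monotonicity of the $l_i$, hence $\bar Z$ is admissible for $\bar Y$ as well; the admissible set for $\bar Y$ therefore contains that for $\bar X$ while the objective $\sum_i Z_i$ is unchanged, so $\brisk(\bar Y)\le\brisk(\bar X)$. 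For outcome convexity~\ref{sysconvexoutcome}: given $\alpha\in\zeroone$ and admissible $\bar Z^X,\bar Z^Y$, the combination $\bar Z:=\alpha\bar Z^X+(1-\alpha)\bar Z^Y$ lies in $\cC$ (a convex cone), and convexity of each $l_i$ yields $\sum_i l_i\nk{\alpha(X_i-Z_i^X)+(1-\alpha)(Y_i-Z_i^Y)}\le\alpha\sum_i l_i(X_i-Z_i^X)+(1-\alpha)\sum_i l_i(Y_i-Z_i^Y)$, whose expectation is $\le\alpha B+(1-\alpha)B=B$; since $\sum_i Z_i=\alpha\sum_i Z_i^X+(1-\alpha)\sum_i Z_i^Y$, passing to the infimum over $\bar Z^X,\bar Z^Y$ gives the claim (the case where one of the values equals $+\infty$ being trivial once (i) rules out $-\infty$).

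For (i) the observation to exploit is that $\bar Y\in\cC\subseteq\cC\nk{\bR}$ forces $\sum_i Y_i$ to be a deterministic real number. The Inada condition $l_i'(\infty)=\infty$ makes each $l_i$ superlinear, so $l_i^*(1)=\sup_x\{x-l_i(x)\}\in\bR$ and the Fenchel--Young inequality reads $l_i(x)\ge x-l_i^*(1)$. Applying this to $x=X_i-Y_i$ and summing over $i$ gives $\sum_i l_i(X_i-Y_i)\ge\sum_i X_i-\sum_i l_i^*(1)-\sum_i Y_i$; since $X_i\in\MPhi\subseteq\Lone$ and $\sum_i Y_i$ is constant, taking expectations in any admissible $\bar Y$ produces $B\ge\sum_i\bE[X_i]-\sum_i l_i^*(1)-\sum_i Y_i$, i.e.\ $\sum_i Y_i\ge\sum_i\bE[X_i]-\sum_i l_i^*(1)-B$. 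This is a finite lower bound independent of $\bar Y$, so $\brisk(\bar X)>-\infty$.

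For (iii) it suffices to exhibit one admissible allocation for every $\bar X\in\MPhi$. I would take $\bar Y=\lambda 1_n$ with $\lambda>0$, which lies in $\cC$ because $\bR^n\subseteq\cC$. Monotonicity of $l_i$ gives $l_i(X_i-\lambda)\le l_i(X_i)$, and $l_i(X_i)\in\Lone$ since $X_i\in\MPhi$ has finite $\Phi$-moments by the construction of the Orlicz space in the appendix; together with the integrable lower bound from the Fenchel inequality above, monotone convergence yields $\bE\!\ek{\sum_i l_i(X_i-\lambda)}\downarrow\sum_i l_i(-\infty)=\Lambda(-\infty)$ as $\lambda\uparrow\infty$. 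Since $\Lambda(-\infty)<B$ by the standing assumption, for $\lambda$ large enough $\lambda 1_n$ is admissible and $\brisk(\bar X)\le n\lambda<\infty$; combined with (i), $\brisk\colon\MPhi\to\bR$.

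For (iv) I would invoke the extended Namioka--Klee theorem: by (ii) and (iii), $\brisk$ is a real-valued convex $\succcurlyeq$-monotone functional on the Banach lattice $\nk{\MPhi,\norm{\cdot}}$ (with the Luxemburg norm), and on such a space every finite convex monotone functional is automatically norm-continuous, indeed locally Lipschitz; a finite convex function continuous at every point of a Banach space has nonempty subdifferential there, so $\brisk$ is subdifferentiable on $\MPhi$. The step I expect to be the main obstacle is precisely this continuity assertion: one has to cite the correct monotone-convex version of the Namioka--Klee theorem valid on Banach (Fr\'echet) lattices \emph{without} an order unit, and check that real-valuedness and monotonicity are exactly what (ii)--(iii) deliver; by comparison (i) is essentially bookkeeping once the determinism of $\sum_i Y_i$ is noticed, and (iii) only draws on the compatibility between $\MPhi$ and the loss functions $l_i$ already fixed in the appendix.
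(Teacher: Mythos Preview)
Your argument is correct and follows the standard route: structural verification of monotonicity and convexity, a Fenchel--Young lower bound exploiting that $\sum_i Y_i$ is deterministic for $\bar Y\in\cC\subseteq\cC(\bR)$, a constant-shift construction $\bar Y=\lambda 1_n$ together with monotone convergence for finiteness, and finally the extended Namioka--Klee theorem on the Banach lattice $\MPhi$ for continuity and subdifferentiability. The paper itself does not supply a proof of this proposition at all; it simply cites \cite{biagini2020fairness}, Proposition~2.4, and the line of reasoning there is essentially the one you have reconstructed. Your identification of the Namioka--Klee step as the only place requiring an external result is accurate; parts (i)--(iii) are indeed self-contained once the standing assumptions on the $l_i$ and on $\cC$ are in force.
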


\begin{proof}
 See~\cite{biagini2020fairness} Proposition 2.4.
\end{proof}

We continue by presenting one relevant example for the set \(\cC{}\).
\begin{ex}\label{ex:cgroups}
 Suppose that the participants of the financial system are divided into \(h\in\gk{1,\ldots,n}\) groups. This means, if we set \(\bar{n}=\nk{{n}_{1},\ldots,{n}_{h}}\in\bN^{h}\) with \({n}_{j-1}<{n}_{j}\), \(j=1\ldots,h\), \({n}_{0}:=0\) and \({n}_{h}:=n\), group \(j\) consists of the firms \(I_{j}:=\gk{{n}_{j-1}+1,\ldots,{n}_{j}}\) for \(j=1\ldots,h\).
 Consider the set \(\cC^{\nk{\bar{n}}}=\cC^{\nk{\bar{n}}}_{0}\cap\MPhi{}\), where
 \begin{equation}\label{eq:cgroups}
  \cC^{\nk{\bar{n}}}_{0}=\gk{\bar{Y}\in\Lzeron\mid\exists\bar{d}\in\bR^{h}:\sum_{i\in{I}_{j}}{Y}_i={d}_{j}\text{ for }j=1,\ldots,h}\subseteq\cC\nk{\bR}.
 \end{equation}
 The random vectors in \(\cC^{\nk{\bar{n}}}\) set a deterministic value for the allocation to each of the \(h\) groups. Inside each group the allocation of this fixed value is dependent on the occurrent scenario. Obviously, there are two extreme cases, i.e. \(h=1\) and \(h=n\). The case \(h=1\) leads to arbitrary random allocations with the only constraint \(\bar{Y}\in\cC\nk{\bR}\). The case \(h=n\) leads to fully deterministic allocations.
\end{ex}

The next theorem presents the dual representation for systemic risk measures given by~\eqref{eq:biaginisysunderassumption}.

\begin{theo}[\cite{biagini2020fairness}]\label{theo:biaginisysdual}
 For any \(\bar{X}\in\MPhi{}\),
 \begin{equation}\label{eq:biaginisysdual}
  %\sum_{i=1}^{n}\bE^{\bar\bQ_{i}}\ek{\bar{X}_i}
  \brisk\nk{\bar{X}}=\max_{\Xi\in\cD}\gk{\npairing{\bar{X}}{\Xi}-\alpha_{\Lambda,B}\nk{\Xi}},
 \end{equation}
 where the penalty function is given by
 \begin{equation}\label{eq:penaltybiagini}
  \alpha_{\Lambda,B}\nk{\Xi}:=\sup_{\bar{Z}\in\cA}\gk{\npairing{\bar{Z}}{\Xi}}, \quad\Xi\in\cD,
 \end{equation}
 with \(\cA\colon=\gk{\bar{Z}\in\MPhi\mid\sum_{i=1}^{n}\bE\ek{l_{i}\nk{{Z}_{i}}}\leq{B}}\) and
 \begin{equation*}
  \cD:=dom\nk{\alpha_{\Lambda,B}}\cap\gk{\Xi\in{L}^{\Phi^{*}}_{+}\mid\pairing{1}{\Xi_{i}}=1\ \forall{i}\text{ and }\npairing{\bar{Y}}{\Xi}-\sum_{i=1}^{n}{Y}_{i}\leq0\ \forall\ \bar{Y}\in\cC}.
 \end{equation*}
 \begin{compactenum}[(i)]
  \item Suppose that for some \(i,j\in\gk{1,\ldots,n},\ i\not=j\) we have \(\pm\nk{e_{i}\ind{A}-e_j\ind{A}}\in\cC{}\) for all \(A\in\cF{}\). Then
  \begin{equation*}
   \cD:=dom\nk{\alpha_{\Lambda,B}}\cap\gk{\Xi\in{L}^{\Phi^{*}}_{+}\mid\pairing{1}{\Xi_{i}}=1\ \forall{i},\ \Xi_{i}=\Xi_{j}\text{ and }\npairing{\bar{Y}}{\Xi}-\sum_{i=1}^{n}{Y}_{i}\leq0\ \forall\ \bar{Y}\in\cC}.
  \end{equation*}
  \item Suppose that \(\pm\nk{e_{i}\ind{A}-e_j\ind{A}}\in\cC{}\) for all \(i,j\in\gk{1,\ldots,n}\) and all \(A\in\cF{}\). Then
  \begin{equation*}
   \cD:=dom\nk{\alpha_{\Lambda,B}}\cap\gk{\Xi\in{L}^{\Phi^{*}}_{+}\mid\pairing{1}{\Xi_{i}}=1, \Xi_{i}=\Xi_{j}\ \forall{i,j\in\gk{1,\ldots,n}}}.
  \end{equation*}
 \end{compactenum}
\end{theo}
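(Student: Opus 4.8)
The plan is to derive~\eqref{eq:biaginisysdual} from the Fenchel--Moreau biconjugation theorem applied to \(\brisk{}\) in the dual pair \(\pairing{\MPhi}{L^{\Phi^{*}}}\) (recall that the norm dual of the Orlicz heart \(\MPhi{}\) is \(L^{\Phi^{*}}\)). By Proposition~\ref{prop:biaginisysproperties} the functional \(\brisk\colon\MPhi\to\bR{}\) is finite everywhere, convex by~\ref{sysconvexoutcome}, monotone by~\ref{sysmonoton}, continuous and subdifferentiable at every point; in particular it is proper and lower semicontinuous, so \(\brisk=\brisk^{**}\). Thus for every \(\bar{X}\in\MPhi{}\)
\begin{equation*}
 \brisk\nk{\bar{X}}=\sup_{\Xi\in L^{\Phi^{*}}}\gk{\npairing{\bar{X}}{\Xi}-\brisk^{*}\nk{\Xi}},\qquad\brisk^{*}\nk{\Xi}=\sup_{\bar{X}\in\MPhi}\gk{\npairing{\bar{X}}{\Xi}-\brisk\nk{\bar{X}}},
\end{equation*}
and the outer supremum is a maximum by the Fenchel--Young equality \(\Xi\in\partial\brisk\nk{\bar{X}}\Leftrightarrow\brisk\nk{\bar{X}}+\brisk^{*}\nk{\Xi}=\npairing{\bar{X}}{\Xi}\) together with the subdifferentiability of \(\brisk{}\).

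The heart of the proof is the computation of \(\brisk^{*}\). I would insert the infimum from~\eqref{eq:biaginisysunderassumption}, rewrite \(-\brisk\nk{\bar{X}}\) as a supremum over the feasible set (non-empty by Proposition~\ref{prop:biaginisysproperties}(iii)), and perform the change of variables \(\bar{Z}=\bar{X}-\bar{Y}\). Because \(\cC\subseteq\MPhi{}\), the feasible pairs \(\nk{\bar{X},\bar{Y}}\) with \(\bar{Y}\in\cC{}\) and \(\Lambda\nk{\bar{X}-\bar{Y}}\in\cA_{0}\) correspond bijectively to \(\nk{\bar{Z},\bar{Y}}\in\cA\times\cC{}\), and by bilinearity of \(\npairing{\cdot}{\cdot}\) the objective splits into a part depending only on \(\bar{Y}\) and a part depending only on \(\bar{Z}\), so
\begin{equation*}
 \brisk^{*}\nk{\Xi}=\sup_{\bar{Y}\in\cC}\gk{\npairing{\bar{Y}}{\Xi}-\sum_{i=1}^{n}Y_{i}}+\sup_{\bar{Z}\in\cA}\npairing{\bar{Z}}{\Xi}=\sup_{\bar{Y}\in\cC}\gk{\npairing{\bar{Y}}{\Xi}-\sum_{i=1}^{n}Y_{i}}+\alpha_{\Lambda,B}\nk{\Xi}.
\end{equation*}
Here \(\sum_{i}Y_{i}\) is a real constant since \(\cC\subseteq\cC\nk{\bR}\). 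As \(\cC{}\) is a convex cone containing \(0\) and \(\bar{Y}\mapsto\npairing{\bar{Y}}{\Xi}-\sum_{i}Y_{i}\) is linear, the first supremum equals \(0\) if \(\npairing{\bar{Y}}{\Xi}-\sum_{i}Y_{i}\leq0\) for all \(\bar{Y}\in\cC{}\) and \(+\infty{}\) otherwise. Hence \(\brisk^{*}=\alpha_{\Lambda,B}\) on the set described in the definition of \(\cD{}\) and \(\brisk^{*}=+\infty{}\) off it.

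It remains to pin down the effective domain. Monotonicity~\ref{sysmonoton} forces \(\Xi\succcurlyeq0\): if \(\npairing{\bar{V}}{\Xi}<0\) for some \(\bar{V}\in\nk{\MPhi}_{+}\), then evaluating \(\brisk^{*}\nk{\Xi}\) at \(\bar{X}-t\bar{V}\) and letting \(t\to\infty{}\) (using \(\brisk\nk{\bar{X}-t\bar{V}}\leq\brisk\nk{\bar{X}}\)) gives \(\brisk^{*}\nk{\Xi}=\infty{}\); therefore \(\npairing{\bar{V}}{\Xi}\geq0\) for all \(\bar{V}\in\nk{\MPhi}_{+}\), and property~\ref{assumptionc} applied componentwise yields \(\Xi\in L^{\Phi^{*}}_{+}\). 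The normalization \(\pairing{1}{\Xi_{i}}=1\) comes from testing the constraint \(\npairing{\bar{Y}}{\Xi}\leq\sum_{i}Y_{i}\) at \(\bar{Y}=\pm e_{i}\in\bR^{n}\subseteq\cC{}\). Collecting these restrictions produces exactly \(\cD{}\) and the representation~\eqref{eq:biaginisysdual} with a maximum.

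For part~(i), testing the constraint at \(\bar{Y}=\pm\nk{e_{i}\ind{A}-e_{j}\ind{A}}\in\cC{}\) (where \(\sum_{k}Y_{k}=0\)) yields \(\pm\bE\ek{\ind{A}\nk{\Xi_{i}-\Xi_{j}}}\leq0\) for all \(A\in\cF{}\), hence \(\Xi_{i}=\Xi_{j}\); the other defining conditions of \(\cD{}\) are unaffected. For part~(ii) the same holds for every pair, so all components equal a single \(\xi\in L^{\Phi^{*}}_{+}\) with \(\pairing{1}{\xi}=1\); then for any \(\bar{Y}\in\cC\subseteq\cC\nk{\bR}\) one has \(\npairing{\bar{Y}}{\Xi}=\pairing{\sum_{i}Y_{i}}{\xi}=\nk{\sum_{i}Y_{i}}\pairing{1}{\xi}=\sum_{i}Y_{i}\), so the last condition in \(\cD{}\) is automatic and may be dropped. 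I expect the real obstacle to be the exactness of the duality --- no duality gap and attainment of the maximum --- which relies on the Orlicz-space topological framework and on Proposition~\ref{prop:biaginisysproperties}(iv); a secondary point is verifying that the change of variables keeps \(\bar{Z}\) in \(\MPhi{}\) and that \(\bE\ek{\ind{A}\nk{\Xi_{i}-\Xi_{j}}}=0\) for all \(A\in\cF{}\) genuinely implies \(\Xi_{i}=\Xi_{j}\).
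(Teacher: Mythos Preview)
Your argument is correct. The paper does not give its own proof of this theorem but simply refers to \cite{biagini2020fairness}, Proposition~3.1; the Fenchel--Moreau route you take---computing \(\brisk^{*}\) by the change of variables \(\bar{Z}=\bar{X}-\bar{Y}\), which decouples it into \(\alpha_{\Lambda,B}\) plus the indicator of the polar cone of \(\cC{}\), and then reading off positivity from monotonicity and the normalization \(\pairing{1}{\Xi_{i}}=1\) from \(\pm e_{i}\in\bR^{n}\subseteq\cC{}\)---is exactly the standard argument used there. The concerns you flag at the end are not genuine obstacles: attainment of the maximum is immediate from the everywhere-subdifferentiability in Proposition~\ref{prop:biaginisysproperties}(iv) via the Fenchel--Young equality; \(\bar{Z}=\bar{X}-\bar{Y}\in\MPhi{}\) holds because \(\cC\subseteq\MPhi{}\) and \(\MPhi{}\) is a vector space; and \(\bE\ek{\ind{A}\nk{\Xi_{i}-\Xi_{j}}}=0\) for all \(A\in\cF{}\) forces \(\Xi_{i}=\Xi_{j}\) a.s.\ since \(L^{\Phi^{*}}\subseteq\Lone{}\).
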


\begin{proof}
 See~\cite{biagini2020fairness} Proposition 3.1.
\end{proof}

\begin{ex}\label{ex:exponentialcase}
 We set \(\cC=\cC^{\bar{n}}\) and \(l_{i}\nk{x}=\frac{1}{\alpha_{i}}\exp\nk{\alpha_{i}x}\), \(\alpha_{i}>0\), \(i\in\gk{1,\ldots,n}\). This means \(\phi_{i}\nk{x}=\frac{1}{\alpha_{i}}\nk{\exp\nk{\alpha_{i}\abs{x}}-1}\). Since for \(\phi^{\exp}\nk{x}=\exp\nk{\abs{x}}-1\) we have \(\phi_{i}\sim\phi^{\exp}\) for all \(i\in\gk{\oneton{}}\) and therefore
 \(\MPhi{}=\nk{\Mexp}^{n}\). Additionally, set \(B>\sum_{i=1}^{n}l_{i}\nk{-\infty}=0\). The systemic risk measure \(\brisk^{ex}:\nk{\Mexp{}}^{n}\to\bR{}\) becomes
 \begin{equation}\label{eq:exponentialcase}
  \brisk^{ex}\nk{\bar{X}}=\inf\gk{\sum_{i=1}^{n}{Y}_{i}\mid\bar{Y}\in\cC^{\bar{n}},\bE\ek{\sum_{i=1}^{n}\frac{1}{\alpha_{i}}\exp\nk{\alpha_{i}\nk{{X}_i-{Y}_i}}}={B}}.
 \end{equation}
 The optimal solution of the dual problem~\eqref{eq:biaginisysdual} is given by the vector of probability densities \(\Xiopt{}\), where for \(j\in\gk{1,\ldots,h}\) the components \(l\in{I}_{j}\) are given by
 \begin{equation}
  \Xiopt_{l}=\Xioptj:=\frac{\exp\nk{\theta_{j}\sum_{i\in{I}_{j}}{X_i}}}{\bE\ek{\exp\nk{\theta_{j}\sum_{i\in{I}_{j}}{X_i}}}},
 \end{equation}
 where \(\theta_{j}=\frac{1}{\sum_{i\in{I}_{j}}\frac{1}{\alpha_{i}}}\).
 Moreover, the systemic risk measure \(\brisk^{ex}\) is Gâteaux differentiable with derivative \(\Xiopt{}\), and we obtain
 \begin{align*}
  \delta\brisk^{ex}\nk{\bar{X},\bar{V}}
   & =\npairing{\bar{V}}{\Xiopt{}}                                        \\
   & =\sum_{j=1}^{h}\sum_{i\in{I}_{j}}\pairing{V_{i}}{\Xioptj}.
 \end{align*}
\end{ex}

The following theorem shows that the properties~\ref{sysconvexrisk} and~\ref{sysprefcons} are direct consequences of the dual representation of systemic risk measures of type~\eqref{eq:sysbiaginiran}.

\begin{theo}\label{theo:biaginisysnewproperties}
 Suppose that \(\pm\nk{e_{i}\ind{A}-e_j\ind{A}}\in\cC{}\) for all \(i,j\in\gk{1,\ldots,n}\) and all \(A\in\cF{}\). Then the systemic risk measure \(\brisk{}\) satisfies the properties~\ref{sysconvexrisk},~\ref{sysprefcons} and~\ref{sysrsurjectivity}.
 %  i.e.\@ there exists a convex (resp.\@ positively homogeneous) AR \(\Lambda:\bR^n\to\bR{}\) and a convex (resp.\@ positively homogeneous) single firm risk measure \(\rhosing:\Lp\to\bR\cup\gk{\infty}\) that satisfies the constancy property~\ref{singleconstancy} on \(\cR=\Lambda(\bR^n)\),
 % such that
 % \begin{equation}
 %  \brisk\nk{\bar{X}}=\nk{\rhosing\circ\Lambda}\nk{\bar{X}}.
 % \end{equation}
\end{theo}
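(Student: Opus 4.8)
The plan is to read everything off the dual representation in Theorem~\ref{theo:biaginisysdual}, whose part~(ii) applies precisely under the standing hypothesis $\pm\nk{e_{i}\ind{A}-e_{j}\ind{A}}\in\cC$ for all $i,j$ and all $A\in\cF$: in that case the dual domain $\cD$ consists only of vectors $\Xi=\nk{\xi,\dots,\xi}$ built from one probability density $\xi\in L^{\Phi^{*}}_{+}$ with $\pairing{1}{\xi}=1$, and for such $\Xi$ one has $\npairing{\bar X}{\Xi}=\pairing{\sum_{i=1}^{n}X_{i}}{\xi}$. First I would set $\cD_{0}:=\gk{\xi\mid\nk{\xi,\dots,\xi}\in\cD}$, $\tilde\alpha\nk{\xi}:=\alpha_{\Lambda,B}\nk{\xi,\dots,\xi}$ and $\tilde\rho_{0}\nk{Z}:=\sup_{\xi\in\cD_{0}}\gk{\pairing{Z}{\xi}-\tilde\alpha\nk{\xi}}$, so that the dual representation collapses to $\brisk\nk{\bar X}=\tilde\rho_{0}\nk{\sum_{i=1}^{n}X_{i}}$ for every $\bar X\in\MPhi$; thus $\brisk$ factors through the aggregate $\sum_{i}X_{i}$, and $\tilde\rho_{0}$, being a supremum of affine functionals with non-negative weights, is a convex and monotone single-firm risk measure.

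Next I would evaluate this identity at a constant vector $\bar m\in\bR^{n}$: since $\pairing{1}{\xi}=1$ for all $\xi\in\cD_{0}$, this gives $\brisk\nk{\bar m}=\tilde\rho_{0}\nk{\sum_{i}m_{i}}=\sum_{i=1}^{n}m_{i}-c$, where $c:=\inf_{\xi\in\cD_{0}}\tilde\alpha\nk{\xi}=-\brisk\nk{0}$ is finite by Proposition~\ref{prop:biaginisysproperties}(iii). In particular $\brisk\nk{\bR^{n}}=\bR$, which is~\ref{sysrsurjectivity}.

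For~\ref{sysprefcons} and~\ref{sysconvexrisk} I would then transport the respective hypotheses through the two identities $\brisk\vert_{\bR^{n}}\nk{\bar m}=\sum_{i}m_{i}-c$ and $\brisk\nk{\bar X}=\tilde\rho_{0}\nk{\sum_{i}X_{i}}$. If $\brisk\nk{\bar X\nk{\omega}}\geq\brisk\nk{\bar Y\nk{\omega}}$ for almost all $\omega$, the first identity turns this into $\sum_{i}X_{i}\succcurlyeq\sum_{i}Y_{i}$, whence $\brisk\nk{\bar X}=\tilde\rho_{0}\nk{\sum_{i}X_{i}}\geq\tilde\rho_{0}\nk{\sum_{i}Y_{i}}=\brisk\nk{\bar Y}$ by monotonicity of $\tilde\rho_{0}$, which is~\ref{sysprefcons}. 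If $\brisk\nk{\bar Z\nk{\omega}}=\alpha\brisk\nk{\bar X\nk{\omega}}+\nk{1-\alpha}\brisk\nk{\bar Y\nk{\omega}}$ for almost all $\omega$ and some $\alpha\in\zeroone$, the first identity gives $\sum_{i}Z_{i}=\alpha\sum_{i}X_{i}+\nk{1-\alpha}\sum_{i}Y_{i}$ $\bP$-a.s., whence $\brisk\nk{\bar Z}\leq\alpha\brisk\nk{\bar X}+\nk{1-\alpha}\brisk\nk{\bar Y}$ by convexity of $\tilde\rho_{0}$, which is~\ref{sysconvexrisk}. Equivalently, and this is the embedding announced in the introduction, one can write $\brisk=\hat\rho_{0}\circ\Lambda^{\text{sum,c}}$ with $\Lambda^{\text{sum,c}}\nk{\bar x}=\sum_{i}x_{i}-c$ a convex AR and $\hat\rho_{0}\nk{r}:=\tilde\rho_{0}\nk{r+c}$ a convex single-firm risk measure that is constant on $\bR=\Lambda^{\text{sum,c}}\nk{\bR^{n}}$, so that the three properties also follow at once from Theorem~\ref{theo:sysriskcomposition}.

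The only step that needs genuine care is the reduction in the first paragraph: one has to confirm that Theorem~\ref{theo:biaginisysdual}(ii) is triggered by exactly the stated hypothesis, and that after the collapse $\Xi=\nk{\xi,\dots,\xi}$ the dual functional depends on $\bar X$ only through $\sum_{i}X_{i}$, so that $\tilde\rho_{0}$ is well defined on the relevant space (a finite sum of Orlicz hearts). The finiteness of the shift $c$, which is where real-valuedness of $\brisk$ (Proposition~\ref{prop:biaginisysproperties}(iii)) enters, is the remaining technical point; once these are settled,~\ref{sysconvexrisk},~\ref{sysprefcons} and~\ref{sysrsurjectivity} are a routine transfer of convexity, monotonicity and surjectivity from the scalar building block $\tilde\rho_{0}$.
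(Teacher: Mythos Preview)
Your proof is correct and follows essentially the same approach as the paper: both exploit Theorem~\ref{theo:biaginisysdual}(ii) to collapse the dual domain to vectors $\Xi=\nk{\xi,\dots,\xi}$, so that $\brisk$ depends on $\bar X$ only through $\sum_{i}X_{i}$, and then read off the three properties from the resulting scalar structure. Your packaging via the auxiliary single-firm risk measure $\tilde\rho_{0}$ is slightly more explicit than the paper's direct manipulation of the dual formula, but the substance is the same.
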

\begin{proof}
 First note that, since we assume that \(\pm\nk{e_{i}\ind{A}-e_j\ind{A}}\in\cC{}\) for all \(i,j\in\gk{1,\ldots,n}\) and all \(A\in\cF{}\), the set \(\cD{}\) only consists of vectors of probability densities \(\Xi{}\) with \(\Xi_{i}=\xi{}\) for all \(i\in\gk{1,\ldots,n}\). We denote the corresponding probability measure by \(\bQ{}\). 
 Suppose that
 \begin{equation*}
  \brisk\nk{\bar{Z}\nk{\omega}}=\alpha\brisk\nk{\bar{X}\nk{\omega}}+\nk{1-\alpha}\brisk\nk{\bar{Y}\nk{\omega}},\qquad\alpha\in\interval{0}{1}
 \end{equation*}
 for almost all \(\omega\in\Omega{}\). Since \(\brisk{}\) admits a dual representation given by~\eqref{eq:biaginisysdual} we obtain
 \begin{align*}
  \brisk\nk{\bar{Z}\nk{\omega}}
   & =\max_{\Xi\in\cD}\gk{\sum_{i=1}^{n}{Z}_{i}\nk{\omega}-\alpha_{\Lambda,B}\nk{\Xi}}                                                                                                                 \\
   & =\sum_{i=1}^{n}{Z}_{i}\nk{\omega}+\max_{\Xi\in\cD}\gk{-\alpha_{\Lambda,B}\nk{\Xi}}                                                                        \end{align*}
   and
   \begin{align*}                                      
   \alpha\brisk\nk{\bar{X}\nk{\omega}}+\nk{1-\alpha}\brisk\nk{\bar{Y}\nk{\omega}}                                                                                                                             
   & =\alpha\max_{\Xi\in\cD}\gk{\sum_{i=1}^{n}{X}_{i}\nk{\omega}-\alpha_{\Lambda,B}\nk{\Xi}}\\
   & \ \ \ 
   +\nk{1-\alpha}\max_{\Xi\in\cD}\gk{\sum_{i=1}^{n}{Y}_{i}\nk{\omega}-\alpha_{\Lambda,B}\nk{\Xi}} \\
   & =\alpha\sum_{i=1}^{n}{X}_{i}\nk{\omega}+\nk{1-\alpha}\sum_{i=1}^{n}{Y}_{i}\nk{\omega}+\max_{\Xi\in\cD}\gk{-\alpha_{\Lambda,B}\nk{\Xi}}.
 \end{align*}
 But this means
 \begin{equation*}
  \sum_{i=1}^{n}{Z}_{i}=\alpha\sum_{i=1}^{n}{X}_{i}+\nk{1-\alpha}\sum_{i=1}^{n}{Y}_{i}
 \end{equation*}
 \(\bP{}-\)almost surely and since \(\bQ\ll\bP{}\) also \(\bQ{}-\)almost surely. Therefore,
 \begin{align*}
  \brisk\nk{\bar{Z}}
   & =\max_{\Xi\in\cD}\gk{\pairing{\sum_{i=1}^{n}{Z}_i}{\xi}-\alpha_{\Lambda,B}\nk{\Xi}}                                                \\
   & =\max_{\Xi\in\cD}\gk{\pairing{\alpha\sum_{i=1}^{n}{X}_{i}+\nk{1-\alpha}\sum_{i=1}^{n}{Y}_{i}}{\xi}-\alpha_{\Lambda,B}\nk{\Xi}} \\
   & =\brisk\nk{\alpha\bar{X}+\nk{1-\alpha}\bar{Y}}                                                                                                \\
   & \leq\alpha\brisk\nk{\bar{X}}+\nk{1-\alpha}\brisk\nk{\bar{Y}},
 \end{align*}
 where the last inequality follows from~\ref{sysconvexoutcome}. But this means \(\brisk{}\) satisfies~\ref{sysconvexrisk}. To show preference constancy~\ref{sysprefcons} suppose that
 \begin{equation*}
  \brisk\nk{\bar{X}\nk{\omega}}\geq\brisk\nk{\bar{Y}\nk{\omega}}
 \end{equation*}
 for almost all \(\omega\in\Omega{}\). With the same arguments as stated above we obtain
 \begin{equation*}
  \sum_{i=1}^{n}{X}_{i}\geq\sum_{i=1}^{n}{Y}_{i}
 \end{equation*}
 \(\bP{}-\)almost surely and since \(\bQ\ll\bP{}\) also \(\bQ{}-\)almost surely. But this means
 \begin{align*}
  \brisk\nk{\bar{X}}
   & =\max_{\Xi\in\cD}\gk{\pairing{\sum_{i=1}^{n}{X}_i}{\xi}-\alpha_{\Lambda,B}\nk{\Xi}}    \\
   & \geq\max_{\Xi\in\cD}\gk{\pairing{\sum_{i=1}^{n}{Y}_i}{\xi}-\alpha_{\Lambda,B}\nk{\Xi}} \\
   & =\brisk\nk{\bar{Y}}.
 \end{align*}
 The \(\cR{}\)-surjectivity property~\ref{sysrsurjectivity} follows with the same arguments.
\end{proof}

\begin{rem}\label{rem:conditionsfordecompositionofbiagini}
  \begin{enumerate}
    \item If we want to apply Theorem~\ref{theo:sysriskcomposition}, we additionally need that \(\brisk\mid_{\bR^{n}}\nk{\MPhi{}}=\cX{}\) for some locally convex Hausdorff space \(\cX{}\). To guarantee this property we need additional information about the loss functions \(l_{i}\) and their corresponding Young functions \(\phi_{i}\). For example if \(\phi_{i}\succ\phi_{j}\) for all \(i\neq{j}\), then obviously \(\brisk\mid_{\bR^{n}}\nk{\MPhi{}}=M^{\phi_{j}}\).
    \item The crucial assumption in Theorem~\ref{theo:biaginisysnewproperties} is \(\pm\nk{e_{i}\ind{A}-e_j\ind{A}}\in\cC{}\) for all \(i,j\in\gk{1,\ldots,n}\) and all \(A\in\cF{}\). The economic interpretation of this property is that capital transfers between the firms is an accepted instrument for the regulator to ensure the stability of the system. Obviously, the task of capital allocation exactly works that way. A situation where the regulator does not have this instrument would imply that the subgroups are regulated in an isolated manner, e.g.\@ insurances are regulated ignoring what is happening in the banking sector. 
    \item Obviously, the assumption is always fulfilled if we set \(\cC=\cC\nk{\bR}\). According to the definition of the systemic risk measure \(R\), in the least conservative or least restrictive situation we are always able to apply Theorem~\ref{theo:biaginisysnewproperties}.
  \end{enumerate}
\end{rem}

\begin{prop}[~\cite{biagini2020fairness}] If \(\alpha_{\Lambda,B}\nk{\Xi}<\infty{}\), the penalty function in~\eqref{eq:penaltybiagini} can be written as
  \begin{equation}\label{eq:alternativerepresentationpenaltybiagini}
    \alpha_{\Lambda,B}\nk{\Xi}:=\sup_{\bar{Z}\in\cA}\gk{\npairing{\bar{Z}}{\Xi}}=\inf_{\lambda>0}\nk{\frac{1}{\lambda}B+\frac{1}{\lambda}\sum_{i=1}^{n}\bE\ek{l_{i}^{*}\nk{\lambda\Xi_{i}}}}
  \end{equation}
and \(\bE\ek{l_{i}^{*}\nk{\lambda\Xi_{i}}}<\infty{}\) for all \(i\) and all \(\lambda>0\). Additionally, the infimum is attained in~\eqref{eq:alternativerepresentationpenaltybiagini}, i.e.,
\begin{equation}
  \begin{split}
  \alpha_{\Lambda,B}\nk{\Xi}
  &=\sum_{i=1}^{n}\pairing{\nk{l_{i}^{*}}^{\prime}\nk{\lambda^{*}\Xi_{i}}}{\Xi_{i}}\\
  &=\sum_{i=1}^{n}\bE\ek{\Xi_{i}\nk{l_{i}^{*}}^{\prime}\nk{\lambda^{*}\Xi_{i}}},
\end{split}
\end{equation}
where \(\lambda^{*}>0\) is the unique solution of the equation
\begin{equation}
  B+\sum_{i=1}^{n}\bE\ek{l_{i}^{*}\nk{\lambda\Xi_{i}}}-\lambda\sum_{i=1}^{n}\bE\ek{\Xi_{i}\nk{l_{i}^{*}}^{\prime}\nk{\lambda\Xi_{i}}}=0.
\end{equation}
\end{prop}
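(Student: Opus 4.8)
The plan is to read $\alpha_{\Lambda,B}(\Xi)=\sup_{\bar Z\in\cA}\npairing{\bar Z}{\Xi}$ as the value of the convex maximization problem ``maximize $\sum_{i=1}^{n}\bE[Z_i\Xi_i]$ over $\bar Z\in\MPhi$ subject to $\sum_{i=1}^{n}\bE[l_i(Z_i)]\le B$'' and to attack it by Lagrangian duality, writing the Lagrange multiplier in the form $1/\lambda$ with $\lambda>0$. The relaxed (partial) Lagrangian is
\[
  L_{\lambda}(\bar Z)=\frac{1}{\lambda}B+\sum_{i=1}^{n}\bE\!\ek{Z_i\Xi_i-\tfrac1\lambda l_i(Z_i)},
\]
and since the constraint has been dropped the inner maximization decouples over $\omega$ and over $i$; by definition of the conjugate, $\sup_{x\in\bR}\gk{x\Xi_i(\omega)-\tfrac1\lambda l_i(x)}=\tfrac1\lambda l_i^{*}(\lambda\Xi_i(\omega))$, so $\sup_{\bar Z}L_{\lambda}(\bar Z)=\tfrac1\lambda\nk{B+\sum_i\bE[l_i^{*}(\lambda\Xi_i)]}$, which is exactly the right-hand side of~\eqref{eq:alternativerepresentationpenaltybiagini}.

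The inequality ``$\le$'' in~\eqref{eq:alternativerepresentationpenaltybiagini} is the easy half (weak duality): for $\bar Z\in\cA$ and $\lambda>0$, Fenchel's inequality $\lambda Z_i(\omega)\Xi_i(\omega)\le l_i(Z_i(\omega))+l_i^{*}(\lambda\Xi_i(\omega))$ holds pointwise; dividing by $\lambda$, summing over $i$, taking expectations and using $\sum_i\bE[l_i(Z_i)]\le B$ gives $\npairing{\bar Z}{\Xi}\le\tfrac1\lambda\nk{B+\sum_i\bE[l_i^{*}(\lambda\Xi_i)]}$, after which one takes the supremum over $\bar Z$ and the infimum over $\lambda>0$. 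For the reverse inequality and the attainment claims I would verify a Slater condition and invoke strong Lagrangian duality (equivalently, Rockafellar's theorem on conjugates of integral functionals on the decomposable space $\MPhi\supseteq\Linf$). Strict feasibility is precisely Assumption~(iii): since $B>\Lambda(-\infty)=\sum_i l_i(-\infty)$ and each $l_i$ is continuous and increasing, a sufficiently negative constant vector $\bar m\in\bR^{n}\subseteq\MPhi$ satisfies $\sum_i l_i(m_i)<B$. Strong duality then yields equality in~\eqref{eq:alternativerepresentationpenaltybiagini}; attainment of the outer infimum follows either from the general theory (a dual optimal solution exists under Slater once the primal value is finite) or, concretely, from convexity together with coercivity of $\lambda\mapsto\tfrac1\lambda\nk{B+\sum_i\bE[l_i^{*}(\lambda\Xi_i)]}$: as $\lambda\to0^{+}$ the leading term is $\tfrac1\lambda\nk{B-\Lambda(-\infty)}\to+\infty$ by Assumption~(iii), and as $\lambda\to\infty$ superlinearity of $l_i^{*}$ forced by the Inada condition $l_i^{\prime}(\infty)=\infty$ drives it to $+\infty$ on the set $\gk{\Xi_i>0}$. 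Call the minimizer $\lambda^{*}\in(0,\infty)$.

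It remains to identify $\lambda^{*}$ and the optimal $\bar Z$. For fixed $\lambda>0$ the pointwise maximizer of $x\mapsto x\Xi_i-\tfrac1\lambda l_i(x)$ is $Z_i^{*}=(l_i^{\prime})^{-1}(\lambda\Xi_i)=\nk{l_i^{*}}^{\prime}(\lambda\Xi_i)$, the Inada conditions guaranteeing that $\lambda\Xi_i$ lies in the interior of $\mathrm{dom}\,l_i^{*}$; plugging back gives $\alpha_{\Lambda,B}(\Xi)=\npairing{\bar Z^{*}}{\Xi}=\sum_i\bE[\Xi_i\nk{l_i^{*}}^{\prime}(\lambda^{*}\Xi_i)]$. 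Complementary slackness with the strictly positive multiplier $1/\lambda^{*}$ forces $\sum_i\bE[l_i(Z_i^{*})]=B$; using the equality case of Fenchel's inequality, $l_i(Z_i^{*})=\lambda^{*}\Xi_i\nk{l_i^{*}}^{\prime}(\lambda^{*}\Xi_i)-l_i^{*}(\lambda^{*}\Xi_i)$, and summing yields exactly the stated equation for $\lambda^{*}$; its uniqueness follows because $\mu\mapsto\mu\,l_i^{*}(\Xi_i/\mu)$ is the perspective of the \emph{strictly} convex function $l_i^{*}$, hence strictly convex in $\mu=1/\lambda$ on $\gk{\Xi_i>0}$, which has positive measure since $\Xi_i$ is a density. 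Finally, the a priori finiteness $\bE[l_i^{*}(\lambda\Xi_i)]<\infty$ for all $i$ and all $\lambda>0$ follows from the hypothesis: $\alpha_{\Lambda,B}(\Xi)<\infty$ together with equality in~\eqref{eq:alternativerepresentationpenaltybiagini} produces some $\lambda_0>0$ with $\sum_i\bE[l_i^{*}(\lambda_0\Xi_i)]<\infty$, hence each $\bE[l_i^{*}(\lambda_0\Xi_i)]<\infty$ ($l_i^{*}$ being bounded below by $-l_i(0)$), and Assumption~(v), applied with density $\lambda_0\Xi_i$, propagates this to all positive scalings.

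The step I expect to be the main obstacle is the strong-duality step itself: ruling out a duality gap and, above all, justifying the interchange $\sup_{\bar Z\in\MPhi}\bE[\,\cdot\,]=\bE[\sup_{x}\,\cdot\,]$, which requires checking that the pointwise optimizer $\omega\mapsto\nk{l_i^{*}}^{\prime}(\lambda\Xi_i(\omega))$ defines an element of $\MPhi$, or is approximable by truncations in $\Linf\subseteq\MPhi$ with the requisite monotone or dominated convergence. This is where the growth of the $l_i$ encoded in the Inada conditions and the $\Delta_2$-type Assumption~(v) are genuinely used; the remaining ingredients---weak duality, the first-order conditions, complementary slackness and the uniqueness of $\lambda^{*}$---are routine convex analysis.
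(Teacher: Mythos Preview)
The paper does not give its own proof here; it simply refers to Proposition~3.4 of \cite{biagini2020fairness}. Your Lagrangian-duality reconstruction---weak duality via Fenchel's inequality, the Slater point from Assumption~(iii), strong duality together with the Rockafellar interchange of supremum and integral on the decomposable space $\MPhi$, coercivity of $\lambda\mapsto\tfrac{1}{\lambda}\bigl(B+\sum_i\bE[l_i^{*}(\lambda\Xi_i)]\bigr)$ at both endpoints, and identification of $\lambda^{*}$ through complementary slackness and strict convexity of the perspective of $l_i^{*}$---is the standard and correct route and is indeed the argument carried out in the cited source. You have also correctly isolated the one genuinely nontrivial step, namely justifying $\sup_{\bar Z\in\MPhi}\bE[\,\cdot\,]=\bE[\sup_{x}\,\cdot\,]$ and the admissibility of the pointwise optimizer $\omega\mapsto(l_i^{*})^{\prime}(\lambda\Xi_i(\omega))$.

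One small imprecision: in the last step you invoke Assumption~(v) ``with density $\lambda_0\Xi_i$'', but $\lambda_0\Xi_i$ is a probability density only when $\lambda_0=1$ (recall $\pairing{1}{\Xi_i}=1$ for $\Xi\in\cD$), whereas Assumption~(v) is formulated for genuine densities $\xi$. The intended application is with $\xi=\Xi_i$ itself, which requires one to first check that $\alpha_{\Lambda,B}(\Xi)<\infty$ already forces $\bE[l_i^{*}(\Xi_i)]<\infty$; this is handled in \cite{biagini2020fairness} and does not affect the architecture of your argument.
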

\begin{proof}
  See~\cite{biagini2020fairness} Proposition 3.4.
\end{proof}
\begin{ex}\label{ex:lambdastarandpenaltyexponentialcase}
  Let \(l_{i}\nk{x}=\frac{1}{\alpha_{i}}\exp\nk{\alpha_{i}x},\ \alpha_{i}>0,\ i\in\oneton{}\), as in Example~\ref{ex:exponentialcase}.
  We have \(l_{i}^{*}\nk{y}=\frac{y}{\alpha_{i}}\nk{\ln\nk{y}-1}\) and \(\nk{l_{i}^{*}}^{\prime}\nk{y}=\frac{\ln\nk{y}}{\alpha_{i}}\). This yields
  \begin{equation}\label{eq:lambdastarexponentialcase}
    \lambda^{*}=\theta{B}
  \end{equation}
  and
  \begin{equation}\label{eq:penaltyexponentialcase}
    \alpha_{\Lambda,B}\nk{\Xi}=\sum_{i=1}^{n}\frac{1}{\alpha_{i}}\nk{H\nk{\bQ_{i}^{\bar{X}}\mid\bP}+\ln\nk{\theta{B}}},
  \end{equation}
  where \(\bQ_{i}^{\bar{X}}\) is the probability measure with density \(\Xiopt_{i}\).
  This certain structure of the penalty function yields another interesting property of the systemic risk measure \(\brisk^{ex}\). Since we already know that the optimal solution to the dual problem~\ref{eq:biaginisysdual} only distinguishes between the groups, we are able to decompose \(\brisk^{ex}\) into a sum of systemic risk measures. More precisely, if we set \(B_{j}=\frac{\theta}{\theta_{j}}B\) for \(j\in\gk{1,\ldots,h}\) and consider the systemic risk measures
  \[
    \brisk^{ex}_{j}\nk{\bar{Z}}=\inf\gk{\sum_{i\in{I_{j}}}{Y}_{i}\mid\bar{Y}\in\cC^{I_j},\bE\ek{\sum_{i\in{I_{j}}}\frac{1}{\alpha_{i}}\exp\nk{\alpha_{i}\nk{{Z}_i-{Y}_i}}}=B_{j}},
  \]
  for \(\bar{Z}\in\underset{i\in{I}_{j}}{\bigtimes}M^{\phi_{i}}\). Now, for \(\bar{X}\in\MPhi{}\), let \(\pi_{I_{j}}\circ\bar{X}=\nk{\bar{X}_{\tilde{n}_{j-1}},\ldots,\bar{X}_{\tilde{n}_{j}}}\). Then we have
  \[
    \brisk^{ex}\nk{\bar{X}}=\sum_{j=1}^{h}\brisk^{ex}_{j}\nk{\pi_{I_{j}}\circ\bar{X}}.
  \]
  The interpretation of this property is that the impact between the groups is fully determined through their risk aversion coefficients. Additionally, the system-wide threshold \(B\) effects every group. So, in other words, the potential outcomes of each group does not affect the risk of the other groups. This property could be seen as a clear indicator that this type of systemic risk measure is only useful if we consider systems consisting of only one group.
\end{ex}

\begin{rem}\label{rem:biaginiidentification}
  For the systemic risk measure \(\brisk^{ex}\), the set \(\cC{}\) satisfies all the assumptions of Theorem~\ref{theo:biaginisysnewproperties}, if the system consists of exactly one group. Since \(\brisk^{ex}\mid_{\bR^{n}}\nk{\nk{\Mexp{}}^{n}}=\Mexp{}\) we are able to apply Theorem~\ref{theo:sysriskcomposition}. On the other hand,
  Examples~\ref{ex:exponentialcase} and~\ref{ex:lambdastarandpenaltyexponentialcase} together yield, that the systemic risk measure \(\brisk^{ex}\) takes the form 
  \[
    \brisk^{ex}\nk{\bar{X}}=\frac{1}{\theta}\ln\bE\ek{\exp\nk{\theta\sum_{i=1}^{n}X_{i}}}-c=\nk{\rhosing^{entr}\circ\Lambda^{sum,c}}\nk{\bar{X}}=\rho^{ses,c}\nk{\bar{X}}
  \]
  for \(c=\frac{1}{\theta}\ln\nk{\theta{B}}\). 
  In the situation where more groups are relevant, we simply split the system into these groups and compute the systemic risk as presented in the previous example. In this case each systemic risk measure \(\brisk^{ex}_{j}\) for subgroup \(j\) uses the set \(\cC^{I_{j}}\) which again satisfies all the assumptions of Theorem~\ref{theo:biaginisysnewproperties}. Therefore, we are able to decompose all \(\brisk^{ex}_{j}\). The only connection between the groups is captured in the parameter \(\theta{}\).
\end{rem}

%*****************************************************************************************

\section{CARs for Systemic Risk Measures}\label{sec:carsforsys}
The question of how to allocate the risk of the whole system to its constituent parts plays an important role. Let us start by giving a formal definition of CARs for systemic risk measures.

\begin{defi}\label{def:syscar}
 Given a systemic risk measure \(\rho{}\) on \(\productspace{}\), a systemic CAR is a map \(CS:\productspace\times\productspace\to\bR{}\) such that \(CS(\bar{X};\bar{X})=\rho(\bar{X})\) for every \(\bar{X}\in\productspace{}\). We say \(CS\) is a systemic CAR with respect to \(\rho{}\) and call the vector
 \[
  CS(\bar{X}):=\nk{CS(e_1{X}_{1},\bar{X}),\ldots,CS(e_n{X}_{n},\bar{X})}
 \]
 systemic capital allocation.
\end{defi}
\(CS(\bar{Y};\bar{X})\) describes the portion of risk which is carried by \(\bar{Y}\) considered as a sub-system of \(\bar{X}\).
The condition \(CS(\bar{X};\bar{X})=\rho(\bar{X})\) means that the capital allocated to the whole system \(\bar{X}\) is exactly the risk capital of \(\bar{X}\), i.e. \(\rho(\bar{X})\). We say a systemic CAR fulfills the full allocation property if the following holds:
\emph{
 \begin{enumerate}[label= (CS\arabic*)]
  \item\label{sysfullallocation} Full allocation: \(CS(\bar{X};\bar{X})=\sum_{i=1}^n CS(e_{i}{X}_{i};\bar{X})\), for every \(\bar{X}\in\productspace{}\).
 \end{enumerate}
}
The full allocation property~\ref{sysfullallocation} represents some kind of fairness condition. On the one hand, the regulator of the system can guarantee that the risk of the system is completely allocated to the participants. On the other hand, it is not over conservative in the sense that the allocated is no greater than the risk that actually occurs. Fairness properties on the level of the single institutions will be discussed later.

Both types of systemic risk measures studied in this paper admit a natural systemic CAR connected to optimal solutions of the corresponding dual problems.

%*****************************************************************************************

\subsection{CARs for First Aggregate Systemic Risk Measures}\label{subseccarsforfirstaggregatesystemicriskmeasures}

If the dual problem~\eqref{eq:sysdualposhom} of a positively homogeneous systemic risk measure has an optimal solution \(\xixiopt{}\) there is a natural way to define a systemic CAR.\@ We simply set
\[
 CS^{\Xiopt}\nk{e_{i}{X}_{i};\bar{X}}=\npairing{e_{i}{X}_{i}}{\Xiopt}=\pairing{{X}_{i}}{\Xiopt_{i}}.
\]
Obviously, \(CS^{\Xiopt}\) is a systemic CAR and since
\[
 CS^{\Xiopt}\nk{\bar{X};\bar{X}}=\npairing{\bar{X}}{\Xiopt}=\sum_{i=1}^{n}\pairing{X_{i}}{\Xiopt_{i}}=\sum_{i=1}^{n}CS^{\Xiopt}\nk{e_{i}{X}_{i};\bar{X}}
\]
it also satisfies the full allocation property~\ref{sysfullallocation}. Since in general \(CS^{\Xiopt}(\bar{X})\not=CS^{\tilde{\Xi}^{\bar{X}}}(\bar{X})\) for two distinct optimal solutions \(\xixiopt{}{}\) and \(\nk{\tilde{\xi}^{\bar{X}},\tilde{\Xi}^{\bar{X}}}\), one has to decide which systemic capital allocation is optimal in some sense. However, if the dual problem~\eqref{eq:sysdualposhom} has a unique optimal solution \(\xixiopt{}\) the corresponding systemic CAR is unique and therefore the systemic capital allocation is unique. In addition, Corollary~\ref{cor:sysdirectionalforposhomunique} tells us that we are able to compute \(\Xiopt{}\) as the Gâteaux derivative of \(\rho{}\) at \(\bar{X}\). Therefore, this systemic CAR can be seen as a generalization of the Euler principle for single-firm risk measures (see~\cite{tasche2007capital} for more details on the Euler principle).
If the systemic risk measure is not positively homogeneous the situation slightly changes. But we are still able to define a systemic CAR based on optimal solutions to the dual problem~\eqref{eq:sysdualconvex}. In contrast to the positively homogeneous case one needs to apportion the additional penalty term. Therefore, the key task is to find fair rules.
Let \(\xixiopt{}{}\) be an optimal solution to the dual problem~\eqref{eq:sysdualconvex}. Then
\[
 CS^{\Xiopt}\nk{e_{i}{X}_{i};\bar{X}}=\npairing{e_{i}{X}_{i}}{\Xiopt}-\gamma_i\alpha\xixiopt=\pairing{X_{i}}{\Xiopt_{i}}-\gamma_i\alpha\xixiopt,
\]
where \(\gamma_i\), \(i\in\oneton{}\), are chosen such that \(\sum_{i=1}^{n}\gamma_i=1\), is a systemic CAR which fulfills the full allocation property~\ref{sysfullallocation}.~\cite{kromer2016systemic} presented some reasonable choices for \(\gamma{}\).

\begin{ex}\label{ex:penaltyallocationentropic}
  Consider the systemic risk measure \(\rho^{ses,c}\) presented in example~\ref{ex:sysentropic}. We have already seen that this systemic risk measure has a unique optimal solution to the dual problem given by 
  \[
    \xixiopt{}=\nk{\nabla\rhosing^{entr}\nk{\sum_{i=1}^{n}X_{i}},1_{n}\nabla\rhosing^{entr}\nk{\sum_{i=1}^{n}X_{i}}}.
  \]
  The penalty function is given by 
  \begin{align*}
    \alpha\xixiopt
    &=\frac{1}{\theta}H\nk{\bQ^{\bar{X}}\mid\bP}+c\\
    &=\frac{1}{\theta}\bE\ek{\xiopt\ln\nk{\xiopt}}+c\\
    &=\bE\ek{\xiopt\sum_{i=1}^{n}X_{i}}-\frac{1}{\theta}\ln\nk{\bE\ek{\exp\nk{\theta\sum_{i=1}^{n}X_{i}}}}+c,
  \end{align*}
  where \(\bQ^{\bar{X}}\) is the probability measure with density \(\xiopt{}\). 
  A natural way to apportion the penalty term is to use the individual risk aversion parameter, i.e.
  \[
    \gamma_{i}=\frac{\theta}{\alpha_{i}}.
  \]
  \(\gamma_{i}\) is simply the contribution of participant \(i\) to the systemic risk aversion parameter. Therefore, risk friendly participants are penalized more than risk averse participants. Now we have 
  \[
    CS^{\Xi}\nk{e_{i}X_{i};\bar{X}}=\bE\ek{\xiopt\nk{X_{i}-\frac{\theta}{\alpha_{i}}\sum_{i=1}^{n}X_{i}-\frac{1}{\alpha_{i}}\ln\nk{\bE\ek{\exp\nk{\theta\sum_{i=1}^{n}X_{i}}}}-\frac{\theta}{\alpha_{i}}c}}.
  \]
  Let us denote 
  \[
    Y_{i}^{\bar{X}}=X_{i}-\frac{\theta}{\alpha_{i}}\sum_{i=1}^{n}X_{i}-\frac{1}{\alpha_{i}}\ln\nk{\bE\ek{\exp\nk{\theta\sum_{i=1}^{n}X_{i}}}}-\frac{\theta}{\alpha_{i}}c.
  \]
  Then
  \[
    \sum_{i=1}^{n}Y_{i}^{\bar{X}}=\rho^{ses}\nk{\bar{X}}
  \]
  and the vector \(\bar{Y}^{\bar{X}}=\nk{Y_{1}^{\bar{X}},\ldots,Y_{n}^{\bar{X}}}\) can be seen as a scenario dependent allocation. In~\ref{ex:optimalYexponentialcase} we will see that \(\bar{Y}^{\bar{X}}\) is exactly the unique optimal solution to~\eqref{eq:exponentialcase} (h=1) if we choose \(c=\frac{1}{\theta}\ln\nk{\theta{}B}\). For \(h>1\) we can derive \(\bar{Y}^{\bar{X}}\) with the same procedure as already mentioned in~\ref{rem:biaginiidentification}.
\end{ex}

%*****************************************************************************************

\subsection{CARs for First Inject Capital Systemic Risk Measures}\label{subsec:carsforfirstinjectcapitalsystemicriskmeasures}

The canonical way to find a CAR for systemic risk measures of type~\eqref{eq:sysbiaginiran} is provided by simply solving the optimization problem. If one can guarantee the existence of such an optimal allocation \(\bar{Y}_{\bar{X}}\) it is of explicit form as stated in the following proposition.
\begin{prop}
 Suppose that there exists an optimal allocation \(\bar{Y}^{\bar{X}}\in\cC{}\) to~\eqref{eq:biaginisysunderassumption}. Let \(\Xiopt{}\) be an optimal solution to the dual problem~\eqref{eq:biaginisysdual}. Then
  \begin{equation}\label{eq:biaginioptimalallocation}
  Y^{\bar{X}}_{i}=X_{i}-\nk{l_{i}^{*}}^{\prime}\nk{\lambda^{*}\Xiopt_{i}}
 \end{equation}
 and \(\Xiopt{}\) is the unique optimal solution to the dual problem~\eqref{eq:biaginisysdual}.
\end{prop}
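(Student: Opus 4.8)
The plan is to combine strong duality \eqref{eq:biaginisysdual} with the explicit form of the penalty function recalled in \eqref{eq:alternativerepresentationpenaltybiagini}, and to extract \eqref{eq:biaginioptimalallocation} from the resulting complementary-slackness relations together with the equality case of the Fenchel--Young inequality; the uniqueness of $\Xiopt$ then falls out for free.

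First I would record the two optimality identities. Since $\bar{Y}^{\bar{X}}\in\cC$ is feasible and optimal for \eqref{eq:biaginisysunderassumption}, one has $\brisk(\bar{X})=\sum_{i=1}^{n}Y_{i}^{\bar{X}}$, and the constraint forces $\bar{X}-\bar{Y}^{\bar{X}}\in\MPhi$ with $\sum_{i=1}^{n}\bE\ek{l_{i}(X_{i}-Y_{i}^{\bar{X}})}\le B$, hence $\bar{X}-\bar{Y}^{\bar{X}}\in\cA$. Since $\Xiopt$ is optimal for \eqref{eq:biaginisysdual}, $\brisk(\bar{X})=\npairing{\bar{X}}{\Xiopt}-\alpha_{\Lambda,B}(\Xiopt)$. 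Splitting $\npairing{\bar{X}}{\Xiopt}=\npairing{\bar{X}-\bar{Y}^{\bar{X}}}{\Xiopt}+\npairing{\bar{Y}^{\bar{X}}}{\Xiopt}$ and eliminating $\brisk(\bar{X})$ gives
\[
 \alpha_{\Lambda,B}(\Xiopt)-\npairing{\bar{X}-\bar{Y}^{\bar{X}}}{\Xiopt}=\npairing{\bar{Y}^{\bar{X}}}{\Xiopt}-\sum_{i=1}^{n}Y_{i}^{\bar{X}}.
\]
The left-hand side is $\ge 0$ because $\bar{X}-\bar{Y}^{\bar{X}}\in\cA$ and $\alpha_{\Lambda,B}(\Xiopt)=\sup_{\bar{Z}\in\cA}\npairing{\bar{Z}}{\Xiopt}$; the right-hand side is $\le 0$ because $\bar{Y}^{\bar{X}}\in\cC$ and $\Xiopt\in\cD$. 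Hence both sides vanish, and in particular $\npairing{\bar{X}-\bar{Y}^{\bar{X}}}{\Xiopt}=\alpha_{\Lambda,B}(\Xiopt)$: the vector $\bar{X}-\bar{Y}^{\bar{X}}$ attains the supremum defining the penalty at $\Xiopt$.

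Next I would make this pointwise. Since $\Xiopt\in\cD\subseteq dom\nk{\alpha_{\Lambda,B}}$, the representation \eqref{eq:alternativerepresentationpenaltybiagini} applies, so $\alpha_{\Lambda,B}(\Xiopt)=\frac{1}{\lambda^{*}}B+\frac{1}{\lambda^{*}}\sum_{i=1}^{n}\bE\ek{l_{i}^{*}(\lambda^{*}\Xiopt_{i})}$ with each $\bE\ek{l_{i}^{*}(\lambda^{*}\Xiopt_{i})}<\infty$, where $\lambda^{*}>0$ attains the infimum in \eqref{eq:alternativerepresentationpenaltybiagini}. Applying the Fenchel--Young inequality $zy\le l_{i}(z)+l_{i}^{*}(y)$ with $z=X_{i}-Y_{i}^{\bar{X}}$ and $y=\lambda^{*}\Xiopt_{i}$, dividing by $\lambda^{*}$, summing over $i$ and integrating yields
\[
 \npairing{\bar{X}-\bar{Y}^{\bar{X}}}{\Xiopt}\le\frac{1}{\lambda^{*}}\sum_{i=1}^{n}\bE\ek{l_{i}(X_{i}-Y_{i}^{\bar{X}})}+\frac{1}{\lambda^{*}}\sum_{i=1}^{n}\bE\ek{l_{i}^{*}(\lambda^{*}\Xiopt_{i})}\le\alpha_{\Lambda,B}(\Xiopt),
\]
the last step using $\sum_{i=1}^{n}\bE\ek{l_{i}(X_{i}-Y_{i}^{\bar{X}})}\le B$ and $\lambda^{*}>0$. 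By the previous paragraph the two ends coincide, so both inequalities are equalities; hence, for each $i$, the nonnegative random variable $l_{i}(X_{i}-Y_{i}^{\bar{X}})+l_{i}^{*}(\lambda^{*}\Xiopt_{i})-\lambda^{*}(X_{i}-Y_{i}^{\bar{X}})\Xiopt_{i}$ integrates to zero and therefore vanishes $\bP$-a.s. This is the equality case of Fenchel--Young, which by the differentiability and strict convexity of $l_{i}$ forces $\lambda^{*}\Xiopt_{i}=l_{i}'(X_{i}-Y_{i}^{\bar{X}})$, i.e.\ $X_{i}-Y_{i}^{\bar{X}}=(l_{i}')^{-1}(\lambda^{*}\Xiopt_{i})=(l_{i}^{*})'(\lambda^{*}\Xiopt_{i})$, which is exactly \eqref{eq:biaginioptimalallocation} (and, incidentally, shows $\Xiopt_{i}>0$ $\bP$-a.s.\ via the Inada condition).

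For uniqueness, let $\tilde{\Xi}$ be any optimal solution of \eqref{eq:biaginisysdual} and $\tilde{\lambda}^{*}>0$ its associated parameter. Running the argument above with the same, fixed optimal allocation $\bar{Y}^{\bar{X}}$ gives $(l_{i}^{*})'(\lambda^{*}\Xiopt_{i})=X_{i}-Y_{i}^{\bar{X}}=(l_{i}^{*})'(\tilde{\lambda}^{*}\tilde{\Xi}_{i})$ $\bP$-a.s.; since $(l_{i}^{*})'=(l_{i}')^{-1}$ is strictly increasing, hence injective, $\lambda^{*}\Xiopt_{i}=\tilde{\lambda}^{*}\tilde{\Xi}_{i}$ $\bP$-a.s., and taking expectations with $\pairing{1}{\Xiopt_{i}}=\pairing{1}{\tilde{\Xi}_{i}}=1$ yields $\lambda^{*}=\tilde{\lambda}^{*}$, so $\Xiopt_{i}=\tilde{\Xi}_{i}$ for every $i$, i.e.\ $\Xiopt=\tilde{\Xi}$. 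The only delicate point is the passage from the integrated Fenchel--Young equality to the $\bP$-a.s.\ pointwise equality and the legitimacy of the bilinear manipulations: this rests on the finiteness of the pairings $\npairing{\cdot}{\Xiopt}$ (Orlicz--Hölder for $\bar{X}-\bar{Y}^{\bar{X}}\in\MPhi$ and $\Xiopt\in L^{\Phi^{*}}$), on $\bE\ek{l_{i}^{*}(\lambda^{*}\Xiopt_{i})}<\infty$, and on the fact that each $\bE\ek{l_{i}(X_{i}-Y_{i}^{\bar{X}})}$ is finite (bounded below by an integrable affine minorant and summing to at most $B$). Everything else is bookkeeping with the two duality identities.
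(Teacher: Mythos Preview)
Your argument is correct. The paper itself gives no proof of this proposition but simply cites \cite{biagini2020fairness}, Proposition~4.11 and Corollary~4.13, so there is no in-paper argument to compare against; your proof is a clean self-contained version of the natural Fenchel-duality route (complementary slackness from strong duality, then pointwise equality in Fenchel--Young via strict convexity and differentiability of the $l_i$, then uniqueness from injectivity of $(l_i^*)'=(l_i')^{-1}$ together with the normalisation $\pairing{1}{\Xi_i}=1$), and this is indeed essentially how the cited reference proceeds.
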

\begin{proof}
 See~\cite{biagini2020fairness} Proposition 4.11 and Corollary 4.13.
\end{proof}
In order to guarantee the existence of an optimal allocation \(\bar{Y}^{\bar{X}}\in\cC_{0}\)
one has to make further assumptions on the set \(\cC_{0}\subseteq\cC\nk{\bR}\).
\begin{defi}\label{defi:truncation}
 A set \(\cC_{0}\subseteq\cC\nk{\bR}\) is closed under truncation if for each \(\bar{Y}\in\cC_{0}\) there exist \(m^{\bar{Y}}\in\bN{}\) and \(\bar{c}^{\bar{Y}}\in\bR^{n}\) such that \(\sum_{i=1}^{n}\bar{c}_{i}^{\bar{Y}}=\sum_{i=1}^{n}Y_{i}:=c^{\bar{Y}}\in\bR{}\) and for all \(m\geq{m}^{\bar{Y}}\)
 \begin{equation}\label{eq:truncation}
  \bar{Y}^{m}:=\bar{Y}\ind{\gk{\cap_{i=1}^{n}\gk{\abs{\bar{Y}^{n}}<m}}}+\bar{c}^{\bar{Y}}\ind{\gk{\cap_{i=1}^{n}\gk{\abs{\bar{Y}^{n}}\geq{m}}}}\in\cC_{0}.
 \end{equation}
\end{defi}
\begin{theo}\label{theo:optimaly}
 Suppose that \(\cC_{0}\) is closed under convergence in probability and closed under truncation. Then there exists a unique optimal solution to~\eqref{eq:biaginisysunderassumption} which coincides with
 \begin{equation}
  X_{i}-\nk{l_{i}^{*}}^{\prime}\nk{\lambda^{*}\Xiopt}\in\cC_{0}\cap\nk{\Lone\probspace{}\cap\Lone\nk{\Omega,\cF,\bQopt_{i}}},
 \end{equation}
 where \(\Xiopt{}\) is the unique optimal solution to the dual problem~\eqref{eq:biaginisysdual} and \(\bQopt_{i}\) is the probability measure with density \(\Xiopt_{i}\).
\end{theo}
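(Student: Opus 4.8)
By the Proposition preceding Definition~\ref{defi:truncation} (\cite{biagini2020fairness} Prop.~4.11 and Cor.~4.13), once an optimal allocation in $\cC=\cC_0\cap\MPhi$ is known to exist it automatically equals $\bigl(X_{i}-(l_{i}^{*})'(\lambda^{*}\Xiopt_{i})\bigr)_{i}$ and forces $\Xiopt$ to be the unique optimizer of the dual problem~\eqref{eq:biaginisysdual}. Hence the real content of the theorem is: (a) existence of a minimizer of~\eqref{eq:biaginisysunderassumption} over $\cC$, (b) its uniqueness, and (c) the integrability claim. Fix $\bar X\in\MPhi$; by Proposition~\ref{prop:biaginisysproperties}, $\brisk(\bar X)\in\bR$. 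Choose a minimizing sequence $\bar Y^{k}\in\cC$ with $\bE\bigl[\sum_{i}l_{i}(X_{i}-Y^{k}_{i})\bigr]\le B$ and with the (deterministic, since $\cC\subseteq\cC(\bR)$) sums $c^{k}:=\sum_{i}Y^{k}_{i}\to\brisk(\bar X)$.

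\textbf{Compactness and a limit in $\cC_0$.} The Inada condition $l_{i}'(\infty)=\infty$ makes each $l_{i}$ superlinear at $+\infty$, so $\sup_{k}\bE\bigl[l_{i}\bigl((X_{i}-Y^{k}_{i})^{+}\bigr)\bigr]<\infty$ together with the de~la~Vallée-Poussin criterion gives uniform integrability of $\{(X_{i}-Y^{k}_{i})^{+}\}_{k}$, hence of $\{(Y^{k}_{i})^{-}\}_{k}$; with $\sum_{i}(Y^{k}_{i})^{+}=c^{k}+\sum_{i}(Y^{k}_{i})^{-}$ this yields uniform integrability of the whole vector sequence in $\Lone(\bP)^{n}$. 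A Komlós / Delbaen--Schachermayer argument then produces forward convex combinations $\hat Y^{k}\in\mathrm{conv}(\bar Y^{k},\bar Y^{k+1},\dots)$ converging $\bP$-a.s.\ and, by uniform integrability, in $\Lone(\bP)^{n}$ to some $\bar Y^{*}$. Convexity of $\cC_0$ and of the $l_{i}$ keeps the $\hat Y^{k}$ feasible and in $\cC_0$ with $\sum_{i}\hat Y^{k}_{i}\to\brisk(\bar X)$; closedness of $\cC_0$ under convergence in probability gives $\bar Y^{*}\in\cC_0$; and Fatou (applied after subtracting an affine minorant of $l_{i}$, whose integral passes to the limit by the $\Lone$-convergence) gives $\bE\bigl[\sum_{i}l_{i}(X_{i}-Y^{*}_{i})\bigr]\le B$. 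Thus $\bar Y^{*}\in\cC_0$ is feasible with $\sum_{i}Y^{*}_{i}=\brisk(\bar X)$.

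\textbf{From $\cC_0$ into $\cC$ --- the main obstacle.} What remains is $\bar Y^{*}\in\MPhi$. Here closedness under truncation enters: the truncations $\bar Y^{*,m}$ of Definition~\ref{defi:truncation} are bounded, so they lie in $\MPhi\cap\cC_0=\cC$, and because $\sum_{i}Y^{*}_{i}\equiv\brisk(\bar X)$ while $\sum_{i}\bar c^{\bar Y^{*}}_{i}=\sum_{i}Y^{*}_{i}$ one has $\sum_{i}Y^{*,m}_{i}=\brisk(\bar X)$ for every $m$. Since $\bar Y^{*,m}\to\bar Y^{*}$ a.s.\ and, off a null-convergent event, equals the fixed bounded vector $\bar c^{\bar Y^{*}}$, dominated convergence --- using that $l_{i}(X_{i}-\cdot)$ maps $\MPhi$ into $\Lone$ --- gives $\bE\bigl[\sum_{i}l_{i}(X_{i}-Y^{*,m}_{i})\bigr]\to\bE\bigl[\sum_{i}l_{i}(X_{i}-Y^{*}_{i})\bigr]\le B$; if this inequality is strict the $\bar Y^{*,m}$ are eventually feasible and already optimal in $\cC$, and if it is an equality a vanishing shift $\bar Y^{*,m}-\delta_{m}1_{n}\in\cC$ (legitimate since $\cC+\bR^{n}\subseteq\cC$, and solvable because $B>\Lambda(-\infty)$ makes the constraint strict once all components are pushed far down, with $\delta_{m}\to0$ from the convergence above) restores feasibility at asymptotically optimal value. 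Coupling this approximation of $\bar Y^{*}$ by elements of $\cC$ with the first-order condition against the dual optimizer $\Xiopt$ from Theorem~\ref{theo:biaginisysdual} identifies $\bar Y^{*}$ with $\bigl(X_{i}-(l_{i}^{*})'(\lambda^{*}\Xiopt_{i})\bigr)_{i}$, and the estimates $\bE\bigl[l_{i}^{*}(\lambda\Xiopt_{i})\bigr]<\infty$ ($\lambda>0$) and $\bE\bigl[\Xiopt_{i}(l_{i}^{*})'(\lambda^{*}\Xiopt_{i})\bigr]<\infty$ of the penalty-function Proposition (\cite{biagini2020fairness} Prop.~3.4) then put $(l_{i}^{*})'(\lambda^{*}\Xiopt_{i})$, hence $\bar Y^{*}$, in $\MPhi$. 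So $\bar Y^{*}\in\cC$ and it attains the infimum in~\eqref{eq:biaginisysunderassumption}. I expect this reconciliation --- compactness naturally taking place in $\cC_0$ while the problem is posed on the smaller $\cC$, and the bookkeeping that keeps feasibility intact under truncation --- to be the hard part.

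\textbf{Uniqueness and integrability.} If $\bar Y^{1}\neq\bar Y^{2}$ were both optimal, then the constraint is active at each (otherwise lower one component to decrease the value), so by strict convexity of the $l_{i}$ the midpoint would satisfy $\bE\bigl[\sum_{i}l_{i}\bigl(X_{i}-\tfrac{1}{2}(Y^{1}_{i}+Y^{2}_{i})\bigr)\bigr]<B$ strictly, and lowering one component of the midpoint would give a feasible allocation of value $<\brisk(\bar X)$ --- impossible. Finally, $\bar Y^{\bar X}\in\MPhi$ gives $Y^{\bar X}_{i}\in\Lone\probspace$, while $\bE_{\bQopt_{i}}\bigl[|Y^{\bar X}_{i}|\bigr]=\bE\bigl[\Xiopt_{i}\,|X_{i}-(l_{i}^{*})'(\lambda^{*}\Xiopt_{i})|\bigr]\le\bE\bigl[\Xiopt_{i}|X_{i}|\bigr]+\bE\bigl[\Xiopt_{i}\,|(l_{i}^{*})'(\lambda^{*}\Xiopt_{i})|\bigr]<\infty$, the first summand finite by the Orlicz--Hölder inequality ($\Xiopt_{i}\in L^{\phi_{i}^{*}}$, $X_{i}\in M^{\phi_{i}}$) and the second by $\alpha_{\Lambda,B}(\Xiopt)=\sum_{i}\bE\bigl[\Xiopt_{i}(l_{i}^{*})'(\lambda^{*}\Xiopt_{i})\bigr]<\infty$; hence $Y^{\bar X}_{i}\in\Lone\nk{\Omega,\cF,\bQopt_{i}}$.
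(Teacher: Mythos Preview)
The paper itself does not prove this theorem; it simply cites \cite{biagini2020fairness}, Theorem~4.19 and Corollary~4.23. Your outline follows the same architecture as that reference --- a Koml\'os-type compactness argument on a minimizing sequence, closedness of $\cC_0$ under convergence in probability to place the limit, truncation to reconcile $\cC_0$ with $\cC$, and strict convexity of the $l_i$ for uniqueness --- so you are essentially reconstructing the cited proof rather than proposing an alternative route.

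That said, your ``From $\cC_0$ into $\cC$'' step contains a genuine problem. You set out to show $\bar Y^{*}\in\MPhi$ so that $\bar Y^{*}\in\cC$, but the theorem only asserts $\bar Y^{\bar X}\in\cC_0\cap L^{1}\probspace\cap L^{1}(\Omega,\cF,\bQopt_i)$, and in \cite{biagini2020fairness} the optimizer is not claimed to lie in $\MPhi$; in general it need not. The role of truncation there is to show that the infimum over $\cC$ agrees with the one over the larger $\cC_0$-feasible set (approximate any feasible $\bar Y\in\cC_0$ by its truncations, which lie in $\cC$ and carry the same value $\sum_i Y_i$), not to upgrade the limit $\bar Y^{*}$ back into $\MPhi$. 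Your argument for $\bar Y^{*}\in\MPhi$ is also circular: you invoke the identification from the proposition preceding Definition~\ref{defi:truncation}, but that proposition already presupposes an optimizer in $\cC$. Finally, the shift $\bar Y^{*,m}-\delta_m 1_n$ has the wrong sign --- lowering $Y$ raises $l_i(X_i-Y_i)$ since each $l_i$ is increasing, so it violates rather than restores the constraint; the correct shift $+\delta_m 1_n$ yields only a minimizing sequence in $\cC$, not a minimizer, which is consistent with the infimum over $\cC$ not being attained in $\cC$. A smaller gap: your uniform-integrability step tacitly assumes each $l_i$ is bounded below so that $\bE\bigl[\sum_i l_i(X_i-Y^k_i)\bigr]\le B$ can be split into componentwise bounds; this holds in the exponential example but is not part of the standing assumptions.
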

\begin{proof}
 See~\cite{biagini2020fairness} Theorem 4.19 and Corollary 4.23.
\end{proof}

\begin{rem}\label{rem:systemiccapitalallocationbiagini}
  In the situation of Theorem~\ref{theo:optimaly}, a suitable systemic capital allocation is given by
  \begin{equation}\label{eq:systemiccapitalallocationbiagini}
    CS^{\nk{\bar{Y}^{\bar{X}},\Xiopt}}\nk{\bar{X}}=\nk{\pairing{Y_{1}^{\bar{X}}}{\Xiopt_{1}},\ldots,\pairing{Y_{n}^{\bar{X}}}{\Xiopt_{n}}}
  \end{equation}
  The vector of systemic probability measures \(\bar{\bQ}^{\bar{X}}\) corresponding to the vector of probability densities \(\Xiopt{}\) is the right one to give a fair valuation of the optimal random allocations (see~\cite{biagini2020fairness}). However, it is not clear which portion of risk is carried by an arbitrary subsystem \(\bar{Y}\) of \(\bar{X}\).
\end{rem}

\begin{ex}\label{ex:optimalYexponentialcase}
For the systemic risk measure presented in~\ref{ex:exponentialcase} we are able to compute the optimal allocation \(\bar{Y}^{\bar{X}}\). For \(j\in\gk{1,\ldots,h}\) and \(i\in{}I_{j}\) it takes the form
\begin{equation}\label{eq:optimalYexponentialcase}
  Y^{\bar{X}}_{i}=\bar{X}_{i}-\frac{\theta_{j}}{\alpha_{i}}\sum_{k\in{I}_{j}}X_{k}+\frac{1}{\alpha_{i}}\ln\nk{\bE\ek{\exp\nk{\theta_{j}\sum_{k\in{I}_{j}}X_{k}}}}-\frac{\ln\nk{\theta{B}}}{\alpha_{i}}.
\end{equation}\label{eq:allocationtogroupjexponentialcase}
The allocation to group \(j\) is given by
\begin{equation}
  d_{j}=\sum_{i\in{I}_{j}}\bar{Y}^{\bar{X}}_{i}=\frac{1}{\theta_{j}}\ln\nk{\bE\ek{\exp\nk{\theta_{j}\sum_{k\in{I}_{j}}X_{k}}}}-\frac{\ln\nk{\theta{B}}}{\theta_{j}}.
\end{equation}
\end{ex}

%*****************************************************************************************

\subsection{Aumann-Shapley CAR for Systemic Risk Measures}\label{subsec:ascarforsystemicriskmeasures}

The CARs presented in the previous subsections are specified for the different types of systemic risk measures. A different way is to define systemic CARs in the spirit of the Aumann-Shapley CAR for single-firm risk measures. If the systemic risk measure \(\rho{}\) is Gâteaux differentiable at \(\gamma\bar{X}\) for all \(\gamma\in\zeroone{}\) we can define
\begin{equation}\label{eq:sysas}
 CS^{AS}\nk{\bar{Y},\bar{X}}=\int_{0}^{1}\delta\rho\nk{\gamma\bar{X},\bar{Y}}d\gamma.
\end{equation}
If we can assume that \(\rho\nk{0}=0\), we have
\begin{align*}
 \rho\nk{\bar{X}}
  & =\rho\nk{1\cdot\bar{X}}-\rho\nk{0\cdot\bar{X}}                                                                             \\
  & =\int_0^1\frac{d}{d\gamma}\nk{\rho\nk{\gamma\bar{X}}}d\gamma                                                               \\
  & =\int_0^1\lim_{\varepsilon\to0}\frac{\rho\nk{\nk{\gamma+\varepsilon}\bar{X}}-\rho\nk{\gamma\bar{X}}}{\varepsilon}d\gamma   \\
  & =\int_0^1\lim_{\varepsilon\to0}\frac{\rho\nk{\gamma\bar{X}+\varepsilon\bar{X}}-\rho\nk{\gamma\bar{X}}}{\varepsilon}d\gamma \\
  & =\int_0^1\delta\rho\nk{\gamma\bar{X},\bar{X}}d\gamma                                                                       \\
  & =CS^{AS}\nk{\bar{X},\bar{X}}.
\end{align*}
The major benefit of this systemic CAR is that we are able to apply it to both types of systemic risk measures. The full allocation property~\ref{sysfullallocation} is clearly satisfied if and only if the systemic risk measure is Gâteaux differentiable with derivative (i.e.\ the Gâteaux differential of \(\rho{}\) is a linear mapping in its second argument).

For systemic risk measures of type~\eqref{eq:syskromer} the chain rule for Gâteaux differentials applies. If we can additionally suppose that the corresponding single-firm risk measure \(\rhosing{}\) is Gâteaux differentiable at \(\Lambda\nk{\gamma\bar{X}}\) for all \(\gamma\in\zeroone{}\) with derivative \(\nabla\rhosing\nk{\Lambda\nk{\gamma\bar{X}}}\) and that the corresponding AR is Gâteaux differentiable at \(\gamma\bar{X}\) for all \(\gamma\in\zeroone{}\) we obtain
\begin{equation}\label{eq:sysaschainrule}
 CS^{AS}\nk{\bar{Y},\bar{X}}=\int_{0}^{1}\pairing{\nabla\rhosing\nk{\Lambda\nk{\gamma\bar{X}}}}{\delta\Lambda\nk{\gamma\bar{X},\bar{Y}}}d\gamma.
\end{equation}

The differentiability conditions on \(\rho{}\) and on \(\rhosing{}\) and \(\Lambda{}\) to derive~\eqref{eq:sysas} and~\eqref{eq:sysaschainrule} are already very restrictive. Moreover, the full allocation property~\ref{sysfullallocation} in~\eqref{eq:sysaschainrule} is satisfied if and only if \(\delta\Lambda{}\) is linear in its second argument. Since many reasonable choices for ARs do not satisfy this assumption,\@~\eqref{eq:sysaschainrule} is not a good choice for a systemic CAR in general.

\begin{ex}\label{ex:gateauxderivativearloss}
 Let us consider the AR \(\Lambda^{\text{loss}}\) presented in~\eqref{eq:arlossesonly}. Since \(\Lambda^{\text{loss}}\) is simply \(\sum_{i=1}^{n}F\nk{X_{i}}\), with \(F:\cX\to\cX{},X\mapsto\nk{X}^{+}\),
 we only need to control if the right directional derivative of \(F\) is linear in its second argument  in order to check whether \(\Lambda^{\text{loss}}\) is Gâteaux differentiable. We obtain
 \begin{align*}
  \delta_{+}F\nk{X,V}
   & =\lim_{h\to0^{+}}\frac{F\nk{X+hV}-F\nk{X}}{h}       \\
   & =\lim_{h\to0^{+}}\frac{\nk{X+hV}^{+}-\nk{X}^{+}}{h} \\
   & =\begin{cases}
   V     & \text{on }\gk{X>0}  \\
   V^{+} & \text{on }\gk{X=0}  \\
   0     & \text{on }\gk{X<0}.
  \end{cases}
 \end{align*}
 Obviously, on \(\gk{X=0}\) the linearity is lost and therefore \(\Lambda^{\text{loss}}\) is not Gâteaux differentiable.
\end{ex}

An alternative way for systemic risk measures of type~\eqref{eq:syskromer} to define a systemic CAR in the spirit of Aumann-Shapley is given by
\begin{equation}\label{eq:sysasalternative}
 \bar{CS}^{AS}\nk{\bar{Y},\bar{X}}=\int_{0}^{1}\delta\rhosing\nk{\gamma\Lambda\nk{\bar{X}},\Lambda\nk{\bar{Y}}}d\gamma.
\end{equation}
In this situation we only need the single-firm risk measure \(\rhosing{}\) to be Gâteaux-differentiable at \(\gamma\Lambda\nk{\bar{X}}\) for all \(\gamma\in\zeroone{}\). Again, if \(\rhosing{}\) has a derivative \(\nabla\rhosing\nk{\gamma\Lambda\nk{\bar{X}}}\) we can write~\eqref{eq:sysasalternative} as
\begin{equation}\label{eq:sysasalternativederivative}
 \bar{CS}^{AS}\nk{\bar{Y},\bar{X}}=\int_{0}^{1}\pairing{\nabla\rhosing\nk{\gamma\Lambda\nk{\bar{X}}}}{\Lambda\nk{\bar{Y}}}d\gamma.
\end{equation}
The Full allocation property~\ref{sysfullallocation}, clearly, is only satisfied if the AR \(\Lambda{}\) is additive at \(\bar{X}\).

Let us continue by presenting some examples.

\begin{ex}\label{ex:aumannshapleykromerandbiagini}
%  Let us consider the systemic risk measure \(\brisk^{ex}\) presented in Example~\ref{ex:exponentialcase} for one group (h=1). We have
%  \begin{align*}
%   \delta\brisk^{ex}\nk{\gamma\bar{X},\bar{Y}}
%    & =\pairing{\bQ^{\gamma\bar{X}}}{\sum_{i=1}^{n}\bar{Y}_{i}}                                                                                                          \\
%    & =\bE\ek{\sum_{i=1}^{n}\bar{Y}_{i}\frac{\exp\nk{\frac{1}{\beta}\gamma\sum_{i=1}^{n}\bar{X}_{i}}}{\bE\ek{\exp\nk{\frac{1}{\beta}\gamma\sum_{i=1}^{n}\bar{X}_{i}}}}}.
%  \end{align*}
%  Since \(\brisk^{ex}\nk{0}=0\) we are able to apply~\eqref{eq:sysas}. We obtain
%  \begin{equation*}
%   CS^{AS}_{ex}\nk{\bar{Y},\bar{X}}=\int_{0}^{1}\bE\ek{\sum_{i=1}^{n}\bar{Y}_{i}\frac{\exp\nk{\frac{1}{\beta}\gamma\sum_{i=1}^{n}\bar{X}_{i}}}{\bE\ek{\exp\nk{\frac{1}{\beta}\gamma\sum_{i=1}^{n}\bar{X}_{i}}}}}d\gamma.
%  \end{equation*}
%  Note that the full allocation property is satisfied.
  Consider the systemic risk measure \(\rho^{\text{ses}}\) defined in~\eqref{eq:sysentropicsum}. Since both, \(\rhosing^{entr}\) and \(\Lambda^{\text{sum}}\) are Gâteaux differentiable with derivative and \(\rho^{\text{ses}}\nk{0}=0\) we are able to apply~\eqref{eq:sysas}. It yields
  \begin{equation*}
    CS^{AS}_{ses}\nk{\bar{Y},\bar{X}}=\int_{0}^{1}\bE\ek{\sum_{i=1}^{n}Y_{i}\frac{\exp\nk{\theta\gamma\sum_{i=1}^{n}X_{i}}}{\bE\ek{\exp\nk{\theta\gamma\sum_{i=1}^{n}X_{i}}}}}d\gamma.
  \end{equation*}
  Note, that in this situation the full allocation property~\ref{sysfullallocation} is satisfied.

  Now consider the systemic risk measure \(\brisk^{ex}\) presented in Example~\ref{ex:exponentialcase} for one group (h=1). As seen in~\ref{ex:exponentialcase} \(\brisk^{ex}\) is Gâteaux differentiable with derivative and,
%   \begin{align*}
%     \delta\brisk^{ex}\nk{\gamma\bar{X},\bar{Y}}
%     &=\pairing{\bQ^{\gamma\bar{X}}}{\sum_{i=1}^{n}\bar{Y}_{i}}\\
%     &=\bE\ek{\sum_{i=1}^{n}\bar{Y}_{i}\frac{\exp\nk{\frac{1}{\beta}\gamma\sum_{i=1}^{n}\bar{X}_{i}}}{\bE\ek{\exp\nk{\frac{1}{\beta}\gamma\sum_{i=1}^{n}\bar{X}_{i}}}}}.
%  \end{align*}
 since \(\brisk^{ex}\nk{0}=0\), we are able to apply~\eqref{eq:sysas}. We obtain
 \begin{equation*}
  CS^{AS}_{ex}\nk{\bar{Y},\bar{X}}=\int_{0}^{1}\bE\ek{\sum_{i=1}^{n}Y_{i}\frac{\exp\nk{\theta\gamma\sum_{i=1}^{n}X_{i}}}{\bE\ek{\exp\nk{\theta\gamma\sum_{i=1}^{n}X_{i}}}}}d\gamma.
 \end{equation*}
 Again, the full allocation property~\ref{sysfullallocation} is satisfied. Obviously, 
 \[
  CS^{AS}_{ses}\nk{\bar{Y},\bar{X}}=CS^{AS}_{ex}\nk{\bar{Y},\bar{X}}
 \]
 for all \(\bar{Y}\) and therefore
 \[
  \rho^{\text{ses}}\nk{\bar{X}}=CS^{AS}_{ses}\nk{\bar{X},\bar{X}}=CS^{AS}_{ex}\nk{\bar{X},\bar{X}}=\brisk^{ex}\nk{\bar{X}}.
 \]
 So if the full allocation property~\ref{sysfullallocation} is satisfied, the Aumann-Shapley CAR yields an alternative approach to identify systemic risk measures of type~\ref{eq:sysbiaginiran} with systemic risk measures of type~\ref{eq:syskromer}.
%  In this case, \(CS^{AS}_{ses}\) also fulfills the full allocation property~\ref{sysfullallocation}. If we now choose \(\alpha=\frac{1}{\beta}\), we obtain
%  \begin{equation}
%   \rho^{ses}\nk{\bar{X}}=CS^{AS}_{ses}\nk{\bar{X},\bar{X}}=CS^{AS}_{ex}\nk{\bar{X},\bar{X}}=\brisk^{ex}\nk{\bar{X}}.
%  \end{equation}
%  If we consider \(h=m\) groups we obtain
%  \begin{equation*}
%   CS^{AS}_{ex}\nk{\bar{Y},\bar{X}}=\sum_{j=1}^{m}\int_{0}^{1}\bE\ek{\sum_{i\in{I}_{j}}\bar{Y}_{i}\frac{\exp\nk{\frac{1}{\beta_{j}}\gamma\sum_{i\in{I}_{j}}\bar{X}_{i}}}{\bE\ek{\exp\nk{\frac{1}{\beta_{j}}\gamma\sum_{i\in{I}_{j}}\bar{X}_{i}}}}}d\gamma.
%  \end{equation*}
%  Now let \(\bar{X}_{I_{j}}=\nk{X_{\tilde{n}_{j-1}+1},\ldots,X_{\tilde{n}_{j}}}\) and consider the systemic risk measure \(P\nk{\bar{X}}=\sum_{j=1}^{m}\rho^{ses}_{j}\nk{\bar{X}_{I_{j}}}\),
%  where \(\rho^{ses}_{j}\) is defined on \(\nk{\Lp}^{\abs{I_{j}}}\) with risk aversion coefficient \(\alpha_{j}>0\). The Aumann-Shapley allocation becomes
%  \begin{equation*}
%   CS^{AS}_{sumses}\nk{\bar{Y},\bar{X}}=\sum_{j=1}^{m}\int_{0}^{1}\bE\ek{\sum_{i\in{I}_{j}}\bar{Y}_{i}\frac{\exp\nk{\alpha_{j}\gamma\sum_{i\in{I}_{j}}\bar{X}_{i}}}{\bE\ek{\exp\nk{\alpha_{j}\gamma\sum_{i\in{I}_{j}}\bar{X}_{i}}}}}d\gamma.
%  \end{equation*}
%  Similar to the case with one group, if we choose \(\alpha_{j}=\frac{1}{\beta_{j}}\) we obtain
%  \begin{equation}
%   P\nk{\bar{X}}=CS^{AS}_{sumses}\nk{\bar{X},\bar{X}}=CS^{AS}_{ex}\nk{\bar{X},\bar{X}}=\brisk^{ex}\nk{\bar{X}}.
%  \end{equation}
\end{ex}

\begin{ex}
 Let us consider the systemic risk measure \(\rho^{\text{sel}}\) defined in~\eqref{eq:sysentropicloss}. As shown in Example~\ref{ex:gateauxderivativearloss} the AR \(\Lambda^{\text{loss}}\nk{\bar{X}}\) is not Gâteaux differentiable at 0. Therefore, a computation of a systemic CAR via~\eqref{eq:sysaschainrule} fails. But we are still able to compute a systemic CAR via~\eqref{eq:sysasalternativederivative} and, we obtain
 \begin{align*}
  \bar{CS}^{AS}\nk{\bar{Y},\bar{X}}=\int_0^1\bE\ek{\Lambda^{\text{loss}}\nk{\bar{Y}}\frac{\exp\nk{\theta\gamma\Lambda^{\text{loss}}\nk{\bar{X}}}}{\bE\ek{\exp\nk{\theta\gamma\Lambda^{\text{loss}}\nk{\bar{X}}}}}}d\gamma.
 \end{align*}
\end{ex}

\begin{ex}
 Consider the systemic risk measure defined in Example~\ref{ex:biaginiinkromersetting}. Since both, \(\Lambda^{exut}\) and \(\rhosing{}\), are Gâteaux differentiable we are able to compute the systemic capital allocation via~\eqref{eq:sysaschainrule} if we can guarantee that \(\rho\nk{0}=0\). This condition is clearly satisfied if we set \(B=\sum_{i=1}^{n}\frac{1}{\alpha_{i}}\) and in this case we obtain
 \begin{align*}
  CS^{AS}\nk{\bar{Y},\bar{X}}
  &=\sum_{i=1}^{n}\int_{0}^{1}\bE\ek{Y_i\exp\nk{\gamma\alpha_{i}X_i}}d\gamma.\\
  &=\sum_{i=1}^{n}\bE\ek{\frac{Y_i}{\alpha_{i}X_{i}}\nk{\exp\nk{\gamma\alpha_{i}X_i}-1}}
 \end{align*}
Obviously, the full allocation property~\ref{sysfullallocation} is satisfied. The risk allocation to institution \(i\in\gk{\oneton}\) is given by
\begin{align*}
 CS^{AS}\nk{e_{i}X_{i},\bar{X}}
 &=\bE\ek{\frac{1}{\alpha_{i}}\nk{\exp\nk{\gamma\alpha_{i}X_{i}}-1}}\\
 &=\rho\nk{e_{i}X_{i}}+\sum_{j\not=i}\frac{1}{\alpha_{j}}.
\end{align*}
 An alternative way to derive a systemic capital allocation for this specific systemic risk measure is given by~\eqref{eq:sysasalternativederivative}. Again, we set \(B=\sum_{i=1}^{n}\frac{1}{\alpha_{i}}\) and obtain
 \begin{align*}
  \bar{CS}^{AS}\nk{\bar{Y},\bar{X}}
  &=\int_{0}^{1}\bE\ek{\sum_{i=1}^{n}\frac{1}{\alpha_{i}}\exp\nk{\alpha_{i}Y_{i}}}d\gamma{}\\
  &=\bE\ek{\sum_{i=1}^{n}\frac{1}{\alpha_{i}}\exp\nk{\alpha_{i}Y_{i}}}\\
  &=\rho\nk{\bar{Y}}+B.
 \end{align*}
 This systemic capital allocation does not fulfill the full allocation property~\ref{sysfullallocation}.
\end{ex}

%********************************************************************

\appendix

\section{Appendix}

\subsection{Orlicz Spaces}\label{appendix:orliczspaces}

A special class of function spaces which include the \(\Lp\probspace{}\)- spaces are the Orlicz spaces. In a risk measurement framework these spaces appear to be the right choice if the risk measure is connected with a loss function (see Example~\ref{ex:sysentropic}). For this purpose, let \(l\colon\bR\to\bR{}\) be a convex and increasing function with \(\lim_{x\to\infty}\frac{l\nk{x}}{x}=\infty{}\).
Then the function \(\phi\colon\bR\to\interval[open right]{0}{\infty}\), given by \(\phi\nk{x}:=l\nk{\abs{x}}-l\nk{0}\) is finite valued, even and convex on \(\bR{}\) with \(\phi\nk{0}=0\) and \(\lim_{x\to\infty}\frac{\phi\nk{x}}{x}=\infty{}\), i.e., \(\phi{}\) is a strict Young function.
The Orlicz space \(\Lphi{}\) and the Orlicz heart \(\Mphi{}\) are defined by
\begin{align*}
 \Lphi & \colon=\gk{X\in\Lzero\mid\bE\ek{\phi\nk{\alpha{X}}}<\infty\text{ for some }\alpha>0}, \\
 \Mphi & \colon=\gk{X\in\Lzero\mid\bE\ek{\phi\nk{\alpha{X}}}<\infty\text{ for all }\alpha>0}.
\end{align*}
If these spaces are endowed with the Luxemburg norm, they are Banach lattices and thus fit in our general framework (see~\cite{rubshtein2016foundations} Theorem 13.2.2). The (topological) dual space of \(\Mphi{}\) is the Orlicz space \(L^{\phi^{*}}\), where \(\phi^{*}\) is the convex conjugate of \(\phi{}\) defined by
\begin{equation*}
 \phi^{*}\nk{y}:=\sup_{x\in\bR}\gk{xy-\phi\nk{x}},\quad{y}\in\bR,
\end{equation*}
and \(\phi^{*}\) itself is also a strict Young function.
Remark, that \(\Linf{}\subseteq\Mphi\subseteq\Lphi\subseteq\Lone{}\). Additionally, Fenchels inequality yields for \(\frac{d\bQ}{d\bP}\in{L}^{\phi^{*}}\), that \(\Lphi\subseteq\Lone\nk{\Omega,\cF,\bQ}\). Now for two Young functions \(\phi_{1},\phi_{2}\) we say that \(\phi_{1}\) majorizes \(\phi_{2}\) (\(\phi_{1}\succ\phi{2}\)) if 
\[
  \phi_{2}\nk{x}\leq{b}\phi_{1}\nk{ax}\qquad\forall{x}\geq0
\]
for some \(b>0\) and \(a>0\). In this case we have \(M^{\phi_{1}}\subseteq{M}^{\phi_{2}}\) (see~\cite{rubshtein2016foundations} Theorem 16.2.1.). For \(l_{1},\ldots,l_{n}\colon\bR\to\bR{}\) with associated Young functions \(\phi_{1},\ldots,\phi_{n}\), we simply write
\begin{equation*}
 \MPhi:=M^{\phi_{1}}\times\cdots\times{M}^{\phi_{n}}\quad\text{ and }\quad \LPhi:=L^{\phi_{1}}\times\cdots\times{L}^{\phi_{n}}.
\end{equation*}

\subsection{Gâteaux Differentials}\label{appendix:gateauxdifferentials}
We will need a generalization of the concept of directional derivatives. In the following \(\cU,\cV{}\) and \(\cW{}\) are arbitrary locally convex topological vector spaces.

\begin{defi}
 Let \(F:\cU\to\cV{}\), \(\cO\subset\cU{}\), \(X\in\text{int}\nk{\cO}\) and \(U\in\cU{}\). Define
 \[
  \delta_{+}F\nk{X,U}=\lim_{h\to0^{+}}\frac{F\nk{X+hU}-F\nk{X}}{h}.
 \]
 If the limit exists we call \(\delta_{+}F\nk{X,U}\) the right directional derivative of \(F\) at \(X\) in direction \(U\).
\end{defi}
\begin{rem}
 In the same setup, we are able to define
 \[
  \delta_{-}F\nk{X,U}=\lim_{h\to0^{-}}\frac{F\nk{X+hU}-F\nk{X}}{h}.
 \]
 If the limit exists we call \(\delta_{-}F\nk{X,U}\) the left directional derivative of \(F\) at \(X\) in direction \(U\). However, the left directional derivative can be expressed in terms of the right directional derivative, i.e.
 \[
  \delta_{-}F\nk{X,U}=-\delta_{+}F\nk{X,-U}.
 \]
 Therefore, we call \(\delta_{+}F\nk{X,U}\) the directional derivative and omit the term right.
\end{rem}
If both, \(\delta_{+}F\nk{X,U}\) and \(\delta_{-}F\nk{X,U}\), exist and coincide for all \(U\in\cU{}\) we call the mapping \(F\) \textit{Gâteaux differentiable}. The following definition recaptures this idea.

\begin{defi}\label{def:gateauxdifferential}
  Let \(F:\cU\to\cV{}\), \(\cO\subset\cU{}\), \(X\in\text{int}\nk{\cO}\) and \(U\in\cU{}\). Define
 \[
  \delta{F}\nk{X,U}=\lim_{h\to0}\frac{F\nk{X+hU}-F\nk{X}}{h}.
 \]
 If the limit exists for all \(U\in\cU{}\) we call the mapping \(\delta{F}\nk{X,\cdot}:\cU\to\cV{}\) Gâteaux differential\footnote{Sometimes it is called \emph{weak differential}.} of \(F\) at \(X\) and say that \(F\) is Gâteaux differentiable at \(X\). If \(F:\cU\to\bR{}\) is Gâteaux differentiable at \(X\) and there exists a \(\xi\in\cU^{\prime}\) such that
 \(\delta{F}\nk{X,\cdot}=\pairing{\cdot}{\xi}\) we call \(F\) Gâteaux differentiable with (Gâteaux) derivative \(\nabla{F}\nk{X}:=\xi{}\) at \(X\).
\end{defi}

The following Proposition collects some rules for Gâteaux differentials.
\begin{prop}\label{prop:rulesforgateaux}
 Let \(F,G:\cU\to\cV{}\), \(H:\cW\to\cU{}\), \(U\in\cU{}\) and \(W\in\cW{}\).
 \begin{compactenum}[(i)]
  \item Suppose that \(F\) and \(G\) are Gâteaux differentiable at \(X\in\cU{}\). Then
  \[
   \delta\nk{F\pm{G}}\nk{X,\cdot}=\delta{F}\nk{X,\cdot}\pm\delta{G}\nk{X,\cdot}.
  \]
  \item Suppose that \(F\) and \(G\) are Gâteaux differentiable at \(X\in\cU{}\). Then
  \[
   \delta\nk{FG}\nk{X,\cdot}=\delta{F}\nk{X,\cdot}G\nk{X}+F\nk{X}\delta{G}\nk{X,\cdot},
  \]
  where \(FG\) describes the element wise product of \(F\) and \(G\).
  \item Suppose that \(H\) is Gâteaux differentiable at \(Y\in\cW{}\) and \(G\) is Gâteaux differentiable at \(H\nk{Y}\). Then
  \[
   \delta\nk{G\circ{H}}\nk{Y,\cdot}=\delta{G}\nk{H\nk{Y},\delta{H}\nk{Y,\cdot}}
  \]
 \end{compactenum}
\end{prop}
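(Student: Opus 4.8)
The plan is to establish the three rules one after another, each reducing to an elementary manipulation of difference quotients followed by a continuity argument; the only genuinely delicate point is the chain rule (iii).

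For (i), I would fix $X$ and $U$ and write
\[
  \frac{\nk{F\pm G}\nk{X+hU}-\nk{F\pm G}\nk{X}}{h}
  =\frac{F\nk{X+hU}-F\nk{X}}{h}\pm\frac{G\nk{X+hU}-G\nk{X}}{h}.
\]
By hypothesis each quotient on the right has a limit as $h\to0$, hence so does the left-hand side, and it equals $\delta F\nk{X,U}\pm\delta G\nk{X,U}$. Since $U$ was arbitrary, this gives the asserted identity of maps; there is no obstacle here. For (ii), I would use the usual add-and-subtract decomposition
\[
  \frac{\nk{FG}\nk{X+hU}-\nk{FG}\nk{X}}{h}
  =F\nk{X+hU}\cdot\frac{G\nk{X+hU}-G\nk{X}}{h}
  +\frac{F\nk{X+hU}-F\nk{X}}{h}\cdot G\nk{X}.
\]
A preliminary observation is that the existence of $\delta F\nk{X,U}$ forces $F\nk{X+hU}-F\nk{X}=h\bigl(\delta F\nk{X,U}+o\nk{1}\bigr)\to0$, i.e.\ $F$ is continuous along the line $h\mapsto X+hU$, and likewise for $G$. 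Letting $h\to0$ and invoking the (separate) continuity of the elementwise multiplication on $\cV$ then yields $F\nk{X}\,\delta G\nk{X,U}+\delta F\nk{X,U}\,G\nk{X}$. The only point needing attention is this continuity of multiplication, which holds on the function spaces occurring in the paper (on $L^{0}$ it is continuity in probability; on the Orlicz hearts one argues on the relevant products directly).

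For (iii), I would set $V:=\delta H\nk{Y,W}$ and write $H\nk{Y+hW}-H\nk{Y}=h\bigl(V+\varepsilon\nk{h}\bigr)$ with $\varepsilon\nk{h}\to0$ as $h\to0$, so that
\[
  \frac{\nk{G\circ H}\nk{Y+hW}-\nk{G\circ H}\nk{Y}}{h}
  =\frac{G\nk{H\nk{Y}+hV}-G\nk{H\nk{Y}}}{h}
   +\frac{G\nk{H\nk{Y}+hV+h\varepsilon\nk{h}}-G\nk{H\nk{Y}+hV}}{h}.
\]
The first summand converges to $\delta G\nk{H\nk{Y},V}=\delta G\nk{H\nk{Y},\delta H\nk{Y,W}}$ by Gâteaux differentiability of $G$ at $H\nk{Y}$, and letting $W$ vary then gives the stated identity of maps. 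The main obstacle is the second summand: because the perturbation direction $\varepsilon\nk{h}$ moves with $h$, plain pointwise-in-direction Gâteaux differentiability of $G$ does not by itself guarantee that it vanishes, so the chain rule is delicate under such minimal hypotheses. To close this gap I would either (a) strengthen the hypothesis on $G$ at $H\nk{Y}$ to a Hadamard-type differentiability — for instance $G$ locally Lipschitz and Gâteaux differentiable, or continuously Fréchet differentiable — under which the difference quotients are equicontinuous near $V$ and the second summand tends to $0$; or (b) verify this step directly in each application, which is unproblematic in all the uses in the paper (e.g.\ in Example~\ref{ex:dualrepsysentropic}), where $G$ is a smooth entropic-type functional. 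In the subcase where $G$ admits a continuous linear Gâteaux derivative $\nabla G\nk{H\nk{Y}}$, the extra regularity is precisely that $\delta G\nk{H\nk{Y},\cdot}$ is a continuous linear map, and the argument then reduces to linearity of this map together with continuity applied to $\varepsilon\nk{h}\to0$.
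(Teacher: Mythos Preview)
Your argument is correct and runs along the same lines as the paper's: part (i) is immediate, and for (ii)--(iii) both proofs rest on the first-order increment $F\nk{X+hU}=F\nk{X}+h\,\delta F\nk{X,U}+o\nk{h}$. The cosmetic differences are that for (ii) the paper multiplies out the two expansions rather than using your add-and-subtract splitting, and for (iii) the paper writes directly
\[
  G\bigl(H\nk{Y}+h\nk{\delta H\nk{Y,W}+h^{-1}o\nk{h}}\bigr)
  =G\nk{H\nk{Y}}+h\,\delta G\bigl(H\nk{Y},\delta H\nk{Y,W}+h^{-1}o\nk{h}\bigr)+o\nk{h}
\]
and passes to the limit, whereas you isolate the troublesome piece as a second summand. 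Your discussion of that summand is in fact more scrupulous than the paper's own proof: the paper silently uses both continuity of $U\mapsto\delta G\nk{H\nk{Y},U}$ and uniformity of the $o\nk{h}$ remainder in the (moving) direction, exactly the issue you flag and propose to resolve via a Hadamard-type hypothesis or by direct verification in the applications. Either route is acceptable here, and in the concrete uses (the entropic functional in Example~\ref{ex:dualrepsysentropic}) the extra regularity is plainly available.
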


\begin{proof}
 Part (i) follows directly from Definition~\ref{def:gateauxdifferential}. To prove part (ii) and (iii), first notice that 
 \[
   F\nk{X+hU}=F\nk{X}+h\delta{F}\nk{X,U}+o\nk{h},
 \]
 where \(o\nk{h}\) describes some \(q\) with 
 \[
   \lim_{h\to0}\frac{q}{h}=0.
 \]
 Now, for all \(U\in\cU{}\) we obtain 
 \begin{align*}
  \delta\nk{FG}\nk{X,U}
  &=\lim_{h\to0}\frac{F\nk{X+hU}G\nk{X+hU}-F\nk{X}G\nk{X}}{h}\\
  &=\lim_{h\to0}\frac{\nk{F\nk{X}+h\delta{F}\nk{X,U}+o\nk{h}}\nk{G\nk{X}+h\delta{G}\nk{X,U}+o\nk{h}}-F\nk{X}G\nk{X}}{h}\\
  &=\delta{F}\nk{X,U}G\nk{X}+F\nk{X}\delta{G}\nk{X,U}\\
  &\ \ \ \ +\lim_{h\to0}\frac{F\nk{X}o\nk{h}+G\nk{X}o\nk{h}+o\nk{h}^{2}}{h}\\
  &\ \ \ \ +\lim_{h\to0}\nk{h\delta{F}\nk{X,U}\delta{G}\nk{X,U}+\delta{F}\nk{X,U}o\nk{h}+\delta{G}\nk{X,U}o\nk{h}}\\
  &=\delta{F}\nk{X,U}G\nk{X}+F\nk{X}\delta{G}\nk{X,U},
 \end{align*}
 which proves part (ii). For part (iii) we have for all \(W\in\cW{}\) 
 \begin{align*}
  \delta\nk{G\circ{H}}\nk{Y,W}
  &=\lim_{h\to0}\frac{G\nk{H\nk{Y+hW}}-G\nk{H\nk{Y}}}{h}\\
  &=\lim_{h\to0}\frac{G\nk{H\nk{Y}+h\delta{H}\nk{Y,W}+o\nk{h}}-G\nk{H\nk{Y}}}{h}\\
  &=\lim_{h\to0}\frac{G\nk{H\nk{Y}+h\nk{\delta{H}\nk{Y,W}+h^{-1}o\nk{h}}}-G\nk{H\nk{Y}}}{h}\\
  &=\lim_{h\to0}\frac{G\nk{H\nk{Y}}+h\delta{G}\nk{H\nk{Y},\delta{H}\nk{Y,W}+h^{-1}o\nk{h}}+o\nk{h}-G\nk{H\nk{Y}}}{h}\\
  &=\delta{G}\nk{H\nk{Y},\delta{H}\nk{Y,W}}.
 \end{align*}
\end{proof}

\subsection{Subgradients}\label{appendix:subgradients}
In cases where we can not assume Gâteaux differentiability we have to work with the more general notion of \textit{subgradients}. By \(L\nk{\cU,\cV}\) we denote the real vector space of linear mappings \(T:\cU\to\cV{}\).
\begin{defi}
 Let \(F:\cU\to\cV{}\). The strong vector subdifferential of at \(X\in\cU{}\) is given by
 \begin{equation}\label{strongvectorsubdiff}
  \partial^{s}F\nk{X}:=\gk{T\in{L}\nk{\cU,\cV}\mid{T}\nk{U-X}\leq{F}\nk{U}-F\nk{X},\ \forall{U}\in\cU{}}.
 \end{equation}
\end{defi}

\begin{rem}
 For convex functionals \(F:\cU\to\bR\cup\gk{\infty}\) (which are not necessarily Gâteaux differentiable) the subdifferential at \(X\in\text{dom}\nk{F}=\gk{X\in\cU\mid F\nk{X}<+\infty}\) reduces to
 \begin{align*}
  \partial F\nk{X} 
  & =\gk{\xi\in\cU^{\prime}\mid\pairing{U-X}{\xi}\leq{F}\nk{U}-F\nk{X}\quad\forall{U}\in\cU} \\
  & =\gk{\xi\in\cU^{\prime}\mid\pairing{U}{\xi}+F\nk{X}\leq{F}\nk{X+U}\quad\forall{U}\in\cU} \\
 \end{align*}

 We say that \(F\) is sub-differentiable at \(X\in\text{dom}\nk{F}\) if \(\partial{F}\nk{X}\) is nonempty and call the elements of \(\partial{F}\nk{X}\) subgradients (at \(X\)). In~\cite{zalinescu2002convex} it is shown, that \(\partial{F}\nk{X}\) is a singleton if and only if \(F\) is Gâteaux differentiable at \(X\).
\end{rem}

\bibliography{lit}
\end{document}